\long\def\ca#1\cb{} 
\newcommand{\braket}[2]{\langle #1 \hspace{1pt} | \hspace{1pt} #2 \rangle}
\newcommand{\ketbra}[2]{| \hspace{1pt} #1 \rangle \langle #2 \hspace{1pt} |}
\newcommand{\ketbraq}[1]{\ketbra{#1}{#1}}
\newcommand{\bramatket}[3]{\langle #1 \hspace{1pt} | #2 | \hspace{1pt} #3 \rangle}
\newcommand{\ket}[1]{|#1\rangle}               
\newcommand{\bra}[1]{\langle #1|}              
\newcommand{\poly}{\operatorname{poly}}
\newcommand{\Zbb}{\mathbb{Z}}
\newcommand{\Cbb}{\mathbb{C}}
\newcommand{\Ubb}{\mathbb{U}}
\newcommand{\Rbb}{\mathbb{R}}
\newcommand{\AC}{\mathcal{A}}
\newcommand{\GC}{\mathcal{G}}
\newcommand{\HC}{\mathcal{H}}
\newcommand{\OC}{\mathcal{O}}
\newcommand{\SC}{\mathcal{S}}
\newcommand{\UC}{\mathcal{U}}
\newcommand{\Tr}{{\rm Tr}}
\newcommand{\Var}{{\rm Var}}
\renewcommand{\geq}{\geqslant}
\renewcommand{\leq}{\leqslant}
\renewcommand{\vec}[1]{\boldsymbol{#1}}  
\newcommand{\ad}{^\dagger}
\newcommand{\ts}{^{\otimes 2}}
\newcommand*{\id}{\openone}
\renewcommand{\th}{\theta } 
\newcommand{\sg}{\sigma }
\newcommand{\thv}{\vec{\theta}}
\newtheorem{theorem}{Theorem}
\newtheorem{corollary}{Corollary}
\newtheorem{conjecture}{Conjecture}
\newtheorem{proposition}{Proposition}
\newtheorem{definition}{Definition}
\def\k{^{(k)}}
\def\liea{\mathfrak{k}}
\def\liea{\mathfrak{g}}
\def\lieg{\mathds{G}}
\def\H{\mathcal{H}}
\def\a{\alpha}
\def\b{\beta}
\def\P{\mathbb{P}}
\newcommand\mf[1]{\mathfrak{#1}}
\newcommand\spn{\text{span}}
\newcommand\kp{\ket{\psi}}
\newcommand\Z{\text{Z}}
\newcommand\TFIM{\text{TFIM}}
\newcommand\LTFIM{\text{LTFIM}}
\newcommand\tH{\tilde{O}}
\begin{document}

\title{Diagnosing barren plateaus with tools from quantum optimal control}

\author{Mart\'in Larocca}
\affiliation{Departamento de Física “J. J. Giambiagi” and IFIBA, FCEyN, Universidad de Buenos Aires, 1428 Buenos Aires, Argentina}
\affiliation{Theoretical Division, Los Alamos National Laboratory, Los Alamos, New Mexico 87545, USA}

\author{Piotr Czarnik} 
\affiliation{Theoretical Division, Los Alamos National Laboratory, Los Alamos, New Mexico 87545, USA}

\author{Kunal Sharma} 
\address{Hearne Institute for Theoretical Physics and Department of Physics and Astronomy, Louisiana State University, Baton Rouge, LA USA}
\affiliation{Theoretical Division, Los Alamos National Laboratory, Los Alamos, New Mexico 87545, USA}

\author{Gopikrishnan Muraleedharan}
\affiliation{Theoretical Division, Los Alamos National Laboratory, Los Alamos, New Mexico 87545, USA}

\author{Patrick J. Coles}
\affiliation{Theoretical Division, Los Alamos National Laboratory, Los Alamos, New Mexico 87545, USA}

\author{M. Cerezo}
\affiliation{Information Sciences, Los Alamos National Laboratory, Los Alamos, NM 87545, USA}
\affiliation{Center for Nonlinear Studies, Los Alamos National Laboratory, Los Alamos, New Mexico 87545, USA}

\begin{abstract}
Variational Quantum Algorithms (VQAs) have received considerable attention due to their potential for achieving near-term quantum advantage. However, more work is needed to understand their scalability. One known scaling result for VQAs is barren plateaus, where certain circumstances lead to exponentially vanishing gradients. It is common folklore that problem-inspired ansatzes avoid barren plateaus, but in fact, very little is known about their gradient scaling. In this work we employ tools from quantum optimal control to develop a framework that can diagnose the presence or absence of barren plateaus for problem-inspired ansatzes. Such ansatzes include the Quantum Alternating Operator Ansatz (QAOA), the Hamiltonian Variational Ansatz (HVA), and others. With our framework, we prove that avoiding barren plateaus for these ansatzes is not always guaranteed. Specifically, we show that the gradient scaling of the VQA depends on the degree of controllability of the system, and hence can be diagnosed through the dynamical Lie algebra $\mathfrak{g}$ obtained from the generators of the ansatz. We analyze the existence of barren plateaus in QAOA and HVA ansatzes, and we highlight the role of the input state, as different initial states can lead to the presence or absence of barren plateaus. Taken together, our results  provide a framework for trainability-aware ansatz design strategies that do not come at the cost of extra quantum resources. Moreover, we prove no-go results for obtaining ground states with variational ansatzes for controllable system such as spin glasses. Our work establishes a link between the existence of barren plateaus and the scaling of the dimension of $\mathfrak{g}$.  
\end{abstract}

\maketitle

\section{INTRODUCTION}

Quantum computers hold the promise to achieve computational speed-ups over classical supercomputers for certain tasks~\cite{shor1994algorithms,harrow2009quantum,berry2015simulating,georgescu2014quantum}. However, despite recent tremendous progress in quantum technologies, present-day quantum devices (known as Noisy Intermediate-Scale Quantum (NISQ) devices) are constrained by the limited number of qubits, connectivity, and by the presence of quantum noise~\cite{preskill2018quantum}. Hence, it becomes crucial to determine what are the capabilities and limitations of NISQ computers to achieving a quantum advantage. 

One of the most promising computational models for making use of  near-term quantum computers are  Variational Quantum Algorithms (VQAs) \cite{cerezo2020variationalreview}. Here, a task of interest is encoded into a parametrized cost function $C(\thv)$ that is efficiently computable on a noisy quantum computer. Part of the computational complexity is pushed onto classical computers by leveraging the power of classical optimizer that  train the parameters $\thv$ and minimize the cost.  VQAs have been proposed for tasks such as solving linear systems of equations~\cite{bravo2020variational,huang2019near,xu2019variational} or performing dynamical quantum simulations~\cite{mcardle2019variational,grimsley2019adaptive,cirstoiu2020variational,commeau2020variational,gibbs2021long,yao2020adaptive,endo2020variational,lau2021quantum}, as well as for many others relevant applications~\cite{peruzzo2014variational,farhi2014quantum,mcclean2016theory,khatri2019quantum,romero2017quantum,larose2019variational,arrasmith2019variational,cerezo2020variationalfidelity,li2017efficient,heya2019subspace,bharti2020quantum,cerezo2020variational,beckey2020variational}.

Despite the wide application range  of VQAs, their widespread use is still limited by several  challenges that can hinder their success.  For instance, it has been shown that the optimization task associated with minimizing the VQA cost function is in general an NP-hard non-convex optimization problem~\cite{bittel2021training}. Moreover, despite the typical difficulties encountered in classical non-convex optimization tasks, there are new challenges that arise when training the parameters of VQAs such as hardware noise, or the limited precision arising from a limited number of shots. These difficulties have then led to several quantum-aware optimizers being developed~\cite{mitarai2018quantum,schuld2019evaluating,kubler2020adaptive,mcardle2019variational,stokes2020quantum,arrasmith2020operator}.

In addition, certain VQAs have been shown to exhibit the so-called barren plateau phenomenon, where the cost function becomes untrainable due to gradients that vanish, on average, exponentially with the system size~\cite{mcclean2018barren,cerezo2021cost,wang2020noise,cerezo2020impact,sharma2020trainability,arrasmith2020effect,holmes2020barren,marrero2020entanglement,patti2020entanglement,pesah2020absence,holmes2021connecting,arrasmith2021equivalence}. Thus, barren plateaus have then been recognized as one of the main limitations to overcome in order to preserve the hope of achieving quantum advantage with VQAs. Recently, a great deal of effort has been put forward to developing methods that can mitigate the effect of barren plateaus~\cite{cerezo2020cost,uvarov2020barren,volkoff2021large,verdon2019learning,grant2019initialization,skolik2020layerwise,bilkis2021semi}, but ideally one would like to devise and employ VQA ansatzes which do not exhibit barren plateaus altogether.

For instance, it is known that one should avoid problem-agnostic ansatzes  such as deep hardware efficient ansatzes, as these can exhibit barren plateaus due to their high expressibility~\cite{mcclean2018barren,cerezo2021cost,holmes2021connecting}. Hence, so-called problem-inspired ansatzes have been speculated to be able to overcome barren plateaus by encoding information about the problem at hand in the ansatz. Here, the intuition is that problem-inspired ansatzes constrain the space explored during the optimization to a space that either contains the solution to the problem, or that at least contains a good approximation to the solution, while maintaining a low expressibility.  

In this work we employ tools from Quantum Optimal Control (QOC) to diagnose  the presence or absence of barren plateaus in certain families of  problem-inspired ansatzes with a periodic structure. QOC theory is a long standing theoretical framework developed to  provide tools for the manipulation of quantum dynamical processes.  As shown in Fig.~\ref{fig:VQA-QOC}, we here make use of the fact that periodic VQAs and QOC systems can be considered as different level formulations of a common variational problem~\cite{magann2021pulses}, as both aim at driving a quantum system with a classical optimization loop. Most importantly, this connection allows us to understand and forecast the presence or absence of barren plateaus in problem-inspired variational ansatzes like the Quantum Alternating Operator Ansatz (QAOA)~\cite{farhi2014quantum,hadfield2019quantum} and the Hamiltonian Variational Ansatz (HVA)~\cite{wecker2015progress,wiersema2020exploring}. We note that the procedure is perfectly suitable for other periodic ansatzes like the adaptive QAOA~\cite{hadfield2019quantum,zhu2020adaptive} and the quantum optimal control ansatz~\cite{choquette2020quantum}. Moreover, our results also extend to quantum neural network architectures used in the quantum machine learning literature~\cite{thanasilp2021subtleties}.   Our results indicate that problem-inspired ansatzes are not immune to barren plateaus, and hence that certain ansatz strategies in the literature need to be revised.

Our main results are organized into propositions and theorems that show  that one can diagnose the existence of barren plateaus by analyzing the controllability of the system, i.e., by studying the Dynamical Lie Algebra (DLA) of the system. The DLA is the subspace of operator space spanned by the nested commutators of the elements in the set of generators of the ansatz (e.g., see~\cite{dalessandro2010introduction} for an introduction to quantum control theory). In an effort to give a comprehensive picture, our results follow the different controllability scenarios shown in Fig.~\ref{fig:Algebra_classification}.

The manuscript is organized as follows. In Section~\ref{sec:VQA} we present the theoretical framework for VQAs, which includes a description of the type of ansatz considered, as well as a  basic review of concepts related to barren plateaus and ansatz expressibility. Then, in Section~\ref{sec:QOC} we introduce the framework of QOC, and recall how in QOC theory the DLA of the ansatz generators is used to study the controllability of the system. Section~\ref{sec:main-results} contains the main results of this work, while in Section~\ref{sec:results} we present our numerical simulations. Finally, in Section~\ref{sec:discussions} we present our discussions and conclusions.

\section{VARIATIONAL QUANTUM ALGORITHMS}\label{sec:VQA}

\begin{figure}[t]
\centering
\includegraphics[width=1\columnwidth]{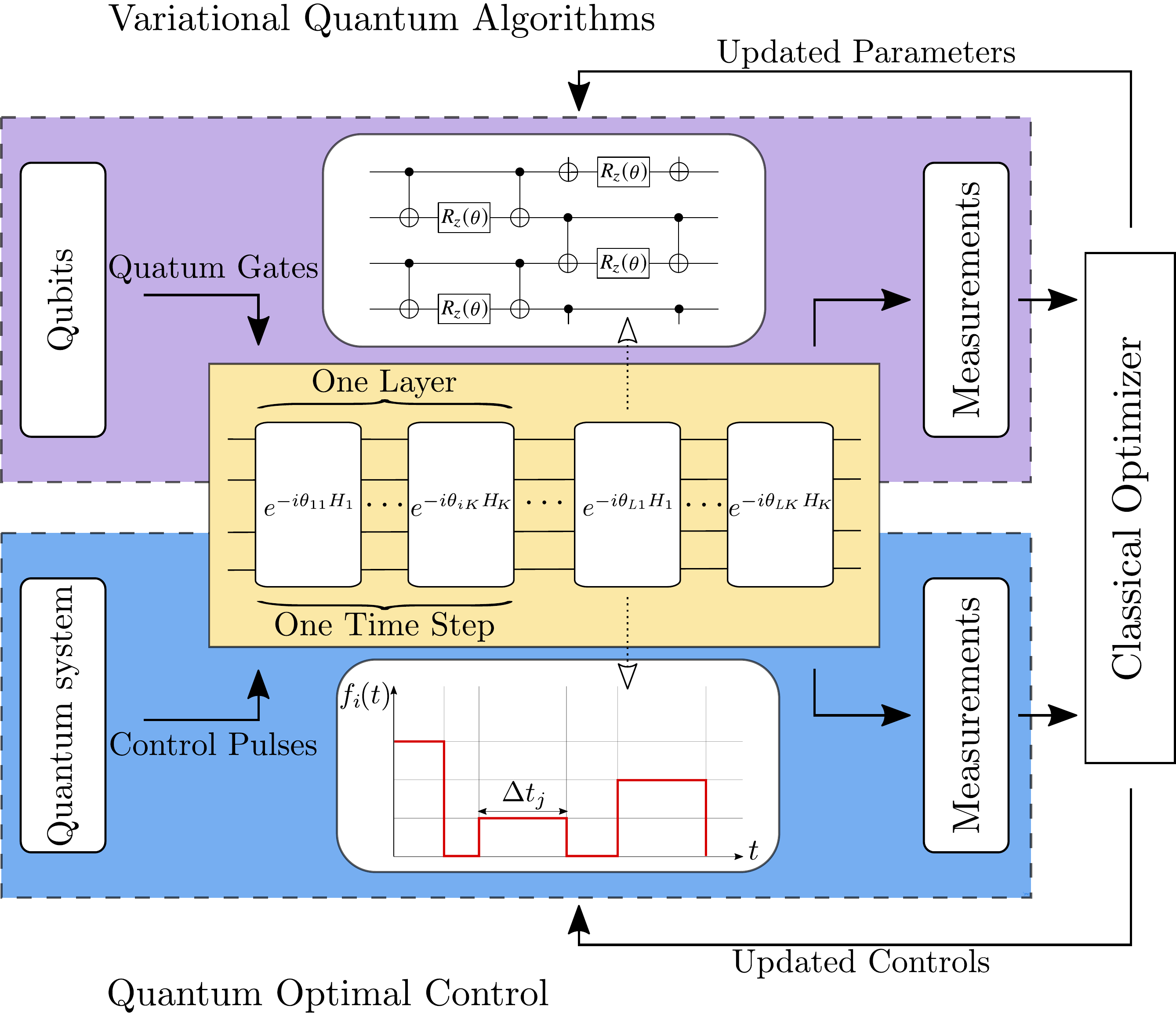}
\caption{\textbf{Framework for Variational Quantum Algorithms (VQA) and Quantum Optimal Control (QOC).} VQAs and QOCs can be regarded as two different levels of a theory that manipulate the evolution of a quantum system by training sets of parameters governing the system's dynamical evolution~\cite{magann2021pulses}. In VQAs (QOC) one applies a series of parametrized quantum gates (control pulses) to an input state. By gathering knowledge on the evolution via measurements on the resulting evolved state, the set of parameters (controls) are trained using a classical optimizer until a given task is completed. In this work we consider VQAs and QOC systems that have periodic structure ansatzes as in Eq.~\eqref{eq:ansatz}. }
\label{fig:VQA-QOC}
\end{figure}

In this section we review the basic framework of Variational Quantum Algorithms (VQAs). In particular, we discuss a general form for ansatzes that have a periodic structure, which we consider throughout this work. Since our goal is to analyze the gradient scaling, we additionally provide an overview of the barren plateau phenomenon.

\medskip 
\subsection{General framework}

We consider an optimization task where the goal is to minimize a cost function of the form
\begin{equation}\label{eq:cost}
    C(\thv)=\Tr[OU(\thv)\rho U\ad(\thv)]\,.
\end{equation}
Here,  $\rho$ is an input state on $n$ qubits in a $d$-dimensional Hilbert space with $d=2^n$,  $U(\thv)$ a  parametrized quantum circuit, and  $O$ is a Hermitian operator that defines the task at hand.

Throughout this work we consider layered parametrized quantum circuits  that, as shown in Fig.~\ref{fig:VQA-QOC},  have a periodic structure of the form 
\begin{equation}
    U(\thv) = \prod_{l=1}^L U_l(\thv_l),\quad U_l(\thv_l)= \prod_{k=0}^K e^{-i H_k \theta_{lk}}\,.
\label{eq:ansatz}
\end{equation}
Here, the index $l$ indicates the layer, $\thv_l=(\th_{l1},\ldots,\th_{lK})$ contains the parameters of such layer (such that $\thv=\{\thv_l\}_{l=1}^L$), and $H_k$ are Hermitian traceless operators that generate the unitaries in the ansatz. For generality, one can also allow for certain layers to be unparametrized, in which case one would simply set certain $\theta_{lk}$ to be constant. In what follows we refer to this type of ansatz as a Periodic Structure Ansatz (PSA). We refer the reader to Appendix~\ref{sec:app:ansatzes} for a detailed discussion of several ansatzes from the literature that are PSAs.

\subsection{Barren plateaus}

Recently, it has been shown that the choice of ansatz can hinder the trainability of the parameters $\thv$ for large problem sizes due to the existence of the so-called barren plateau phenomenon. In this context, deep unstructured problem-agnostic ansatz are known to exhibit barren plateaus~\cite{mcclean2018barren,cerezo2020cost,holmes2021connecting}. Hence, the design of ansatzes that overcome barren plateaus has been recognized as one of the most important challenges  to guarantee the success of VQAs~\cite{cerezo2020variationalreview}, and problem-inspired ansatzes have been proposed as one of the most promising strategies. However, despite their  promise,  little is known about the existence of barren plateaus in problem-inspired ansatzes.

Let us now briefly recall that when the cost exhibits a barren plateau, the gradients are exponentially suppressed (on average) across the optimization landscape. This implies that an exponentially large precision is needed to navigate trough the flat landscape and determine a cost minimizing direction~\cite{mcclean2018barren,cerezo2020cost,arrasmith2020effect}. Hence, consider the following definition.
\begin{definition}[Barren Plateau]\label{def:BP}
A cost function $C(\thv)$ as in Eq.~\eqref{eq:cost} is said to have a barren plateau when training $\theta_\mu\equiv\theta_{pq}\in\thv$,  if the cost function partial derivative $\partial C(\thv)/\partial \theta_{\mu}\equiv \partial_{\mu} C(\thv)$ is such that
\begin{equation}\label{eq:BP}
    \Var_{\thv}[\partial_{\mu} C(\thv)]\leq F(n)\,, \quad \text{with} \quad F(n)\in \OC\left(\frac{1}{b^n}\right)\,,
\end{equation}
for some $b>1$. Here the variance is taken with respect to the set of parameters $\thv$.
\end{definition}
We refer the reader to Appendix~\ref{sec:app:barren plateaus} for additional details on barren plateaus.

It is worth remarking that the barren plateau phenomenon  has been linked to the expressibility of the ansatz, as it has been shown that circuits with large expressibility will exhibit small gradients~\cite{holmes2021connecting}. In this context, one can quantify the expressibility of an ansatz by comparing the distribution of unitaries obtained from $U(\vec{\theta})$ to the maximally expressive uniform (Haar) distribution $U_{ H}$~\cite{sim2019expressibility}. Defining the $t$-th moment superoperator of the distribution generated by the ansatz $U(\thv)$,
\begin{equation}
    M^{(t)}_{U(\thv)}=\int_{\thv} d U(\thv) \, U(\thv)^{\otimes t} \otimes (U^*(\thv))^{\otimes t}\,,
\end{equation}
we recall that its ordinary action on a given operator can be obtained by placing said operator into the center of the representation of $M^{(t)}_{U(\thv)}$ as~\cite{caves1999quantum,rungta2001qudit}
\begin{equation}
    M^{(t)}_{U(\thv)}(\cdot)=\int_{\thv} d U(\thv) \, U(\thv)^{\otimes t}  (\cdot) (U(\thv)^\dagger)^{\otimes t}\,.
\end{equation}
In our case, we will only be interested in second moments. For that reason, we will focus on the deviation of the second moments of the distribution generated by the ansatz $M^{(2)}_{U(\thv)}$ from $M^{(2)}_{U_H}$ the second moments of the Haar distribution, via the norm of the superoperator 
\begin{align}~\label{eq:superop}
    \AC^{(t)}_{U(\thv)}=  M^{(t)}_{U_H}- M^{(t)}_{U(\thv)} \,.
\end{align}
For our purposes here, we find it convenient to define the expressibility as the  infinity norm, $\Vert A\Vert_{\infty}=\lambda_{\max}(A)$, with  $\lambda_{\max}(A)$ the largest singular value of $A$~\footnote{The expressibility can also be defined in terms of other matrix norms such as the diamond norm or the Schatten $p$-norms. However, due to the matrix norm equivalence, if $\Vert \AC^{(t)}_{U(\thv)}\Vert=\epsilon$ for our definition, there always exists an $\epsilon'$ for other expressibility definitions  such that the expressibility is equal to $\epsilon'$ and such that $\epsilon$ and $\epsilon'$ are related via a dimensionallity factor~\cite{hunter2019unitary,nakata2014generating}.}. Thus, the more expressible the ansatz, the smaller the norm $\Vert\AC^{(2)}_{U(\thv)}\Vert_{\infty}$, and the smaller the gradients of the cost partial derivatives~\cite{holmes2021connecting}. The limit  $\Vert\AC^{(2)}_{U(\thv)}\Vert_{\infty}=0$ is reached when $U(\thv)$ forms a $2$-design, in which case the cost exhibits a barren plateau according to Definition~\ref{def:BP}~\cite{mcclean2018barren,cerezo2020cost}.

\section{QUANTUM OPTIMAL CONTROL}\label{sec:QOC}

Quantum Optimal Control (QOC) is a theoretical framework that provides tools for the systematic manipulation of quantum dynamical systems. The connection between VQAs and QOC has been previously established showing that one can use QOC tools to specify the parameters $\thv$ at a device-level~\cite{yang2017optimizing,magann2021pulses,meitei2020gate} and to analyze VQA landscapes~\cite{lee2021towards}. Conversely, tools from VQAs have been employed to determine optimal control sequences ~\cite{li2017hybrid}. In particular, Ref.~\cite{magann2021pulses} notes that VQAs and QOC can be unified as  formulations of variational optimization at the circuit level and pulse level, respectively. In addition, the framework of QOC has been employed to analyze the computational universality of quantum circuits~\cite{ramakrishna1996relation,lloyd2018quantum,morales2020universality}, as well as their reachability~\cite{akshay2020reachability}.

In QOC one is interested in controlling the dynamical evolution of a quantum state $\kp$ in a complex $d$-dimensional Hilbert space $\H=\Cbb^d$ (where $d=2^n$)~\cite{dalessandro2010introduction}. In the typical setting, one has a Hamiltonian
\begin{equation}
    H(\{f_k(t)\}) = H_0+\sum_{k=1}^K f_k(t)H_k
\label{eq:QOC_Hamiltonian}
\end{equation}
that is tunable through some time-dependent functions $\{f_k(t)\}$, know as control fields or protocols. The fixed Hamiltonian $H_0$, usually called the drift, represents the natural or free evolution of the system, whereas the control Hamiltonians  $\{ H_i \}$ are associated with interactions with external degrees of freedom (usually electromagnetic radiation). Thus, $\kp$ evolves through the parametrized propagator $U(t)$ as $\kp(t)=\ket{\psi(t)}$. In turn, $U(t)$ is the solution to the Schr\"{o}dinger equation
\begin{equation}
    \frac{d U(t)}{dt}= -i H(\{f_k(t)\}) U(t),\quad \text{with} \quad U(0)=\id\,.
\label{eq:schrodinger}
\end{equation}
As shown in Appendix~\ref{app:qoc}, under standard assumptions, the Trotrerized QOC propagator of Eq.~\eqref{eq:schrodinger} is a PSA as in Eq.~\eqref{eq:ansatz}.

The variety of different dynamics a quantum control system in the form of Eq.~\eqref{eq:QOC_Hamiltonian} can undergo, upon variation of the control fields, is well understood though group theory. Since the Hamiltonian is Hermitian and traceless, $U(\thv)$ belongs to $\SC\UC(d)$, the Lie group of unitary $d\times d$ complex matrices that preserves the standard inner product on $\HC$. Surprisingly, the set of all unitaries $U(\thv)$ that can be accessed by such a control system forms itself a Lie group, known as the \textit{dynamical Lie group} $\lieg\subseteq \UC(d)$. Hence, a natural question which arises is: how can this group be determined?

First, let us define the set of generators.
\begin{definition}[Set of generators]\label{def:generators} Given a parametrized quantum circuit of the form in Eq.~\eqref{eq:ansatz} we define the set of generators $\GC=\{H_k\}_{k=0}^K$ as the set (of size $|\GC|=K+1$) of the (traceless) Hermitian operators that generate the unitaries in a single layer of $U(\thv)$.
\end{definition}
Naturally, the group $\lieg$ depends on the set of generators $\GC$, yet it is not sufficient to look at the individual elements of $\GC$. Instead, one must consider the Lie algebra that emerges from their nested commutators. Hence, consider the following definition \cite{zeier2011symmetry}.

\begin{definition}[Dynamical Lie Algebra]\label{def:dynamical_lie_algebra} Given a control system with  generators $\GC$ (see Definition~\ref{def:generators}), the Dynamical Lie Algebra (DLA) $\liea$ is the subalgebra of $\mf{s}\mf{u}(d)$ spanned by the repeated nested commutators of the elements in $\GC$, i.e.,
\begin{equation}
\liea={\rm span}\left\langle iH_0, \ldots, iH_K \right\rangle_{\rm Lie}\,\subseteq \mf{s}\mf{u}(d),
\end{equation}
where $\left\langle S\right\rangle_{\rm Lie}$ denotes the Lie closure, i.e., the set obtained by repeatedly taking the nested commutators between the elements in $S$.
\end{definition}
Here, $\mf{s}\mf{u}(d)$ is the special unitary algebra of degree $d$, the Lie algebra formed by the set of $d\times d$ skew-Hermitian, traceless matrices. In Appendix~\ref{app:alg}, we lay down the basic procedure to build DLAs (see Algorithm~\ref{alg:lie}) and provide some discussion on the complexity of such construction.

Once the DLA is obtained from the set of generators, one can determine the set of unitaries that are expressible by the control system. Specifically, one can now properly define the dynamical Lie group as follows.
\begin{definition}[dynamical Lie group]\label{def:dynamical_lie_group}
The set unitaries $\lieg$ that can be generated by a control system is determined by its DLA (see Definition~\ref{def:dynamical_lie_algebra})  through~\footnote{This is grounded in the fact that every possible DLA is a subalgebra of the special unitary algebra $\mf{su}(d)$ and therefore compact. The exponential map is a function $e:\liea\rightarrow \lieg$. If $\liea$ is compact then the exponential map is surjective: every element of $\lieg$ is the image of at least one element of $\liea$.}

\begin{equation}
    \lieg = e^{\liea}:=\{e^{V},\ \ V\in\liea\}\,.
\end{equation}
\end{definition}

The dynamical Lie group in turn determines the set of states $\ket{\psi(\thv)}=U(\thv)\kp$ that can be reached by evolving an initial state $\kp$. Specifically, here $U(\thv)$ can attain values in the Lie group  $\lieg$. In addition, Definition~\ref{def:dynamical_lie_group} crucially shows that one can study the expressibility of a control system (i.e., the unitaries that can be generated, or the set of states that can be reached) via the DLA obtained from the set of generators. As shown in Fig.~\ref{fig:Algebra_classification}, when computing $\liea$ there are several cases of interest that can arise and which we here consider. For the sake of clarity, in what follows we briefly recall several key concepts that will be useful throughout the manuscript. We refer the reader to~\cite{dalessandro2010introduction} for additional details.

First and foremost, we recall the concept of \textit{controllability}. A control system is said to be \textit{controllable}  if  its DLA is full rank, i.e., $\liea=\mf{su}(d)$. This implies that $\lieg=\SC\UC(d)$ and hence every unitary (up to a phase) can be obtained by appropriately choosing control parameters in Eq.~\eqref{eq:QOC_propagator}. In particular, this means that for any two states $\ket{\psi}$ and $\ket{\phi}$, there always exists a unitary $U(\thv)\in\lieg$ such that $U(\thv)\ket{\psi}=\ket{\phi}$.

\begin{figure}[t]
\centering
\includegraphics[width=1\columnwidth]{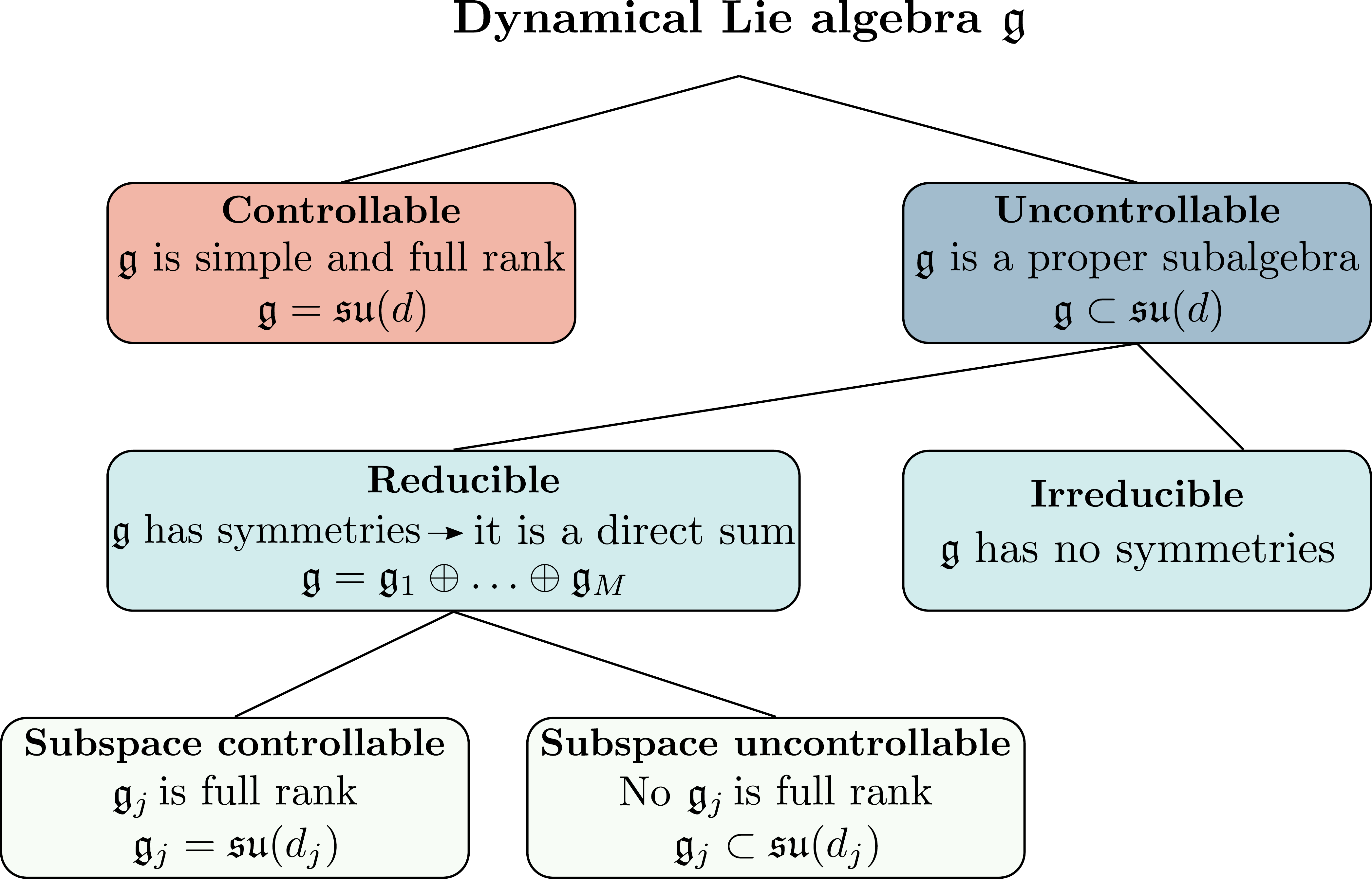}
\caption{\textbf{Cases of interest for the Dynamical Lie Algebra.} The Dynamical Lie Algebra (DLA) $\liea$ determines the set of unitaries expressible, and concomitantly, the set of states reachable. In this figure we show different scenarios that can arise when computing $\liea$. Our main results (on gradient scaling of VQAs and QOC systems) pertain to these different scenarios.}
\label{fig:Algebra_classification}
\end{figure}

If the DLA is not full rank, then the system is said to be \textit{uncontrollable}. In this case $\liea$ is  a \textit{proper} subalgebra of $\mathfrak{s u}(d)$, and only a \textit{proper} subgroup $\lieg\subset\SC\UC(d)$ is available to the control system, meaning that the set of reachable states $ \{ U(\thv)\ket{\psi}, \quad \forall\, U(\thv) \in \lieg\}$ is not the whole state space. As depicted in Fig.~\ref{fig:Algebra_classification},   there are two sources of uncontrollability \cite{polack2009uncontrollable}. 

On one hand, if the generators in $\GC$ share one or more common symmetries, i.e., there is at least one Hermitian operator $\Sigma$ that commutes with every element in $\GC$, then every $H\in \liea$ is block diagonal in the eigenbasis of $\Sigma$. This causes the state space to break into subspaces that are invariant under the action of $\liea$, in which case controllability is clearly disrupted. Here, the DLA is a \textit{reducible} representation of some Lie algebra. On the other hand, even in the absence of symmetries, that is, when the DLA is \textit{irreducible}, uncontrollability can arise simply because the Lie algebra is a proper subalgebra of $\mf{su}(d)$.

Let us finally remark that, as shown in Fig.~\ref{fig:Algebra_classification}, even though a reducible system cannot be controllable on the entire Hilbert space it may still be controllable on some (or all) of the invariant subspaces. Given a DLA that is a direct sum of irreducible representations, i.e., $\liea=\bigoplus_j \liea_j$, then the Hilbert space can be expressed as $\HC =\bigoplus_j \HC_j$, with $\H_j$ being invariant under the action of $\liea$. A system is said to be \textit{subspace controllable} on subspace $\H_j$ if $\liea_j$ is full rank, i.e., $\liea_j=\mf{u}(d_j)$, where $d_j=\dim(\H_j)$) and $\mf{u}(d_j)$ denotes the unitary algebra of degree $d_j$.

\section{MAIN RESULTS}\label{sec:main-results}

As previously discussed, VQAs and QOC can be considered as two formulations of a common variational optimization problem that optimizes parameters controlling the dynamical evolution of a quantum system. In this section we present our main results, where we basically leverage tools from QOC to analyze the trainability and the existence of barren plateaus in VQAs. Specifically, we organize our results in term of the different controllability settings shown in Fig.~\ref{fig:Algebra_classification}. In all cases, the proofs are presented in the Appendix. The main idea behind our results is that, given a PSA $U(\thv)$ as in Eq.~\eqref{eq:ansatz}, the study of the DLA of the ansatz can diagnose the presence (or absence) of barren plateaus in the VQA landscapes. 

\subsection{Controllable systems}

First, let us consider controllable systems. It is well known that the distribution of unitaries generated by controllable systems converges to a $2$-design in the long-time (i.e., for sufficiently deep circuits)~\cite{banchi2017driven}. However, the rate of convergence actually depends on the specific choice of generators. Hence, our first result  analyzes the depth at which the expressibility $\Vert\AC^{(2)}_{U(\thv)}\Vert_{\infty}$ of a controllable system is $\varepsilon$ small.
\begin{theorem}\label{theo:depth}
Consider a controllable system.  Then, the PSA $U(\thv)$ will form an $\varepsilon$-approximate $2$-design, i.e.  $\Vert\AC^{(2)}_{U(\thv)}\Vert_{\infty}=\varepsilon$ with $\epsilon>0$, when the number of layers $L$ in the circuit is
\begin{equation}\label{eq:depth}
    L=\frac{\log(1/\varepsilon)}{\log\left(1/\Vert\AC^{(2)}_{U_1(\thv)}\Vert_{\infty}\right)}\,.
\end{equation}
Here $\Vert\AC^{(2)}_{U_1(\thv)}\Vert_{\infty}$ denotes the expressibility of a single layer  $U_1(\thv_1)$ of the ansatz according to Eqs.~\eqref{eq:ansatz} and~\eqref{eq:superop}.
\end{theorem}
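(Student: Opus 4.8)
The plan is to exploit the periodic (layered) structure of the PSA to reduce the $L$-layer second-moment superoperator to the $L$-th power of the single-layer one, and then use that the Haar moment superoperator is a projector. First I would write the second-moment superoperator of the full circuit and use that the parameters of distinct layers are independent and identically structured. Since $U(\thv)=\prod_{l=1}^L U_l(\thv_l)$ and $(AB)^{\ot 2}\ot((AB)^*)^{\ot 2}=[A^{\ot 2}\ot(A^*)^{\ot 2}][B^{\ot 2}\ot(B^*)^{\ot 2}]$, the integrand factorizes across layers; integrating over the independent $\thv_l$ then gives
\begin{equation}
M^{(2)}_{U(\thv)}=\prod_{l=1}^L M^{(2)}_{U_l}=\left(M^{(2)}_{U_1}\right)^L,
\end{equation}
because every layer is identical. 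This factorization is the structural backbone of the whole argument.

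Next I would invoke that $P:=M^{(2)}_{U_H}$ is an orthogonal projector onto the Haar commutant, $P^2=P$, and that for the moment superoperator of \emph{any} unitary distribution one has $P\,M^{(2)}_{U_1}=M^{(2)}_{U_1}\,P=P$ by left/right invariance of the Haar measure. Writing $M:=M^{(2)}_{U_1}$ and $\Delta:=M-P=-\AC^{(2)}_{U_1}$, these relations give $P\Delta=\Delta P=0$, so cross terms vanish and $M^L=(P+\Delta)^L=P+\Delta^L$. Consequently
\begin{equation}
\AC^{(2)}_{U(\thv)}=M^{(2)}_{U_H}-M^{(2)}_{U(\thv)}=P-M^L=-\Delta^L=(-1)^{L+1}\left(\AC^{(2)}_{U_1}\right)^L,
\end{equation}
i.e. the $L$-layer deviation from Haar is, up to a sign, the $L$-th power of the single-layer deviation.

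Taking the infinity norm then yields $\Vert\AC^{(2)}_{U(\thv)}\Vert_\infty=\Vert(\AC^{(2)}_{U_1})^L\Vert_\infty=\Vert\AC^{(2)}_{U_1}\Vert_\infty^{\,L}$; setting this equal to $\varepsilon$ and solving $L\log\Vert\AC^{(2)}_{U_1}\Vert_\infty=\log\varepsilon$ gives exactly Eq.~\eqref{eq:depth}. This is where controllability enters: it guarantees that the only fixed points of $M$ lie in the Haar commutant, equivalently that $\Vert\AC^{(2)}_{U_1}\Vert_\infty<1$, so the denominator $\log(1/\Vert\AC^{(2)}_{U_1}\Vert_\infty)$ is strictly positive and the $\varepsilon$-approximate $2$-design is genuinely attained in the deep-circuit limit. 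For an uncontrollable system $\lieg\subsetneq\SC\UC(d)$ has a larger commutant, $M$ acquires extra unit eigenvalues outside $\mathrm{range}(P)$, $\Vert\AC^{(2)}_{U_1}\Vert_\infty=1$, and the deviation never decays, which is precisely why the hypothesis is needed.

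The step I expect to be the main obstacle is the norm-power identity $\Vert(\AC^{(2)}_{U_1})^L\Vert_\infty=\Vert\AC^{(2)}_{U_1}\Vert_\infty^{\,L}$: submultiplicativity of the spectral norm only gives ``$\leq$'' in general, and equality requires the relevant operator to be normal. I would establish this by arguing that for a symmetric per-layer parameter distribution $M^{(2)}_{U_1}$ is Hermitian (its adjoint is the moment superoperator of the inverse distribution, which coincides with the original), so that $\Delta$ is Hermitian and its powers saturate submultiplicativity; alternatively one restricts to the orthogonal complement of $\mathrm{range}(P)$, on which $\Delta$ acts as $M$ and a single dominant singular value governs the decay. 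I would also check carefully that the layer factorization holds verbatim for the exact (non-Trotterized) ansatz and that the inclusion $\mathrm{range}(P)\subseteq\ker\Delta$ is respected when passing to the deep-circuit limit.
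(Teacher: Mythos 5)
Your proposal follows the same backbone as the paper's proof: both factorize the $L$-layer moment operator as $M^{(2)}_{U(\thv)}=\bigl(M^{(2)}_{U_1}\bigr)^{L}$ using independence and identical distribution of the layers, both use that $P:=M^{(2)}_{U_H}$ is a projector, both arrive at $\Vert\AC^{(2)}_{U(\thv)}\Vert_{\infty}=\bigl(\Vert\AC^{(2)}_{U_1(\thv)}\Vert_{\infty}\bigr)^{L}$ and solve for $L$. Where you genuinely differ is the middle step. The paper works spectrally: it shows every eigenvalue of $M^{(2)}_{U_1}$ has modulus at most $1$, uses controllability (via the equality condition in the averaging bound, together with the fact that products of layers reach every $U\in\SC\UC(d)$) to show the eigenvalue-$1$ eigenspace coincides with the two-dimensional Haar-invariant subspace, and then writes $M^{(2)}_{U_1}=\ketbraq{\phi_1}+\ketbraq{\phi_2}+\sum_i\lambda_i\ketbraq{\psi_i}$ so that $M^{(2)}_{U(\thv)}$ has the $\lambda_i$ replaced by $\lambda_i^{L}$. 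Your route via the invariance identities $PM^{(2)}_{U_1}=M^{(2)}_{U_1}P=P$ and the resulting cancellation $(P+\Delta)^{L}=P+\Delta^{L}$ is cleaner: it yields the operator identity $\AC^{(2)}_{U(\thv)}=\pm\bigl(\AC^{(2)}_{U_1}\bigr)^{L}$ purely algebraically, with no eigen-analysis, and relegates controllability to the single role of guaranteeing strict contraction. This is a real simplification of the paper's argument.

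Two caveats. First, you correctly flag the step both proofs must confront: passing from $\Vert\Delta^{L}\Vert_{\infty}$ to $\Vert\Delta\Vert_{\infty}^{L}$ requires $M^{(2)}_{U_1}$ to be normal, since submultiplicativity gives only ``$\leq$''. The paper silently assumes this by writing a spectral decomposition with \emph{orthonormal} rank-one projectors and identifying $\Vert\AC^{(2)}_{U_1(\thv)}\Vert_{\infty}$ with $|\lambda_{\max}|$; you are more careful than the paper here. However, your proposed repair has a wrinkle: the inverse of a single layer is $\prod_k e^{+iH_k\theta_k}$ with the generators in \emph{reversed} order, so the inverse distribution is the distribution of a reversed-order layer and does not literally coincide with the original one; Hermiticity of $M^{(2)}_{U_1}$ is therefore not automatic, and your second fallback (restricting to $\ker P$) does not by itself give equality for non-normal operators. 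Without normality one retains only $\Vert\AC^{(2)}_{U(\thv)}\Vert_{\infty}\leq\Vert\AC^{(2)}_{U_1(\thv)}\Vert_{\infty}^{L}$, which still delivers the operationally relevant direction (the stated $L$ suffices for $\varepsilon$-expressibility) and everything downstream (Corollary~\ref{cor:Lscaling}, Proposition~\ref{prop:controllable}). Second, your assertion that controllability implies $\Vert\AC^{(2)}_{U_1(\thv)}\Vert_{\infty}<1$ is stated rather than proved; it does hold, but proving it requires essentially the paper's equality-condition argument applied to singular vectors: if $\Vert Q M^{(2)}_{U_1} Q\Vert_{\infty}=1$ with $Q=\id-P$, there is a unit vector in $\ker P$ fixed by $U^{\otimes 2}\otimes(U^{*})^{\otimes 2}$ for all $U$ in the group generated by the layer's support, which by controllability is all of $\SC\UC(d)$, a contradiction. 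So completing your proposal would in the end reuse the paper's key lemma, but packaged around a tighter algebraic skeleton.
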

\noindent See Appendix~\ref{app:proof_theo_convergence} for a proof of Theorem~\ref{theo:depth}.

We note that Theorem~\ref{theo:depth} arises from the following expression that  connects the expressibility of an $L$-layered PSA to the expressibility of a single layer of the ansatz to the $L$-th power as
\begin{equation}
    \Vert\AC^{(2)}_{U(\thv)}\Vert_{\infty} =\left( \Vert\AC^{(2)}_{U_1(\thv)}\Vert_{\infty} \right)^L\,.
\end{equation}
Here we can see that  $\Vert\AC^{(2)}_{U(\thv)}\Vert_{\infty}=0$ if and only if $\Vert\AC^{(2)}_{U_1(\thv)}\Vert_{\infty}=0$.
Hence, as expected, PSAs that have more expressible layers require less depth to have an $\varepsilon$-expressibility (to be $\varepsilon$-approximate $2$-designs). Conversely, one can also see that ansatzes with less expressible layers require more depth to become $\varepsilon$-approximate $2$-design.

The following corollary analyzes the scaling of $L$.
\begin{corollary}\label{cor:Lscaling}
Let the single layer expressibility  of a controllable system be $\Vert\AC^{(2)}_{U_1(\thv)}\Vert_{\infty}=1-\delta(n)$, with $\delta(n)$ being at most polynomially vanishing with $n$, i.e., with $\delta(n)\in\Omega(1/\poly(n))$. Then, if $L(n)\in\Omega(n/\delta(n))$,  $U(\thv)$ will be no worse than an $\varepsilon(n)$-approximate $2$-design (i.e., $\Vert\AC^{(2)}_{U(\thv)}\Vert_{\infty}\leq\varepsilon(n)$) with $\varepsilon(n)\in\OC(1/2^n)$, where we have added the $n$-dependence in $L$ and $\varepsilon$ for clarity.
\end{corollary}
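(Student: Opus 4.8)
The plan is to treat this as a direct asymptotic consequence of Theorem~\ref{theo:depth}, since the heavy lifting---the multiplicative composition of the single-layer expressibility---has already been established. I would start from the identity $\Vert\AC^{(2)}_{U(\thv)}\Vert_{\infty} = (\Vert\AC^{(2)}_{U_1(\thv)}\Vert_{\infty})^L$ stated just below Theorem~\ref{theo:depth}, and substitute the hypothesis $\Vert\AC^{(2)}_{U_1(\thv)}\Vert_{\infty}=1-\delta(n)$ to obtain $\Vert\AC^{(2)}_{U(\thv)}\Vert_{\infty} = (1-\delta(n))^{L(n)}$. The entire problem then collapses to controlling how fast $(1-\delta(n))^{L(n)}$ decays in $n$, which is a one-line estimate once the growth of $L(n)$ is fixed.

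The key step is the elementary bound $1-x\leq e^{-x}$, valid for all real $x$ and in particular for $x=\delta(n)\in[0,1)$, which gives $\Vert\AC^{(2)}_{U(\thv)}\Vert_{\infty}\leq e^{-\delta(n) L(n)}$. By hypothesis $L(n)\in\Omega(n/\delta(n))$, i.e.\ there is a constant $c>0$ with $L(n)\geq c\,n/\delta(n)$ for all large $n$; multiplying through by $\delta(n)$ yields $\delta(n)L(n)\geq c\,n$, so that $\Vert\AC^{(2)}_{U(\thv)}\Vert_{\infty}\leq e^{-c n}=(e^{-c})^{n}$. This is exponentially decaying in $n$, hence $\varepsilon(n)\in\OC(1/2^n)$ provided the implied constant satisfies $c\geq\ln 2$ (equivalently, provided one takes $L(n)\geq n\ln 2/\delta(n)$); I would make this choice of constant explicit. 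Crucially, the assumption $\delta(n)\in\Omega(1/\poly(n))$ guarantees $n/\delta(n)\in\OC(\poly(n))$, so the required depth $L(n)$ is only polynomial---which is the substantive content of the statement.

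An equivalent route is to invert the explicit formula of Theorem~\ref{theo:depth}: setting $\varepsilon=1/2^n$ there and using $\log(1/(1-\delta(n)))\geq\delta(n)$ bounds the depth needed to reach a $1/2^n$-approximate $2$-design by $L(n)\leq n\log 2/\delta(n)\in\OC(n/\delta(n))$; since $(1-\delta(n))^{L}$ is monotonically decreasing in $L$, any $L(n)\in\Omega(n/\delta(n))$ is at least this large and therefore yields an expressibility no larger than $1/2^n$. I do not anticipate a genuine obstacle here, as the argument is purely a scalar asymptotic calculation; the only point requiring care is bookkeeping of the multiplicative constants hidden in the $\Omega$ and $\OC$ notation, specifically ensuring the constant in $L(n)\in\Omega(n/\delta(n))$ is large enough ($\geq\ln 2$) to pin the base of the exponential at $2$ rather than at some other $b>1$.
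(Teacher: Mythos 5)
Your proposal is correct and follows essentially the same route as the paper's own proof: substitute the single-layer expressibility into the identity $\Vert\AC^{(2)}_{U(\thv)}\Vert_{\infty}=\left(\Vert\AC^{(2)}_{U_1(\thv)}\Vert_{\infty}\right)^{L}$ and bound $(1-\delta(n))^{L(n)}$ exponentially using $L(n)\geq c\,n/\delta(n)$. In fact your write-up is slightly tighter than the paper's, which replaces your clean inequality $1-x\leq e^{-x}$ with a ``first order'' approximation $(1-\delta(n))^{c'n/\delta(n)}\approx e^{-n+\OC(1/n)}$ and silently absorbs the constant hidden in the $\Omega$-notation, whereas you correctly flag that this constant must be at least $\ln 2$ (or the conclusion weakened to $\OC(1/b^n)$ for some $b>1$, which still suffices for Definition~\ref{def:BP}) in order to pin the base of the exponential at $2$ as stated.
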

\noindent See Appendix~\ref{app:proof_corollary_convergence} for a proof of Corollary~\ref{cor:Lscaling}.

From Corollary~\ref{cor:Lscaling} we have that when the single layer expressibility is (at most) polynomially vanishing with $n$, then a polynomial number of layers suffice to make the PSA $U(\thv)$ exponentially close to being a $2$-design. We note, however, that in the case where the single layer expressibility is exponentially close to $1$, one requires an exponential number of layers to form an $\varepsilon$-approximate $2$-design. In all the aforementioned cases it is worth remarking that an exponential number of layers will always lead to $\varepsilon$-approximate $2$-designs with $\varepsilon\in\OC(1/2^n)$, independently of the value of $\Vert\AC^{(2)}_{U_1(\thv)}\Vert_{\infty}$.

Once the depth of the ansatz is sufficient for the controllable system to be an $\varepsilon$-approximate $2$-design, then a barren plateau will arise. Hence, one can prove the following proposition from Theorem~\ref{theo:depth} and Corollary~\ref{cor:Lscaling}.
\begin{proposition}[Controllable]\label{prop:controllable}
There exists a scaling of the depth for which controllable systems form $\varepsilon$-approximate $2$-designs with $\varepsilon\in\OC(1/2^n)$, and hence the system exhibits a barren plateau according to Definition~\ref{def:BP}.
\end{proposition}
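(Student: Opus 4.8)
The plan is to prove the statement in two stages: first show that a controllable system can be driven to an \emph{exponentially} precise approximate $2$-design by choosing the circuit depth appropriately, and then invoke the established link between (approximate) $2$-designs and the vanishing of the gradient variance. For the first stage I would combine Theorem~\ref{theo:depth} with the multiplicativity relation $\Vert\AC^{(2)}_{U(\thv)}\Vert_{\infty}=(\Vert\AC^{(2)}_{U_1(\thv)}\Vert_{\infty})^{L}$. Since a controllable system has a single-layer expressibility strictly below one, $\Vert\AC^{(2)}_{U_1(\thv)}\Vert_{\infty}<1$, this power decays monotonically with the number of layers $L$. If the single-layer expressibility is at most polynomially vanishing, Corollary~\ref{cor:Lscaling} supplies a polynomial depth $L(n)\in\Omega(n/\delta(n))$ for which $\Vert\AC^{(2)}_{U(\thv)}\Vert_{\infty}\leq\varepsilon(n)$ with $\varepsilon(n)\in\OC(1/2^n)$; otherwise an exponential depth achieves the same bound, as noted after Corollary~\ref{cor:Lscaling}. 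In either case there exists a depth scaling with $\Vert\AC^{(2)}_{U(\thv)}\Vert_{\infty}\leq\varepsilon(n)\in\OC(1/2^n)$.

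For the second stage I would express the variance of the partial derivative as a functional of the moment operators. Writing $\partial_{\mu}C(\thv)$ by inserting $-iH_q$ at the location of the trained gate $\theta_{\mu}\equiv\theta_{pq}$, the quantities $\Ebb_{\thv}[\partial_{\mu}C(\thv)]$ and $\Ebb_{\thv}[(\partial_{\mu}C(\thv))^{2}]$ become linear functionals of $M^{(1)}_{U(\thv)}$ and $M^{(2)}_{U(\thv)}$ acting on operators built from $O$, $\rho$ and $H_q$. I would then decompose $M^{(2)}_{U(\thv)}=M^{(2)}_{U_H}-\AC^{(2)}_{U(\thv)}$ using Eq.~\eqref{eq:superop}. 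The piece coming from $M^{(2)}_{U_H}$ reproduces exactly the Haar / exact-$2$-design computation, for which the variance is known to scale as $\OC(1/2^n)$~\cite{mcclean2018barren,cerezo2020cost}, while the remaining piece is controlled by $\Vert\AC^{(2)}_{U(\thv)}\Vert_{\infty}=\varepsilon(n)$.

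The correction piece I would bound by $\Vert\AC^{(2)}_{U(\thv)}\Vert_{\infty}$ times a product of operator norms $\Vert O\Vert_{\infty}^{2}$, $\Vert\rho\Vert_{\infty}^{2}$ and $\Vert H_q\Vert_{\infty}^{2}$, all of which are $\OC(1)$ for bounded observables and generators. Since both the Haar contribution and the deviation are then $\OC(1/2^n)$, their sum satisfies $\Var_{\thv}[\partial_{\mu}C(\thv)]\in\OC(1/2^n)$, which matches Definition~\ref{def:BP} with $b=2$ and establishes the barren plateau. Together with the depth scaling found in the first stage, this proves the proposition.

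I expect the main obstacle to be the deviation bound in the second stage: one must write the variance rigorously as an integral against $M^{(2)}_{U(\thv)}$, split off the Haar piece, and bound the residual through the infinity norm of $\AC^{(2)}_{U(\thv)}$ without accumulating prefactors that grow exponentially in $n$. This is precisely the expressibility-to-gradient mechanism of Ref.~\cite{holmes2021connecting}, and the delicate point is ensuring that the norm factors entering the estimate stay $\OC(1)$ so that the product $\varepsilon(n)\cdot\poly(n)$ remains $\OC(1/2^n)$, and that the leading Haar term genuinely yields the advertised $1/2^n$ scaling rather than a larger constant.
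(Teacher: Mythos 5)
Your first stage is exactly the paper's: Theorem~\ref{theo:depth} and Corollary~\ref{cor:Lscaling} supply a depth at which $\Vert\AC^{(2)}_{U(\thv)}\Vert_{\infty}\leq\varepsilon(n)\in\OC(1/2^n)$. The paper's second stage is then very short: it observes that $\Vert\AC^{(2)}_{U(\thv)}(\rho)\Vert_{\infty}\leq\Vert\AC^{(2)}_{U(\thv)}\Vert_{\infty}$ for any state $\rho$, concludes that the ansatz is a state $\varepsilon$-approximate $2$-design, and invokes~\cite{mcclean2018barren}. Your second stage tries to prove that citation from scratch, which is legitimate, but two things go wrong. The smaller one: $\Ebb_{\thv}[(\partial_{\mu}C(\thv))^{2}]$ is \emph{not} a linear functional of $M^{(2)}_{U(\thv)}$ of the full circuit. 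Because the derivative inserts $H_q$ between $U_B$ and $U_A$, the variance equals $-\int dU_A\int dU_B\,\Tr[U_B^{\otimes 2}\rho^{\otimes 2}(U_B^{\dagger})^{\otimes 2}X^{\otimes 2}]$ with $X=[H_q,U_A^{\dagger}OU_A]$, i.e.\ a joint functional of the second moments of the two \emph{sub-circuits}. The Haar-plus-deviation decomposition must therefore be applied to $M^{(2)}_{U_B}$ (and $M^{(2)}_{U_A}$), each of which is an approximate $2$-design by the same depth argument applied to that half of the circuit; this is how \cite{mcclean2018barren} and \cite{holmes2021connecting} proceed, and it is a fixable slip.

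The serious problem is your claim that the deviation term is bounded by $\Vert\AC^{(2)}_{U(\thv)}\Vert_{\infty}$ times $\Vert O\Vert_{\infty}^{2}\Vert\rho\Vert_{\infty}^{2}\Vert H_q\Vert_{\infty}^{2}\in\OC(1)$. The deviation is a trace pairing, and converting it into the superoperator norm necessarily produces dimension factors. Concretely, Cauchy--Schwarz in the Frobenius inner product gives $\left|\Tr[\AC_{U_B}(\rho^{\otimes 2})X^{\otimes 2}]\right|\leq\Vert\AC_{U_B}(\rho^{\otimes 2})\Vert_{2}\,\Vert X\Vert_{2}^{2}\leq\varepsilon\,|\Tr[X^{2}]|$, and $|\Tr[X^{2}]|\in\OC(2^{n})$ even when $O$ and $H_q$ are single Pauli strings (this is exactly the $\Tr[\langle X^{2}\rangle]$ factor that appears in the paper's Theorem~\ref{theo:express-subspace2}); pairing instead with $\Vert X^{\otimes 2}\Vert_{\infty}$ forces the trace norm $\Vert\AC_{U_B}(\rho^{\otimes 2})\Vert_{1}\leq 2^{n}\Vert\AC_{U_B}(\rho^{\otimes 2})\Vert_{2}$, with the same net effect. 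Hence with $\varepsilon\in\OC(1/2^{n})$ alone your residual is only $\OC(1)$, and the barren plateau does not follow. The repair is cheap and uses the existence quantifier in the statement: by Eq.~\eqref{eq:explayers} the expressibility decays geometrically in $L$, so one can choose the depth so that, say, $\varepsilon(n)\in\OC(1/8^{n})$ --- still within $\OC(1/2^{n})$ --- after which even a $4^{n}$ prefactor leaves the deviation exponentially small. As written, however, your estimate fails at precisely the point you yourself flagged as delicate.
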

\noindent See Appendix~\ref{app:proof_prop_controllable} for a proof of Proposition~\ref{prop:controllable}.

Proposition~\ref{prop:controllable}  rephrases the well known barren plateau results of~\cite{mcclean2018barren,holmes2021connecting} in terms of controllability. Specifically, it has been shown that when an ansatz forms a $2$-design,  such randomness leads to a barren plateau. Hence, the proof of Proposition~\ref{prop:controllable} simply follows the proof in~\cite{mcclean2018barren}, with the addition  that the convergence to a $2$-design comes from the fact that the system is controllable. 

Evidently, it becomes relevant to determine systems that are controllable as these can exhibit barren plateaus. In this work we prove that two relevant sets of generators lead to full rank DLAs, and hence to controllable systems.
\begin{proposition}\label{prop:controllable-gen}
The following two sets of generators generate full rank DLAs, and concomitantly lead to controllable systems:
\begin{itemize}
    \item $\GC_{\rm HEA}=\Big\{X_i,Y_i\Big\}_{i=1}^n \bigcup \left\{\sum_{i=1}^{n-1} Z_i Z_{i+1}\right\}$,
    \item $\GC_{\rm SG}=\left\{\sum_{i=1}^n X_i, \sum_{i<j}  \left(h_i Z_i + J_{ij} Z_iZ_j\right)\right\}$, with  $h_i,J_{ij}\in\Rbb$ sampled from a Gaussian distribution.
\end{itemize}
\end{proposition}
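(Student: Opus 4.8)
The plan is to prove each case by showing that the Lie closure of the generators exhausts $\mathfrak{su}(2^n)$, i.e. that every skew-Hermitian Pauli string $iP$ (with $P\neq\id$) can be produced by repeatedly taking nested commutators and real linear combinations. Since these strings form a basis of $\mathfrak{su}(2^n)$, establishing this for both $\GC_{\rm HEA}$ and $\GC_{\rm SG}$ immediately gives $\liea=\mathfrak{su}(2^n)$, hence a full-rank DLA and controllability in the sense of Section~\ref{Sec: controllability}. I would treat the two generator sets separately, as they call for different techniques.

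For $\GC_{\rm HEA}$ I would proceed constructively. The single-qubit generators already yield all local Paulis, since $[iX_i,iY_i]\propto iZ_i$ puts $iX_i,iY_i,iZ_i\in\liea$ for every $i$. Next I would peel the individual couplings off the single sum $S=\sum_i Z_iZ_{i+1}$: the operator $X_1$ meets $S$ only through $Z_1Z_2$, so (up to factors of $i$) $\mathrm{ad}_{X_1}^2(S)\propto Z_1Z_2$ isolates the boundary term; subtracting it leaves $\sum_{i\geq 2}Z_iZ_{i+1}\in\liea$, and applying the same boundary step repeatedly shows $iZ_iZ_{i+1}\in\liea$ for all $i$. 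Each such entangling term together with the local rotations on qubits $i$ and $i+1$ generates the full $\mathfrak{su}(4)$ acting on that adjacent pair. Finally, because consecutive pairs share a qubit — a spreading of entanglement illustrated by identities such as $[X_iX_{i+1},Z_{i+1}X_{i+2}]\propto iX_iY_{i+1}X_{i+2}$ — I would invoke the standard fact that two full special-unitary subalgebras on overlapping supports generate the full special-unitary algebra on their union, merging the $\mathfrak{su}(4)$ blocks along the connected chain up to $\liea=\mathfrak{su}(2^n)$.

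For $\GC_{\rm SG}$ only two generators are available, $A=\sum_i X_i$ and the Ising operator $B$, so the argument becomes spectral. In the computational basis $B$ is diagonal with eigenvalues $E_a$, while $A$ is exactly the adjacency operator of the Boolean hypercube, with $A_{ab}\neq 0$ iff $a$ and $b$ differ in a single bit. The first step is to argue that, for Gaussian $h_i,J_{ij}$, the spectrum of $B$ and all of its gaps $E_a-E_b$ attached to single-flip pairs are non-degenerate with probability one, since any coincidence corresponds to a measure-zero union of hyperplanes in the $(h,J)$ parameters. Since $\mathrm{ad}_B^k(A)$ has matrix elements $(E_a-E_b)^k A_{ab}$, I would then apply a suitable even (resp. odd) polynomial $p(\mathrm{ad}_B)$ to $A$ — a Lagrange/Vandermonde interpolation over the distinct gaps — to annihilate all edges except a chosen one, extracting each individual single-flip transition operator $\ketbra{a}{b}+\ketbra{b}{a}$ and its skew-Hermitian partner. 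Because the hypercube graph is connected, commutators of these per-edge $\mathfrak{su}(2)$ blocks produce every transition operator between arbitrary basis states together with all traceless diagonal operators, which is precisely the graph-connectivity controllability criterion and yields $\liea=\mathfrak{su}(2^n)$.

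I expect the crux to lie in the $\GC_{\rm SG}$ case. The HEA argument is essentially a bookkeeping induction built on standard Pauli commutator identities and the overlapping-subalgebra lemma. By contrast, the spin-glass argument rests on two genuinely nontrivial ingredients: establishing almost-sure non-degeneracy of exactly the gap structure seen by the nonzero matrix elements of $A$, and the interpolation step that converts non-degenerate gaps into isolated transition operators. Care is also needed to check that the sign pairing $\pm(E_a-E_b)$ does not obstruct separating the Hermitian and skew-Hermitian parts, which is why even and odd interpolating polynomials must both be used.
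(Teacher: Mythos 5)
Your proposal is correct, but it takes a genuinely different route from the paper's proof in both cases. For $\GC_{\rm HEA}$, the paper never isolates the individual couplings $Z_iZ_{i+1}$: it works directly with the summed commutators $[X_i,\sum_j Z_jZ_{j+1}]\propto Y_i(Z_{i-1}+Z_{i+1})$, extracts mixed strings such as $X_iY_{i+1}$, and then builds every Pauli string by explicit bookkeeping (all nearest-neighbour two-body terms, then three-body terms, then the identity $[M_iN_{i+1}O_{i+2},\,P_iN_{i+1}Q_{i+2}]=[M_i,P_i][O_{i+2},Q_{i+2}]$ to reach longer ranges, and so on up to $n$-body strings). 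Your $\mathrm{ad}_{X_1}^2$ peeling trick, the per-bond $\mf{su}(4)$ blocks, and the merging lemma for full special-unitary algebras on overlapping supports give a shorter and more modular argument, at the price of importing that (true and standard) lemma rather than being self-contained. For $\GC_{\rm SG}$ the divergence is larger: the paper stays entirely in operator space, producing $\sum_i h_i^m Y_i$, $\sum_i h_i^m Z_i$, $\sum_i h_i^{2m}X_i$ by nested commutators and invoking a Vandermonde argument on the $n$ distinct, nonzero values $h_i$ to conclude that every single-qubit Pauli lies in $\liea$; together with $\sum_{ij}J_{ij}Z_iZ_j$ this reduces the spin-glass case to (essentially) the HEA case just proved. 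You instead work in the eigenbasis of the diagonal generator, view $\sum_i X_i$ as the hypercube adjacency operator, establish almost-sure distinctness of the $n2^{n-1}$ single-flip gap magnitudes, and interpolate in $\mathrm{ad}_B$ to isolate per-edge $\mf{su}(2)$'s, finishing with the graph-connectivity criterion. Both proofs hinge on genericity of the Gaussian couplings through an interpolation step, but at different scales: the paper needs only the $n$ fields $h_i$ to be pairwise distinct, whereas you need pairwise-distinct gap magnitudes over exponentially many hypercube edges (still a measure-zero failure set, so this is sound, and your hyperplane argument covers it). What your route buys is a more general, reusable criterion --- any generic diagonal generator together with a connected off-diagonal mixer is controllable --- while the paper's reduction to HEA-type generators is more economical and elementary. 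One detail worth making explicit if you write this up: since $\mathrm{ad}_B^{2k}(A)$ is Hermitian and $\mathrm{ad}_B^{2k+1}(A)$ is skew-Hermitian, only $i$ times the even-order combinations and the odd-order combinations themselves lie in $\liea$; your even/odd polynomial split is exactly what respects this, and alternatively the skew part of each edge operator follows from one further commutator of its Hermitian part with $B$, since the corresponding gap is almost surely nonzero.
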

\noindent See Appendix~\ref{app:proof_of_hea_sg} for a proof of Proposition~\ref{prop:controllable-gen}.

The first case in Proposition~\ref{prop:controllable-gen} corresponds to the generators of PSA layered Hardware Efficient Ansatz~\cite{kandala2017hardware}, and hence Proposition~\ref{prop:controllable} indicates that this system can exhibit barren plateaus. While it is known that the layered Hardware Efficient Ansatz converges to a $2$-design for sufficient depth~\cite{harrow2009random,brandao2016local,harrow2018approximate,mcclean2018barren,cerezo2020cost}, the proof of existence of barren plateaus for this ansatz presented here is novel in that we show that the system is controllable. 

The  second result in Proposition~\ref{prop:controllable-gen} pertains to determining the ground state energy of quantum spin glasses (usually configured to encode solution to combinatorial optimization problems)~\cite{lucas2014ising,streif2020training}  with a PSA generated by $\GC_{\rm SG}$. Hence, since the system is controllable, according to Proposition~\ref{prop:controllable}, such an ansatz will also exhibit a barren plateau. This provides a no-go theorem for determining the ground state of certain spin glasses with Hamiltonians  using deep PSA variational ansatzes
.

\subsection{Subspace controllable systems}

Let us now consider the case of reducible DLAs, i.e., control systems with symmetries. Here we recall that in this case the DLA is a direct sum of the form $\liea=\bigoplus_j \liea_j$, such that any unitary $U(\thv)$ in the dynamical group $\lieg$ preserves the subspaces $U(\thv)\HC_j\subset \HC_j$.    Then, similarly to the fully controllable case, if a system is subspace controllable in a given subspace there exists a depth at which the unitaries $U(\thv)$ form $2$-designs in that subspace. In such a case, we can derive the following theorem for the variance of the cost function partial derivative with respect to a parameter $\theta_\mu\equiv\theta_{pq}$ associated to layer $p$ and generator $H_q$ (see Eq.~\eqref{eq:ansatz} for a definition of the ansatz). In the following, will slightly abuse notation and denote $H_\mu=H_q$.

\begin{theorem}[Subspace controllable]\label{Theo:subspace}
Consider a system that is reducible, i.e. so that the Hilbert space is $\HC=\bigoplus_j \HC_j$ with each $\HC_j$ invariant under the action of the dynamical Lie group $\lieg$ (see Def.~\ref{def:dynamical_lie_group}), and controllable on some $\HC_k$ of dimension $d_k$ (i.e. $\liea_k=\mf{u}(d_k)$ or $\mf{s}\mf{u}(d_k)$). Consider a cost function $C(\thv)$ in the form of Eq.~\eqref{eq:cost} and suppose that the number of layers $L$ in the circuit is enough to allow the distribution of unitaries $U(\thv)$ to be $\varepsilon$ close to a 2-design in $\HC_k$. Then, if the initial state is such that $\rho\in\HC_k$, the variance of the cost function partial derivative with respect to parameter $\th_\mu$ is given by
\small
\begin{equation}\label{Eq_th2}
\Var_{\vec{\theta}}[\partial_{\mu} C(\thv)] =\frac{2d_k}{(d_k^2-1)^2} \Delta(H_{\mu}\k)\Delta(O\k)\Delta(\rho\k)\,,
\end{equation}
\normalsize
where $O$ is the operator whose expectation value is being minimized and $H_\mu$ is the generator of the corresponding gate. Here $\Delta(A) = D_{HS}\left(A,\Tr[A]\frac{\id_d}{d}\right)$, with  $D_{HS}\left(A,B\right)=\Tr[(A-B)^2]$ the Hilbert-Schmidt distance, and where we defined
$A\k$ as the reduction of operator $A$ onto the subspace of $\H_k$. Explicitly, $A\k=Q_k A Q_k\ad$ with $Q_k$ a matrix (of dimension $d_k\times 2^n$) with columns corresponding to a basis of $\H_k$.
\end{theorem}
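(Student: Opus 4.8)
The plan is to reduce everything to an effective $d_k$-dimensional problem and then evaluate a second-moment integral over a $2$-design. Since $\rho$ is supported on $\HC_k$ and every $U(\thv)\in\lieg$ preserves $\HC_k$, all the dynamics relevant to $C(\thv)$ happens inside $\HC_k$, so I would use the isometry $Q_k$ to replace $O$, $\rho$, $H_\mu$ by their reductions $O\k$, $\rho\k$, $H_\mu\k$ and regard $U(\thv)$ as a unitary on $\Cbb^{d_k}$. Next I would single out the differentiated gate by writing $U(\thv)=W_2\,e^{-i\theta_\mu H_\mu}\,W_1$, where $W_1$ collects the gates applied before the $\mu$-th one and $W_2$ those applied after. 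Differentiating and using $\Tr[[H_\mu,\,\cdot\,]]=0$ yields
\begin{equation}
\partial_\mu C(\thv)=-i\,\Tr\big[O_L\,[H_\mu\k,\rho_R]\big],
\end{equation}
with $O_L=W_2\ad O\k W_2$ and $\rho_R=e^{-i\theta_\mu H_\mu}W_1\rho\k W_1\ad e^{+i\theta_\mu H_\mu}$. Because $e^{-i\theta_\mu H_\mu}$ commutes with $H_\mu\k$, the parameter $\theta_\mu$ drops out of every trace appearing below.

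The sufficient-depth hypothesis lets me treat $W_1$ and $W_2$ as independent unitaries, each distributed as a $2$-design on $\mathrm{U}(d_k)$. Set $M=[H_\mu\k,\rho_R]$, which is traceless and anti-Hermitian so that $\Tr[M^2]\leq 0$; then $\partial_\mu C=-i\Tr[O_L M]$ and $(\partial_\mu C)^2=-(\Tr[O_L M])^2$. Averaging over $W_2$ already kills the first moment, so $\Ebb_{\thv}[\partial_\mu C]=0$ and $\Var_{\thv}[\partial_\mu C]=\Ebb_{\thv}[(\partial_\mu C)^2]$. For the second moment I would use the $2$-design twirl
\begin{equation}
\int dW\,(W\ad)^{\otimes2}(A\otimes A)\,W^{\otimes2}=a\,\id+b\,\mathrm{SWAP},
\end{equation}
where $a=(d_k\Tr[A]^2-\Tr[A^2])/(d_k(d_k^2-1))$ and $b=(d_k\Tr[A^2]-\Tr[A]^2)/(d_k(d_k^2-1))=\Delta(A)/(d_k^2-1)$. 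Applying it to $\Ebb_{W_2}[O_L\otimes O_L]$ (with $A=O\k$) and using $\Tr[M]=0$ collapses the $W_2$ average to $\Ebb_{W_2}[(\partial_\mu C)^2]=-\tfrac{\Delta(O\k)}{d_k^2-1}\Tr[M^2]$.

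What remains, and what I expect to be the main obstacle, is the $W_1$ average of $\Tr[M^2]$. Expanding the commutator gives $\Tr[M^2]=2\Tr[H_\mu\k\sigma H_\mu\k\sigma]-2\Tr[(H_\mu\k)^2\sigma^2]$ with $\sigma=W_1\rho\k W_1\ad$ (the $\theta_\mu$-dependence having cancelled). Each term is quartic in $W_1$ and I would evaluate it with the same twirl applied to $\Ebb_{W_1}[\sigma\otimes\sigma]$, together with the identities $\Tr[(A\otimes B)\,\mathrm{SWAP}]=\Tr[AB]$ and $\mathrm{SWAP}^2=\id$. Careful bookkeeping of the four resulting traces makes the $a$-terms cancel and leaves $\Ebb_{W_1}[\Tr[M^2]]=-\tfrac{2d_k}{d_k^2-1}\Delta(H_\mu\k)\Delta(\rho\k)$. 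Substituting this into the $W_2$ result gives
\begin{equation}
\Var_{\thv}[\partial_\mu C]=\frac{2d_k}{(d_k^2-1)^2}\,\Delta(H_\mu\k)\,\Delta(O\k)\,\Delta(\rho\k),
\end{equation}
as claimed. The remaining loose ends are that the hypothesis supplies only an $\varepsilon$-approximate $2$-design, so this identity is really the leading term and should carry an $\OC(\varepsilon)$ correction obtained by replacing exact Haar moments by approximate ones, and that treating $W_1,W_2$ as independent $2$-designs requires enough layers on each side of the chosen gate, which the depth assumption is meant to supply.
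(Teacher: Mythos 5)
Your proposal is correct and follows essentially the same route as the paper's proof: both reduce all operators to the invariant subspace $\HC_k$, split the circuit at the differentiated gate, treat the two halves as independent $2$-designs on $\mathcal{U}(d_k)$, and evaluate the resulting second-moment Haar integrals, with tracelessness of the commutator eliminating the cross terms. The only cosmetic differences are the order of integration (you twirl the $O$-side first, while the paper integrates the $\rho$-side first) and your use of the $a\,\id + b\,\mathrm{SWAP}$ form of the twirl in place of the paper's elementwise Haar identities.
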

\noindent See Appendix~\ref{app:proof_theo_subspace} for a proof of Theorem \ref{Theo:subspace}.

Theorem~\ref{Theo:subspace}  shows that the input state $\rho$ can actually play a crucial role in determining the gradient scaling of the cost function. Specifically, if $\rho$ belongs to an invariant subspace where the system is controllable, then the scaling of the cost function partial derivative variance is determined  by the dimension of the invariant subspace rather than by the dimension $d=2^n$ of the Hilbert space. Hence, the cost function $C(\thv)$ might exhibit barren plateaus in some  subspaces but not in others. This is formalized in the following corollary.

\begin{corollary}\label{cor:HXXZ}
Consider an ansatz of the form in \eqref{eq:ansatz} giving rise to a reducible DLA, and let $\rho\in\HC_k$, with $\HC_k$ some invariant subspace that is controllable (i.e. the DLA reduced to such subspace is full rank). The following bound holds

\begin{equation}
    \Var_{\thv}[\partial_{\mu} C(\thv)]\leq\frac{4d_k^2}{(d_k^2-1)^2}\sqrt{\Tr[H_\mu^4]} \sqrt{\Tr[O^4]} \,.
\end{equation}
That is, provided $\Tr[(H_\mu)^4],\Tr[O^4]\in \OC(2^n)$, the cost function will exhibit a barren plateau for any subspace such that $d_k\in\OC(2^n)$.
\end{corollary}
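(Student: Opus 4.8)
The plan is to start from the exact variance formula established in Theorem~\ref{Theo:subspace}, namely $\Var_{\thv}[\partial_\mu C(\thv)] = \frac{2d_k}{(d_k^2-1)^2}\Delta(H_\mu\k)\Delta(O\k)\Delta(\rho\k)$, and to upper bound each of the three $\Delta$-factors separately. Since $\Delta(A\k)=D_{HS}(A\k,\Tr[A\k]\id_{d_k}/d_k)=\Tr[(A\k)^2]-(\Tr[A\k])^2/d_k$ is a squared Hilbert--Schmidt distance, the subtracted term is nonnegative and I can immediately drop it to get $\Delta(A\k)\le\Tr[(A\k)^2]$. The whole problem then reduces to controlling $\Tr[(A\k)^2]$ for $A\in\{H_\mu,O\}$ in terms of $\Tr[A^4]$, and to bounding the state factor $\Delta(\rho\k)$ by a constant.

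For the operator factors I will exploit the structure of the reduction map. Writing $P_k=Q_k\ad Q_k$ for the projector onto $\HC_k$ (so that $Q_k Q_k\ad=\id_{d_k}$, $P_k^2=P_k$ and $\Tr[P_k]=d_k$), cyclicity gives $\Tr[(A\k)^2]=\Tr[(P_k A)^2]$. The key inequality is the matrix Cauchy--Schwarz bound $|\Tr[M^2]|\le\Vert M\Vert_{\rm HS}^2=\Tr[M\ad M]$, valid for any $M$; applied to $M=P_k A$ it yields $\Tr[(P_k A)^2]\le\Tr[A P_k A]=\Tr[P_k A^2]$. A second application of Cauchy--Schwarz in the Hilbert--Schmidt inner product, this time between $P_k$ and $A^2$, gives $\Tr[P_k A^2]\le\sqrt{\Tr[P_k]}\sqrt{\Tr[A^4]}=\sqrt{d_k}\sqrt{\Tr[A^4]}$. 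Chaining these, $\Delta(A\k)\le\Tr[(A\k)^2]\le\sqrt{d_k}\sqrt{\Tr[A^4]}$, which I apply with $A=H_\mu$ and $A=O$. This two-step route — compressing through $\Tr[P_k A^2]$ rather than bounding $\Tr[(A\k)^2]$ crudely by $d_k\Vert A\Vert_\infty^2$ — is what produces only a $\sqrt{d_k}$ prefactor per operator instead of a full $d_k$, and is the main technical point of the argument.

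For the state factor, since $\rho\in\HC_k$ is a normalized density operator supported on the subspace, $\rho\k=Q_k\rho Q_k\ad$ satisfies $\Tr[\rho\k]=\Tr[P_k\rho]=1$ and $\Tr[(\rho\k)^2]=\Tr[\rho^2]\le1$ (purity at most one). Hence $\Delta(\rho\k)=\Tr[\rho^2]-1/d_k\le1$.

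Combining the three bounds into the formula of Theorem~\ref{Theo:subspace} gives $\Var_{\thv}[\partial_\mu C(\thv)]\le\frac{2d_k}{(d_k^2-1)^2}\cdot\sqrt{d_k}\sqrt{\Tr[H_\mu^4]}\cdot\sqrt{d_k}\sqrt{\Tr[O^4]}\cdot1=\frac{2d_k^2}{(d_k^2-1)^2}\sqrt{\Tr[H_\mu^4]}\sqrt{\Tr[O^4]}$, which is even stronger than the stated bound by a factor of two, so the claimed inequality with prefactor $4d_k^2/(d_k^2-1)^2$ follows with room to spare. For the barren-plateau consequence I note that under $\Tr[H_\mu^4],\Tr[O^4]\in\OC(2^n)$ the product of square roots is $\OC(2^n)$, while $4d_k^2/(d_k^2-1)^2\sim 4/d_k^2$ for large $d_k$; thus $\Var_{\thv}[\partial_\mu C(\thv)]\in\OC(2^n/d_k^2)$, which is exponentially suppressed precisely when $d_k$ grows like $2^n$, matching Definition~\ref{def:BP}. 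The only genuine obstacle is the operator-compression step of the second paragraph; everything else is bookkeeping around the exact formula already proven.
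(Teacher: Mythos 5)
Your proof is correct and follows essentially the same route as the paper's: it starts from the exact formula of Theorem~\ref{Theo:subspace}, compresses $\Delta(A\k)\le\Tr[\P_k A^2]\le\sqrt{d_k}\sqrt{\Tr[A^4]}$ for $A=H_\mu,O$, and bounds the state factor by a constant before combining. The only differences are cosmetic: you justify $\Tr[(A\k)^2]\le\Tr[\P_k A^2]$ via the trace Cauchy--Schwarz inequality where the paper uses positivity of $A\P_k A$ together with $\P_k\le\id$, and your purity bound $\Delta(\rho\k)\le 1$ is slightly tighter than the paper's $\Delta(\rho\k)\le 2$, so you recover the stated prefactor $4d_k^2/(d_k^2-1)^2$ with a factor of two to spare.
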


\noindent See Appendix~\ref{app:proof_corollary_subspace} for a proof of Corollary~\ref{cor:HXXZ}. In addition, in this appendix, we also note relevant cases for which $\Tr[(H_\mu)^4],\Tr[O^4]\in \OC(2^n)$ holds.

\begin{figure}[t]
\includegraphics[width=.9\columnwidth]{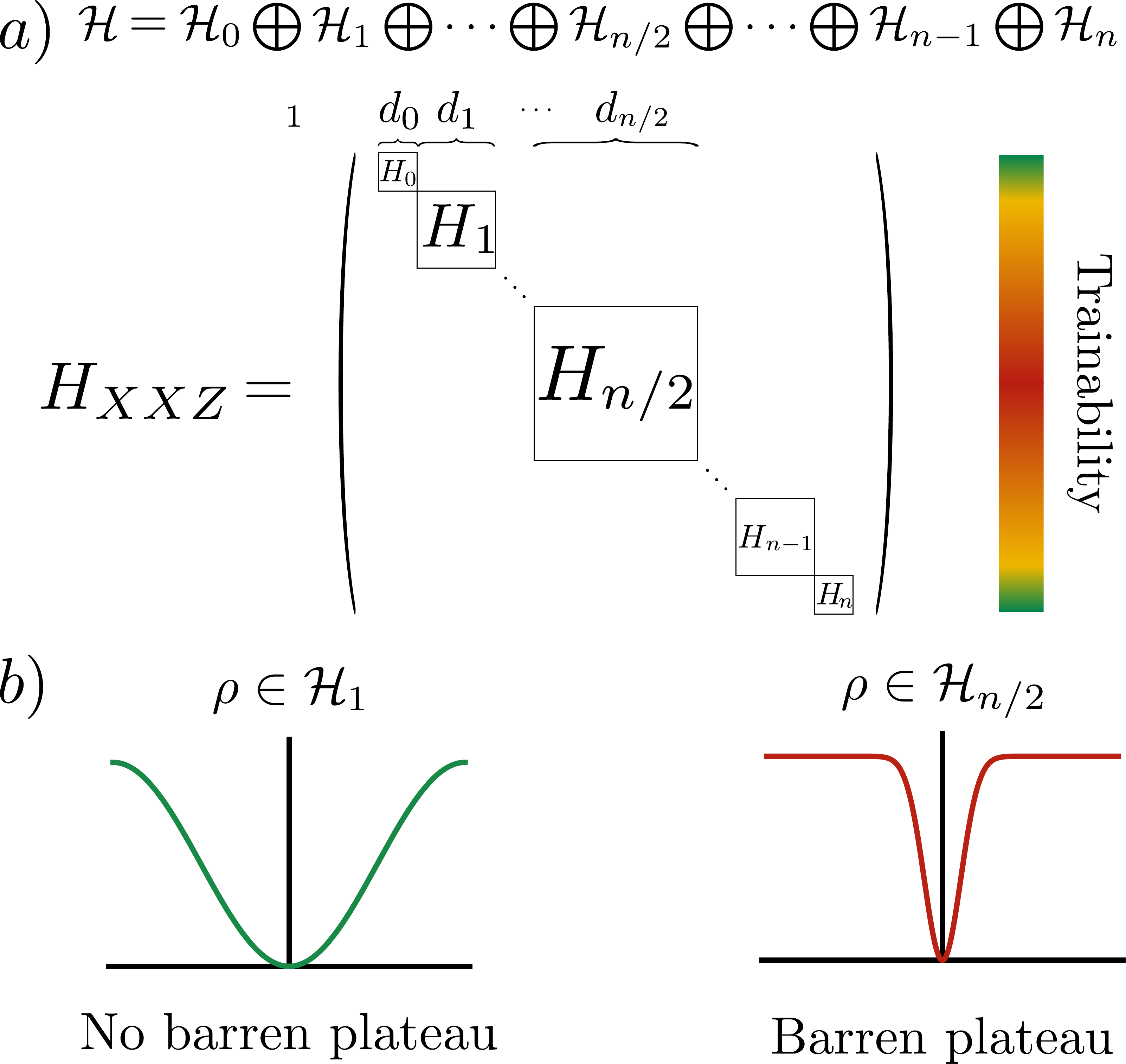}
\caption{\textbf{Trainability analysis for a PSA used to prepare the ground state of a Heisenberg $XXZ$ Hamiltonian.} a) Due to the symmetries in the $XXZ$ Hamiltonian of Eq.~\eqref{eq:HXXZ}, the Hilbert space spanned by the PSA  can be expressed as a direct sum of invariant subspaces $\HC_k$ (of dimension $d_k=\binom{n}{k}$) composed of states with $k$ excitations. Hence both $H_{XXZ}$ and $U(\thv)$  generated by Eq.~\eqref{eq:genHVAXXZ} will be block diagonal.  Since the system is subspace controllable in each invariant subspace, the gradient scaling of the PSA can be analyzed via Theorem~\ref{Theo:subspace}. b) The existence or absence of barren plateaus is directly determined by the dimension of the invariant subspace to which the input state $\rho$ belongs. For instance, the cost can be trainable for $\rho\in\HC_1$, but will exhibit a barren plateau if $\rho\in\HC_{n/2}$, as in the latter case the dimension $d_{n/2}$ is exponentially large.}
\label{fig:HXXZ}
\end{figure}

For example, let us consider the problem of preparing the ground state of a Heisenberg $XXZ$ spin chain
\small
\begin{equation}\label{eq:HXXZ}
    H_{XXZ}=\sum_{i=1}^{n-1} \left(X_iX_{i+1}+ Y_i Y_{i+1} + J Z_iZ_{i+1}\right)
\end{equation}
\normalsize
with a PSA generated by $\GC=\GC_{XXZ_U} \bigcup \{Z_1\}$. Here, the XXZ generators
\small
\begin{equation}\label{eq:genHVAXXZ}
    \GC_{XXZ_U}= \left\{\sum_{i} X_iX_{i+1}+ Y_i Y_{i+1},\sum_{i} Z_iZ_{i+1}\right\}_{i:even,odd}\!\!\!\!\!
\end{equation}
\normalsize
are accompanied by a \textit{control} generator $Z_1$, which is introduced precisely to make the system (subspace) controllable. We remark that we here employ a $U$ subindex in $\GC_{XXZ_U}$ to indicate that this set of generators is uncontrollable. Since all elements in $\GC_{XXZ_U}$ commute with $M_z=\sum_{i=1}^n Z_i$, the Hilbert space fractures into $n+1$ invariant subspaces of fixed excitation~\footnote{We recall that a state  $\ket{\psi}$ has $m$ excitations if it can be expressed as a linear combination of computational basis states with Hamming weight $m$} $\HC=\bigoplus_{m=0}^n \HC_m$,  of dimension $d_m=\dim(\HC_m)=\binom{n}{m}$ \cite{cerezo2017factorization}. 

Because the example set $\GC$ has a DLA that is full rank on every subspace~\cite{wang2016subspace}, we can analyze the trainability of such a PSA using Theorem~\ref{Theo:subspace}. 
The implications of Corollary~\ref{cor:HXXZ} for such a VQA are schematically shown in Fig.~\ref{fig:HXXZ}. Here we find that the presence, or absence, of barren plateaus for the PSA $U(\thv)$ generated by Eq.~\eqref{eq:genHVAXXZ} is completely determined by the scaling of the invariant subspace to which the input state belongs. For instance, the cost may not exhibit a barren plateau if $\rho$ has a number of excitations that does not scale with $n$, while it will have a barren plateau for $k=n/2$ (as in this case the dimension $d_{n/2}$ scales exponentially with the number of qubits).

\subsection{Uncontrollable and reducible systems}

Analyzing the scaling of the gradients in the case of uncontrollable systems becomes much more intricate than in the controllable or subspace controllable cases, mainly because integrating over the Haar measure of proper subgroups of the unitary group is not so straightforward~\cite{collins2006integration}. As shown in this (and the next) section, one can still obtain a few analytical results for these cases. In particular, one can derive an upper bound for the variance of partial derivatives in terms of the degree of expressibility on the invariant subspaces, in a spirit similar to that of~\cite{holmes2021connecting}.

Before presenting our main results for  uncontrollable and reducible systems, it is convenient to introduce some notation. We will use $U_B$ and $U_A$, respectively, to address the portions of the circuit that come before and after a given parameter $\theta_\mu\equiv\theta_{pq}\in\thv$. That is,
\begin{equation}
    U_{B} = \prod_{m=0}^q e^{-i H_m \theta_{pm}} \left(\prod_{l=1}^{p-1} \prod_{k=0}^K e^{-i H_k \theta_{lk}}\right)\,,
\label{eq:UB}
\end{equation}
and
\begin{equation}
    U_{A} = \prod_{l=p+1}^L \prod_{k=0}^K e^{-i H_k \theta_{pk}}\left(\prod_{k=q+1}^K e^{-i H_k \theta_{pk}}\right)\,,
\label{eq:UA}
\end{equation}
where we have omitted the $\thv$ dependency for simplicity.

Then, the following theorem holds.
\begin{theorem}\label{theo:express-subspace2}
Consider a system that is reducible and let $\rho\in\HC_k$ with $\HC_k$ an invariant subspace of dimension $d_k$. Then, the variance of the cost function partial derivative is upper bounded by
\small
\begin{equation}\label{eq:expressibility}
\Var_\theta[ \partial_{\mu} C(\thv)]  \leq  \min\{G_A(\rho\k),G_B(O\k)\}\,,
\end{equation}
\normalsize
with
\footnotesize
\begin{align}
  G_B(\rho\k) &= \left(\left\Vert \mathcal{A}_{U_B\k}\left((\rho\k)^{\otimes 2}\right)\right\Vert_2 -\frac{\Delta(\rho\k)}{d_k^2-1}\right)\Tr\left[\left\langle X^2\right\rangle_{U_A\k}\right]\nonumber\\
  G_A(O\k) &= \left(\left\Vert \mathcal{A}_{U_A\k}\left((O\k)^{\otimes 2}\right)\right\Vert_2 -\frac{\Delta(O\k)}{d_k^2-1}\right)\Tr\left[\left\langle Y^2\right\rangle_{U_B\k}\right].\nonumber
\end{align}
 \normalsize
Here we defined $X= [H_\mu\k, (U_A\k\,)^\dagger O\, U_A\k]$ and $ Y = [H_\mu\k, (U_B\k) \rho\k (U_B\k)\ad]$. For simplicity, we here employed the short-hand notation   $\langle\cdot \rangle_{U_x\k}$ (with $x=A,B$) indicates the expectation value over the distribution of unitaries obtained from $U_x\k$ in the $k$-th subsystem. Finally, $\Vert M\Vert_2=\sqrt{\Tr[M\ad M]}$ is the Frobenius norm, and $\Delta(\cdot)$ was defined in Theorem~\ref{Theo:subspace}.
\end{theorem}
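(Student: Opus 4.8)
The plan is to reduce the variance to a single subspace twirl and then peel the exact $2$-design contribution away from the expressibility correction. First I would compute the gradient in the commutator form standard for these circuits. Writing $U(\thv)=U_A U_B$ with $U_A,U_B$ as in Eqs.~\eqref{eq:UA}--\eqref{eq:UB}, differentiation gives $\partial_\mu U = -iU_A H_\mu U_B$, so that
\begin{equation}
\partial_\mu C(\thv) = i\,\Tr\big[[H_\mu,O_A]\,\rho_B\big],
\end{equation}
where $O_A=U_A\ad O\,U_A$ and $\rho_B=U_B\rho\,U_B\ad$. Since $\rho\in\HC_k$ and every $U(\thv)\in\lieg$ preserves $\HC_k$, both $O_A$ and $\rho_B$ may be replaced by their reductions onto $\HC_k$; this collapses the whole expression to a trace over $\HC_k$ and lets me write $\partial_\mu C = i\,\Tr[X\,\rho_B\k]$ with $X=[H_\mu\k,O_A\k]$ the anti-Hermitian operator appearing in the statement.

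The next step is to pass to second moments. Because $\Var[Z]\leq \Ebb[Z^2]$ for any random variable, I would bound $\Var_{\thv}[\partial_\mu C]\leq \Ebb_{\thv}[(\partial_\mu C)^2]$ and tensorize the square via $\Tr[A]\Tr[B]=\Tr[A\ot B]$:
\begin{equation}
(\partial_\mu C)^2 = -\Tr\big[(X\ot X)(\rho_B\k\ot\rho_B\k)\big].
\end{equation}
Since the parameters in $U_A\k$ and $U_B\k$ are independent, the average factorizes across the tensor factors as $\Ebb[(\partial_\mu C)^2]=-\Tr[\langle X\ot X\rangle_{U_A\k}\,M^{(2)}_{U_B\k}((\rho\k)\ts)]$, where $M^{(2)}_{U_B\k}$ is the second-moment superoperator of the $U_B$ ensemble restricted to $\HC_k$.

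Now I would separate the ideal from the correction. Writing $M^{(2)}_{U_B\k}=M^{(2)}_{U_H}-\AC^{(2)}_{U_B\k}$ per Eq.~\eqref{eq:superop}, the Haar piece is a twirl over $\U(d_k)$, whose commutant is spanned by $\id$ and the swap $\Sbb$; the standard formula gives $M^{(2)}_{U_H}((\rho\k)\ts)=\alpha\,\id+\beta\,\Sbb$ with $\beta=\Delta(\rho\k)/(d_k^2-1)$. Pairing against $\langle X\ot X\rangle$ and using $\Tr[X]=0$ together with $\Tr[(X\ot X)\Sbb]=\Tr[X^2]$, the identity term drops and the Haar contribution becomes exactly $-\beta\,\Tr[\langle X^2\rangle_{U_A\k}]$, i.e. the $\Delta(\rho\k)/(d_k^2-1)$ term with its $\Tr[\langle X^2\rangle_{U_A\k}]$ factor (and, reassuringly, this reproduces Theorem~\ref{Theo:subspace} when $\AC^{(2)}_{U_B\k}=0$). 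The remaining correction $\Tr[\langle X\ot X\rangle_{U_A\k}\,\AC^{(2)}_{U_B\k}((\rho\k)\ts)]$ I would bound by Cauchy--Schwarz in the Hilbert--Schmidt inner product, controlling it by $\Vert\AC^{(2)}_{U_B\k}((\rho\k)\ts)\Vert_2$ times $\Vert\langle X\ot X\rangle_{U_A\k}\Vert_2$, then using convexity with $\Vert X\ot X\Vert_2=-\Tr[X^2]$ to express the latter through $\Tr[\langle X^2\rangle_{U_A\k}]$. Collecting the two pieces yields $G_B$. Running the identical argument with the roles of $U_A\k$ and $U_B\k$ interchanged --- differentiating instead through $Y=[H_\mu\k,\rho_B\k]$ and averaging $O_A\k$ with $\AC^{(2)}_{U_A\k}$ acting on $(O\k)\ts$ --- produces $G_A$; since both are valid upper bounds, the variance is at most their minimum.

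The main obstacle I anticipate is the bookkeeping in this last separation rather than any single estimate. One has to justify restricting the Haar average to the invariant subspace (comparing the reduced ensemble to the full $\U(d_k)$ twirl, which is precisely what $\AC^{(2)}_{U_x\k}$ quantifies), and then combine the exactly-evaluated $2$-design term with the Cauchy--Schwarz bound on the correction while tracking signs carefully --- note $\Tr[\langle X^2\rangle_{U_A\k}]\leq 0$ since $X$ is anti-Hermitian --- so that the closed-form product in $G_B$, $G_A$ comes out as a genuine upper bound. A secondary point to handle cleanly is that the cross and identity terms vanish, which hinges on $\Tr[H_\mu]=0$ (hence $\Tr[X]=0$) and on whether the reduced ensemble is at least a $1$-design on $\HC_k$; if it is not, the mean-squared term must be carried through the add-and-subtract step rather than dropped at the outset.
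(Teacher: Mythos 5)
Your proposal follows essentially the same route as the paper's proof: write the gradient as a commutator trace restricted to $\HC_k$, add and subtract the Haar second moment on $\UC(d_k)$ via the expressibility superoperator $\AC_{U_B\k}$, evaluate the $2$-design piece exactly (recovering the $\Delta(\rho\k)/(d_k^2-1)$ term computed in the proof of Theorem~\ref{Theo:subspace}), bound the remainder by Hilbert--Schmidt Cauchy--Schwarz, and then run the symmetric argument with $U_A\k$ and $U_B\k$ exchanged to get the minimum of $G_A$ and $G_B$. If anything, your explicit use of $\Var_{\thv}[Z]\leq \Ebb[Z^2]$ (rather than assuming a zero-mean gradient) and your tracking of the sign of $\Tr\left[\left\langle X^2\right\rangle_{U_A\k}\right]$, which is nonpositive since $X$ is anti-Hermitian, is handled more carefully than in the paper's own derivation, which treats that trace as if it were nonnegative in the Cauchy--Schwarz step.
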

\noindent See Appendix~\ref{app:proof_theo_express} for a proof of Theorem~\ref{theo:express-subspace2}.

Theorem~\ref{theo:express-subspace2} generalizes the expressibility results in~\cite{holmes2021connecting} to invariant subspaces. More specifically, Theorem~\ref{theo:express-subspace2}  provides a bound on the variance of the cost function partial derivative $ \partial_{\mu} C(\thv)$ as a function of the ansatz expressibility on the relevant invariant subspace. Hence, similar to the results observed in~\cite{holmes2021connecting}, the more expressible an ansatz is in a subspace, the smaller the gradients will be. Moreover, ansatzes that are very expressible in subspaces with exponentially large dimensions can exhibit barren plateaus as the variance of the cost function partial derivative will vanish exponentially, according to Eq.~\eqref{eq:expressibility}.

\subsection{Uncontrollable and irreducible systems}\label{sec:Irre}

Here, we analyze a case where the DLA is an irreducible representation of some proper subalgebra of $\mf{su}(d)$. Specifically, we consider a toy model ansatz $U(\thv) =\prod_{l=1}^L e^{-i\theta_{lx} S_y}e^{-i\theta_{ly} S_x}$ with generators $\GC=\{S_x,S_y\}$, where $\liea=\{iS_x,iS_y,iS_z\}$ is the spin $S=(d-1)/2$ irreducible representation of $\mf{su}$(2). That is, $[S_j,S_k]=2i\epsilon_{jkl} S_l$ with $\epsilon$ the Levi-Civita symbol and $j,k,l\in \{x,y,z\}$.  We address the task of minimizing a cost function of the form
\begin{equation}\label{eq:costsu2}
    C(\thv)=\bramatket{m}{U\ad(\thv)(S_x+S_y+S_z)U(\thv)}{m}\,,
\end{equation}
where $\ket{m}$ is an eigenstate of $S_z$, i.e., $S_z\ket{m}=m\ket{m}$ with $m\in \{-S,-S+1,\ldots,S-1,S\}$. 

Let us analyze the variance of partial derivative with respect to parameter $\theta_\mu=\theta_{jx}$, i.e., the one corresponding to the generator $S_x$ on the $j$-th layer. Assuming a depth $p$ such that that the distributions $U_A(\thv)$ and $U_B(\thv)$ converge to $\varepsilon$-approximate $2$-designs on the dynamical Lie group $\lieg$ (which in this case is the $d$-dimensional irreducible representation of $\SC\UC(2)$), we are able to explicitly integrate over the Haar measure on $\lieg$ and find the following proposition to hold.
\begin{proposition}\label{prop:varsu2}
Consider the cost function of Eq.~\eqref{eq:costsu2}. Let $\th_{\mu}=\th_{j,x}$, and let us assume that the circuit is deep enough to allow for the distribution of unitaries $U_A$ and $U_B$ to converge to $2$-designs on $\lieg=\SC\UC$(2). Then variance of the cost function partial derivative $\partial_{\mu} C(\thv)=\partial C(\thv)/\partial\theta_\mu$ is
\begin{equation}\label{eq:varsu2}
    \Var_{\thv}[\partial_{\mu} C(\thv)]=\frac{2m^2}{3}\,.
\end{equation}
\end{proposition}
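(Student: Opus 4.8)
The plan is to use the standard decomposition of the circuit about the differentiated gate and then exploit the fact that here the dynamical Lie algebra is only three dimensional. First I would write $U(\thv)=U_A\,e^{-i\theta_\mu H_\mu}\,U_B$, with $U_B$ ($U_A$) collecting the gates before (after) the $\mu$-th gate, and set $\rho=\dya{m}$, $O=\vec n\cdot\vec S$ with $\vec n=(1,1,1)$. Differentiating $C(\thv)=\Tr[O\,U\rho U\ad]$ gives $\partial_\mu C = i\,\Tr[\,[H_\mu,O_A]\,\sigma]$, where $O_A=U_A\ad O\,U_A$ and $\sigma=e^{-i\theta_\mu H_\mu}U_B\rho U_B\ad e^{i\theta_\mu H_\mu}$. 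Because $U_A$ and $U_B$ depend on disjoint blocks of parameters they are independent, and by hypothesis each is distributed as a $2$-design on $\lieg$, the $d$-dimensional irrep of $\SU(2)$, so the parameter average factorizes into two independent Haar integrals over this group.

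The key simplification is representation-theoretic. Since $\liea$ is the spin-$S$ irrep of $\mf{su}(2)$, the three-dimensional span $\{S_x,S_y,S_z\}$ is invariant under conjugation and carries the adjoint (vector) representation: for any $V\in\lieg$ one has $V\ad S_a V=\sum_b R_{ab}(V)\,S_b$ with $R(V)\in\mathrm{SO}(3)$, and $R$ pushes the Haar measure on $\SU(2)$ to the Haar measure on $\mathrm{SO}(3)$. I would use this three times: (i) $O_A=\vec m_A\cdot\vec S$ with $\vec m_A=R(U_A)\trp\vec n$, a vector of fixed length $\Vert\vec n\Vert=\sqrt3$ whose direction is uniform on the unit sphere $\Sbb^2$; (ii) the $\mf{su}(2)$ commutation relations turn $[H_\mu,O_A]$ back into an algebra element proportional to $(\hat a\times\vec m_A)\cdot\vec S$, where $\hat a$ is the coordinate axis of the generator $H_\mu$; and (iii) absorbing $e^{-i\theta_\mu H_\mu}$ into $U_B$ (left-invariance keeps it a $2$-design), $\langle\vec S\rangle_\sigma=\Tr[\vec S\,\sigma]=m\,\vec r$ with $\vec r$ the relevant column of the corresponding rotation, uniform on $\Sbb^2$ and independent of $\vec m_A$ (here I use $\langle S_x\rangle_\rho=\langle S_y\rangle_\rho=0$ and $\langle S_z\rangle_\rho=m$).

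Collecting these, $\partial_\mu C \propto m\,(\hat a\times\vec s)\cdot\vec r$, with $\vec s=\vec m_A/\sqrt3$ and $\vec r$ independent and uniform on $\Sbb^2$. A point worth highlighting is that $\partial_\mu C$ is only bilinear in $(U_A,U_A\ad)$ and in $(U_B,U_B\ad)$, so $(\partial_\mu C)^2$ involves only second moments of each block; this is precisely why the $2$-design hypothesis suffices and no higher design is needed. The averages are then elementary: $\mathbb{E}[\vec r]=0$ gives $\mathbb{E}[\partial_\mu C]=0$, while $\mathbb{E}[\vec r\vec r\trp]=\tfrac13\,\id$ and $\mathbb{E}[\Vert\hat a\times\vec s\Vert^2]=\mathbb{E}[1-s_a^2]=\tfrac23$ yield
\[
   \Var_{\thv}[\partial_{\mu}C(\thv)] = \frac{2m^2}{3}
\]
once the numerical prefactor coming from the commutator is inserted.

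I expect the main obstacle to be the representation-theoretic reduction itself: one must recognize that the three-dimensionality of $\liea$ confines $O_A$, the commutator $[H_\mu,O_A]$, and $\langle\vec S\rangle_\sigma$ to the adjoint representation, which is what collapses the Haar integrals over the $d$-dimensional irrep into trivial averages over $\Sbb^2$ and simultaneously lowers the required moment order to two. A secondary point to watch is the normalization convention for the generators, since the overall prefactor (and hence the constant $2/3$) depends on whether one uses $[S_j,S_k]=i\epsilon_{jkl}S_l$ or a rescaled version. An alternative route, brute-force Weingarten/character integration directly on the spin-$S$ irrep, would also work but is considerably messier and obscures why second moments are enough.
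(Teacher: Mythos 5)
Your proposal is correct, and it reaches Eq.~\eqref{eq:varsu2} by a genuinely leaner route than the paper's. The two arguments share one key ingredient: the paper also exploits the adjoint (vector) representation, writing $U S_{\hat{n}} U\ad = S_{\hat{n}'}$ with $\hat{n}'=R\hat{n}$, $R\in\mathrm{SO}(3)$, to handle the conjugation of $O=S_x+S_y+S_z$ by $U_A$ and to express $X=[S_x,U_A\ad O U_A]$ through the entries of an Euler rotation matrix. Where you diverge is on the state side: the paper expands $U_B\dya{m}U_B\ad$ in Wigner small-$d$ matrices, integrates over Euler angles, and grinds through Clebsch--Gordan sums to obtain the coefficients $c_1=\tfrac{2}{3}m^2$ and $c_3=\tfrac{1}{3}m^2$, before a second explicit Euler-angle integration over $U_A$ assembles $v_1+v_2+v_3=\tfrac{2}{3}m^2$. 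You avoid that machinery entirely by observing that, since $[H_\mu,O_A]$ is itself an element of the three-dimensional algebra, the evolved state $\sigma$ enters only through its polarization vector $\Tr[\vec{S}\sigma]=m\,\vec{r}$ with $\vec{r}$ uniform on $\Sbb^2$ and independent of $\vec{m}_A$; the whole variance then reduces to the elementary sphere averages $\mathbb{E}[\vec{r}\vec{r}\trp]=\tfrac{1}{3}\id$ and $\mathbb{E}[\Vert\hat{a}\times\vec{m}_A\Vert^2]=2$, giving $m^2\cdot\tfrac{1}{3}\cdot 2=\tfrac{2m^2}{3}$. Your version buys brevity, makes manifest why the $2$-design hypothesis (second moments only) suffices, and exposes the origin of the $m^2$ factor as the squared length of the initial polarization; the paper's Wigner/Clebsch--Gordan computation is heavier but is the tool one would actually need if the observable or the commutator did not stay inside $\liea$, in which case the polarization shortcut is unavailable. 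Your caveat about normalization is also well placed: the paper's appendix uses $S_x=\tfrac{1}{2}(S_++S_-)$, $S_z\ket{m}=m\ket{m}$, i.e.\ $[S_j,S_k]=i\epsilon_{jkl}S_l$, under which your commutator prefactor is exactly unity and the constant $\tfrac{2}{3}$ comes out on the nose (the main text's $[S_j,S_k]=2i\epsilon_{jkl}S_l$ convention is inconsistent with that appendix computation, so your flag identifies a real discrepancy in the paper rather than a defect in your argument).
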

\noindent See Appendix~\ref{app:proof_su2} for a proof of Proposition~\ref{prop:varsu2}.

Proposition~\ref{prop:varsu2}  shows that the variance of the cost function again depends on the input state $\ket{m}$, which is a similar result to the one obtained in Theorem~\ref{Theo:subspace}. Moreover, here $\Var_{\thv}[\partial_{\mu} C(\thv)]$ can in fact be as large as $d^2$. This is due to the fact that the ``size'' (the difference between maxima and minima) of the landscape also grows with $d$. One can get rid of this effect by considering a normalized cost instead, $\widetilde{C}(\thv)=C(\thv)/S$, where the \textit{ad-hoc} factor $1/S$ guarantees that the landscape is $|C(\thv)|\leq 1$ for all values of $d$. The variance of such normalized landscape is $\Var_{\thv}[\partial_{\mu} \widetilde{C}(\thv)]=\frac{2}{3}\frac{m^2}{S^2}=\frac{8}{3}\frac{m^2}{(d-1)^2}$, that is, vanishes exponentially for initial states with $|m|\in\OC(\poly(\log(d)))$. Similar to the subspace controllable results in Corollary~\ref{cor:HXXZ}, here the choice of initial state is again crucial as it can lead to the cost function exhibiting barren plateaus. 

\subsection{General case: linking gradient scaling to the dimension of the Lie algebra}\label{sec:conjecture}

In this section we note that the dimension of the DLA can be linked to the scaling of the variance of the cost function partial derivatives. This opens up the possibility of diagnosing the existence of barren plateaus of uncontrollable systems by analyzing the scaling of their DLAs.
First, let us remark that a key aspect of the toy model in Section~\ref{sec:Irre} is that the dimension of the DLA is $\dim(\liea)=3$. This is independent of the dimension $d$ of the Hilbert space it acts on. Moreover, as shown in Eq.~\eqref{eq:varsu2} the variance is also independent of $d$ as it does not present the typical dimensional-dependent factor in the denominator that one usually obtains when integrating over unitary 2-designs (see Eq.~\eqref{Eq_th2} in Theorem~\ref{Theo:subspace}). 

For instance, when the system is controllable, $\dim(\liea)=d^2-1=2^{2n}-1$, and thus the dimension of the DLA is exponentially growing with the system size $n$. Concomitantly, one finds that $\Var_{\vec{\theta}}[\partial_{\mu} C(\thv)] =\frac{2d}{(d^2-1)^2} \,\,g(H_{\mu},O,\rho)$~\cite{mcclean2018barren,cerezo2020cost}, and hence the variance is exponentially vanishing with the system size. A similar result is obtained in the subspace controllable case (see Theorem~\ref{Theo:subspace}) where the variance is of the form $\frac{2d_k}{(d_k^2-1)^2}g(H_{\mu}^{(k)},O^{(k)},\rho\k)$. 

These facts have led us to conjecture that the dimension of the DLA plays a key role in determining the presence or absence of barren plateaus in the cost function landscape. More specifically, for PSAs with sufficient depth (i.e., with a depth such that the distribution of unitaries generated by $U(\thv)$ has converged to the Haar measure in the Lie group $\lieg$), we have noted that the following conjecture appears to hold.
\begin{conjecture}\label{conjecture}
Let the state $\rho$ belong to a subspace $\HC_k$ associated with a subspace DLA $\liea_k$ (or sub-DLA, the subrepresentation in $\liea$ where $\rho$ has support on). Then,  the scaling of the variance of the cost function partial derivative is inversely proportional to the scaling of the dimension of the DLA, i.e.
\begin{equation}\label{Eq:conjecture}
    \Var_{\vec{\theta}}[\partial_{\mu} C(\thv)]\in\OC\left(\frac{1}{\poly(\dim(\liea_k))}\right)\,.
\end{equation}
\end{conjecture}

The implications of Conjecture~\ref{conjecture} are as follows. First, it means that systems with a sub-DLA $\liea_k$~\footnote{Let us note that one should look at the sub-DLA instead of the full DLA, since when there are symmetries and the initial state belongs to one or multiple invariant subspaces, the dynamics is contained in those.} that is polynomially growing with the system size can exhibit gradients that vanish only polynomially, and hence may not exhibit barren plateaus. Conversely, systems with a sub-DLA that is exponentially growing with the system size would exhibit gradients that vanish exponentially with the system size, hence exhibiting barren plateaus. Here, we remark that systems with sub-DLAs that are not exponentially growing may still have barren plateaus which are not related to the dimension of the DLA. For instance, if the cost function is global, the system can still exhibit barren plateaus even with trivial ansatzes that do not a have exponentially growing dimension of the DLA~\cite{cerezo2020cost}.

Using Conjecture~\ref{conjecture}, one could diagnose gradient scalings by determining the size of the Lie algebra of a given ansatz $U(\thv)$. This comes at the cost of taking the set of generators $\GC$ and computing the DLA. While numerical methods (as in Algorithm~\ref{alg:lie}) can prove valuable insights for small system sizes, these algorithms will generally scale poorly in the number of qubits. Hence, performing a theoretical analysis of the DLA (similar to the one performed in Proposition~\ref{prop:controllable-gen})  is a preferable method. 

Here we remark that there are simple (yet pathological) cases that show that Eq.~\eqref{Eq:conjecture} does not preclude the possibility that systems with algebras that grow polynomial with the system size may still exhibit barren plateaus. For instance, consider Eq.~\eqref{Eq_th2}, where $\rho$ belongs to a subspace with polynomially growing algebra: $d_k\in\OC(\poly(n))$. Then, note that if the input state is exponentially close to being maximally mixed on $\HC_k$ (i.e., if $\Delta(\rho\k)\in\OC(1/2^n)$)  one can easily verify that  the system will exhibit a barren plateau according to Definition~\ref{def:BP} as the cost function partial derivative will be exponentially vanishing. Here, the barren plateau arises not from the dimension of the DLA being exponentially large but rather from trying to train a VQA on an input state that is exponentially close to being maximally mixed. A similar result can be found if $H_{\mu}^{(k)}$ is exponentially close to the identity. Hence, we remark that  Conjecture~\ref{conjecture} does not imply that systems with polynomially growing algebras are exempt from having barren plateaus, as cases where $\rho$ ($H_{\mu}^{(k)}$) is exponentially close  to being maximally mixed (the identity) will naturally be hard to train from the definition of the cost function in Eq.~\eqref{eq:cost}.

We finally note that to further support the claim in Conjecture~\ref{conjecture}, we present in the following section  results obtained from numerically computing the scaling of the variance of the cost function partial derivatives for systems with DLAs having several different dependencies on the number of qubits. As discussed in Section~\ref{sec:results}, we see that the the result in Conjecture~\ref{conjecture} holds true for all cases considered, as in these cases the scaling of the variance of the cost function partial derivative is inversely proportional to the scaling of the dimension of the DLA. In addition, based on our conjecture one can accurately make predictions regarding whether a given modification to an ansatz (adding a new generator to $\GC$ by introducing a new unitary in each layer) might improve or be detrimental to the trainability of the parameters.

\section{NUMERICAL SIMULATIONS}\label{sec:results}

In this section we present results obtained by numerically computing the variance of the cost function partial derivatives for systems with different PSAs, and with DLAs of dimensions with different scaling. In particular, we consider systems that are controllable, subspace controllable, and subspace uncontrollable. As we show, in all cases Conjecture~\ref{conjecture} is verified. Finally, we refer the reader to Appendix \ref{app:proof_su2} for a numerical study of the toy model in Section~\ref{sec:Irre}, where  $\liea$ and $\lieg$ are, respectively, the $d$-dimensional irreducible representations of $\mf{su}$(2) and $\SC\UC$(2).

\begin{figure}[t]
\includegraphics[width=.9\columnwidth]{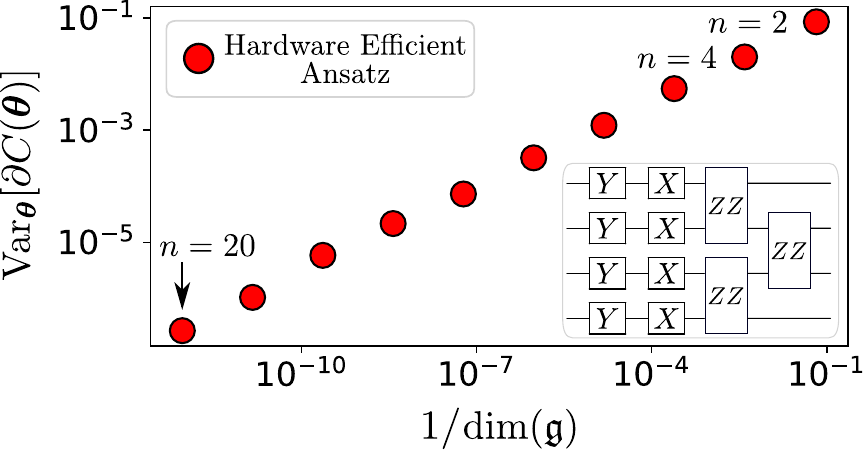}
\caption{\textbf{Variance of cost function partial derivatives versus inverse of the DLA dimension for a controllable system.} The layered Hardware Efficient Ansatz (shown in the inset for $n=4$) is a controllable system with generators given  in Proposition~\ref{prop:controllable-gen}. Then, as shown in Proposition~\ref{prop:controllable}, the cost function of Eq.~\eqref{eq:cost_HEA} exhibits a barren plateau and hence $\Var_{\thv}[\partial_{\mu} C(\thv)]\in\OC(1/2^n)$. Moreover,  since the system is controllable one finds that $\dim(\liea)=4^{n}-1$. Hence, as shown in the plot, Conjecture~\ref{conjecture} holds for controllable systems, since  the dependence of $\Var_{\thv}[\partial_{\mu} C(\thv)]$ versus $1/\dim(\liea)$ is linear on a log-log scale. }
\label{fig:HEA}
\end{figure}

\subsection{Controllable systems}

\begin{figure*}[th]
\includegraphics[width=1\linewidth]{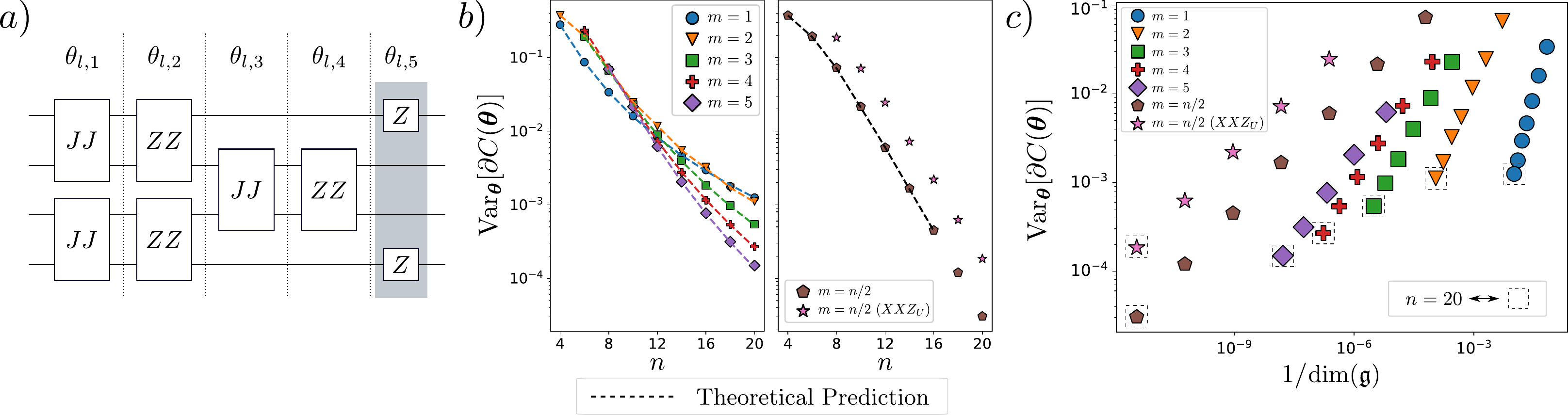}
\caption{\textbf{Numerical results for the $XXZ$ model.} a) Schematic illustration of a single layer of the subspace controllable periodic structure  ansatz for the set of generators $\GC_{XXZ}$ in Eq.~\eqref{eq:gcXXZ} for $n=4$ qubits.  Removing the unitary in the shaded area leads to the uncontrollable periodic structure HVA generated by $\GC_{XXZ_U}$.  b) Left panel: Variance of the partial derivative of the cost in Eq.~\eqref{eq:costxxz} versus the number of qubits $n$. The different markers correspond to initial states with a number of excitations $m=1,2,\ldots,5$ (left) with an ansatz generated by $\GC_{XXZ}$, and to  $m=n/2$ (right) with an ansatz generated by $\GC_{XXZ}$ and by $\GC_{XXZ_U}$. Here we recall that a state   $\ket{\psi}$ has $m$ excitations if it can be expressed as a linear combination of computational basis states with Hamming weight $m$. The dashed lines represent the theoretical prediction of Eq.~\eqref{Eq_th2}. In both cases the plot is shown in a log-linear scale. For $m=1,2,\ldots,5$ the variance is polynomially vanishing with $n$, while for $m=n/2$ the variance is exponentially vanishing with $n$.  c) Variance of the cost function partial derivative versus  $1/\dim(\liea)$. The plot is shown in a log-log scale. For each value of $m$ we see a linear dependence, which verifies Conjecture~\ref{conjecture}.}
\label{fig:dipXXZ}
\end{figure*}

First, let us remind that when the system is controllable, $U(\thv)$ forms a $2$-design (see Proposition~\ref{prop:controllable}). In this case the scaling of $\Var_{\thv}[\partial_{\mu} C(\thv)]$ has been widely analyzed in the literature (see for instance~\cite{mcclean2018barren,holmes2021connecting}). Controllable systems, as previously discussed, satisfy Conjecture~\ref{conjecture}. In Fig.~\ref{fig:HEA} we show the variance of cost function partial derivatives as a function of $1/\dim(\liea)$ for the cost function 
\begin{equation}\label{eq:cost_HEA}
    C(\thv)=\bramatket{\vec{0}}{U\ad(\thv)(Z_1\otimes Z_2)U(\thv)}{\vec{0}}\,.
\end{equation}
Here, $U(\thv)$ is  a layered Hardware Efficient ansatz (see the circuit in the inset of  Fig.~\ref{fig:HEA}) with $200$ layers and where $\ket{\vec{0}}=\ket{0}^{\otimes n}$. For each value of $n=2,4,\ldots,20$, the variance was computed by randomly initializing $1000$ sets of parameters.  Since this system is controllable (as proved in Proposition~\ref{prop:controllable}), then $\dim(\liea)=d^2-1=4^{n}-1$. In Fig.~\ref{fig:HEA} we see that, as expected, the variance is a polynomial function of $1/\dim(\liea)$ (indicated by a straight line in a log-log scale).

\subsection{Reducible systems}

\subsubsection{The XXZ model}\label{section:XXZ}

Let us first consider the task of finding the ground state energy of the XXZ Hamiltonian $H_{XXZ}$ of Eq.~\eqref{eq:HXXZ}. First, let us notice that $\GC_{XXZ_U}$, the uncontrollable set of generators of Eq.~\eqref{eq:genHVAXXZ}, has two symmetries: magnetization and parity. Hence, the DLA is reducible, i.e. a sum of irreducible sub-representations $\liea_{XXZ_U} = \bigoplus_{\substack{m=0\\ \sg=\pm}}^n \liea_{m,\sigma} \subseteq \mf{u}(d_{m,\sigma})$, where the indices $m$ and $\sg$ indicate number of excitations and parity, respectively (see Appendix \ref{App:XXZ} for details). Notably, the system can be rendered subspace controllable (while preserving the invariant subspace structure) by introducing an additional generator consisting of local fields at the ends of the chain \cite{poggi2016optimal,larocca2021krylov}
\begin{equation}\label{eq:gcXXZ}
     \GC_{XXZ}=\GC_{XXZ_U}\cup \left\{\Z_1+\Z_N\right\}\,.
\end{equation}
The new set $\GC_{XXZ}$ generates a DLA that is full rank on each of the invariant subspaces, i.e.  $\liea_{\rm XXZ} = \bigoplus_{\substack{m=0\\ \sg=\pm}}^n \mf{u}(d_{m,\sigma})$. In Figure~\ref{fig:dipXXZ}(a), we sketch a single layer of the ansatz generated by $\GC_{XXZ}$. Note that upon the removal of the unitary generated by $Z_1+Z_N$ (indicated by a shaded area), one recovers the HVA ansatz with generators $\GC_{XXZ_U}$ proposed in Ref.~\cite{wiersema2020exploring}.

Figure~\ref{fig:dipXXZ}(b) shows numerical results obtained by computing $\Var_{\thv}[\partial_{\mu} C(\thv)]$ for the cost function
\begin{equation}\label{eq:costxxz}
    C(\thv)=\bramatket{\psi_{m,+}}{U\ad(\thv)H_{XXZ}U(\thv)}{\psi_{m,+}}/n\,,
\end{equation}
with $J=1$. Here, $U(\thv)$ is the HVA ansatz generated by $\GC_{XXZ}$ (see Eq.~\eqref{eq:gcXXZ}) with $L=6n$ layers, and $\ket{\psi_{m,+}}$ is an initial state with $m$ excitations and even parity $\sigma=+$ (see Appendix \ref{App:numerical} for details).  For each system system size $n=2,4,\ldots,20$, and for each value of $m$, we computed the variance with respect to $\theta_{\frac{L}{2},2}$  by randomly initializing each $\thv_{pq}\in[0,2\pi]$ and averaging over $9500$ sets of parameters (for $n=20$ we averaged over $2700$ sets of parameters). 

In Figure~\ref{fig:dipXXZ}(b, left) we see that for $m=1,2,\ldots, 5$ the variance of the cost function partial derivative is polynomially decreasing with $n$, indicating that the cost function does not exhibit a barren plateau for initial states with fixed number of excitations. However, in the case $m=n/2$ (see Figure~\ref{fig:dipXXZ}(b, right)), one can observe that $\Var_{\thv}[\partial_{\mu} C(\thv)]$ vanishes exponentially. In addition, in Figure~\ref{fig:dipXXZ}(b) we also show the curves for $\Var_{\thv}[\partial_{\mu} C(\thv)]$ obtained from the analytical result in Eq.~\eqref{Eq_th2} of Theorem~\ref{Theo:subspace}. The agreement between theoretical and numerical results indicates that, already for the linear depths used in the experiments, the ansatz is well converged to a $2$-design. Hence, the results in Theorem~\ref{Theo:subspace} suggest that  the system  will exhibit a barren plateau when initialized on any subspace where $d_m\in\OC(2^n)$, for example, in the case of $m=n/2$ excitations. 

In addition, Figure~\ref{fig:dipXXZ}(b, right) shows the scaling of the variance for the PSA generated by $\GC_{XXZ_U}$, with an initial state with $m=n/2$. As previously noted, this case is not controllable and hence Theorem~\ref{Theo:subspace} does not hold. However, the gradient scaling of the cost function can still be diagnosed using the expressibility result of Theorem~\ref{theo:express-subspace2}. First, we note that the variance values for the uncontrollable case are larger than the ones for the controllable case. This result is in accordance with the fact that the smallest variances are reached with the higher expressibilities. Still, despite the system not being controllable, we find that the cost function still exhibits a barren plateau as the cost vanishes exponentially with $n$.

In Figure~\ref{fig:dipXXZ}(c) we show that  for all subspace controllable cases considered, Conjecture~\ref{conjecture} holds. Specifically, we have shown $\Var_{\thv}[\partial_{\mu} C(\thv)]$ as a function of $  1/\dim(\liea)$, and we see a linear dependence in a log-log scale. This is true both for the exponentially growing algebras ($m=n/2$) as well as for the polynomially growing algebras ($m=1,2,\ldots,5$). Moreover, we see that the Conjecture is verified on the subspace uncontrollable case of $\GC_{XXZ_U}$ (pink stars), where the dimension of the DLA is exponentially growing, and concomitantly, the variance of the cost function partial derivative is exponentially suppressed.

\subsubsection{The Ising Model}\label{section:Ising}

In this section we present results obtained for numerically simulating the use of a PSA to find the ground state of the Ising model. Specifically, consider the Hamiltonian of the one-dimensional Transverse Field Ising Model (TFIM) 
\begin{equation}\label{eq:TFIM}
    H_{\TFIM}=\sum_{i=1}^{n_f} Z_iZ_{i+1}+h_x\sum_{i=1}^{n} X_i\,,
\end{equation}
where $n_f=n-1$ in the case of open boundary conditions, and $n_f=n$ in the periodic boundary conditions case (where $Z_{n+1}\equiv Z_1$). Then, as shown in Figure~\ref{fig:Ising}(a, left), the ansatz is generated by the set
\begin{equation}\label{eq:genTFIM}
    \GC_{\TFIM}= \left\{\sum_{i=1}^{n_f} Z_iZ_{i+1},\sum_{i=1}^{n} X_i\right\}\,.
\end{equation}
Note that the PSA generated by $\GC_{\TFIM}$ is in fact the QAOA employed for solving the MAXCUT problem on a 2-regular graph~\cite{farhi2014quantum,hadfield2019quantum}.

As discussed in Appendix~\ref{App:XXZ}, the generators in $\GC_{\TFIM}$ (with open boundary conditions) have two symmetries: parity symmetry $\Pi$, and the so-called $\Zbb_2$ symmetry $\Pi_{\Zbb_2}$ (representing an invariance under a global flip in the qubits). The Hilbert space is broken into four invariant subspaces, $\HC=\bigoplus_{\sigma,\sigma'} \HC_{\sigma,\sigma'}$, where $\sigma,\sigma'=\pm1$ respectively spanning the eigenvalues of $\Pi$ and $\Pi_{\Zbb_2}$, and where $\dim(\HC_{\sigma,\sigma'})$ is exponentially growing, i.e., $\dim(\HC_{\sigma,\sigma'})\in\OC(2^n)$. In turn, the DLA decomposes as $\liea_{\rm TFIM} = \bigoplus_{\substack{\sigma,\sigma'}}\liea_{\sigma,\sigma'} \subseteq \mf{u}(d_{\sigma,\sigma'})$. However, employing Algorithm~\ref{alg:lie} we computed the dimension of the DLA generated by $\GC_{\TFIM}$ and we found that it only grows polynomially with $n$. That is, we obtain that
\begin{equation}\label{eq:dimTIM}
    \dim(\liea_{\rm TFIM})=n^2\,.
\end{equation}
Clearly, this implies that $\dim(\liea_{\sigma,\sigma'})\leq n^2$ for all $\sigma,\sigma'$.

Note that the set $\{ \id,\Pi\}$ constitutes a representation of $S_2$, the symmetric group of two elements, under which the open-boundary-condition TFIM generators are invariant. Instead, the TFIM generators with closed boundary conditions are invariant under a representation of $C_n$, the cyclic group of $n$ elements. As discussed in Appendix~\ref{App:XXZ}, the dimension of the DLA now grows linearly instead of quadratically.

\begin{figure}[t]
\centering
\includegraphics[width=1\columnwidth]{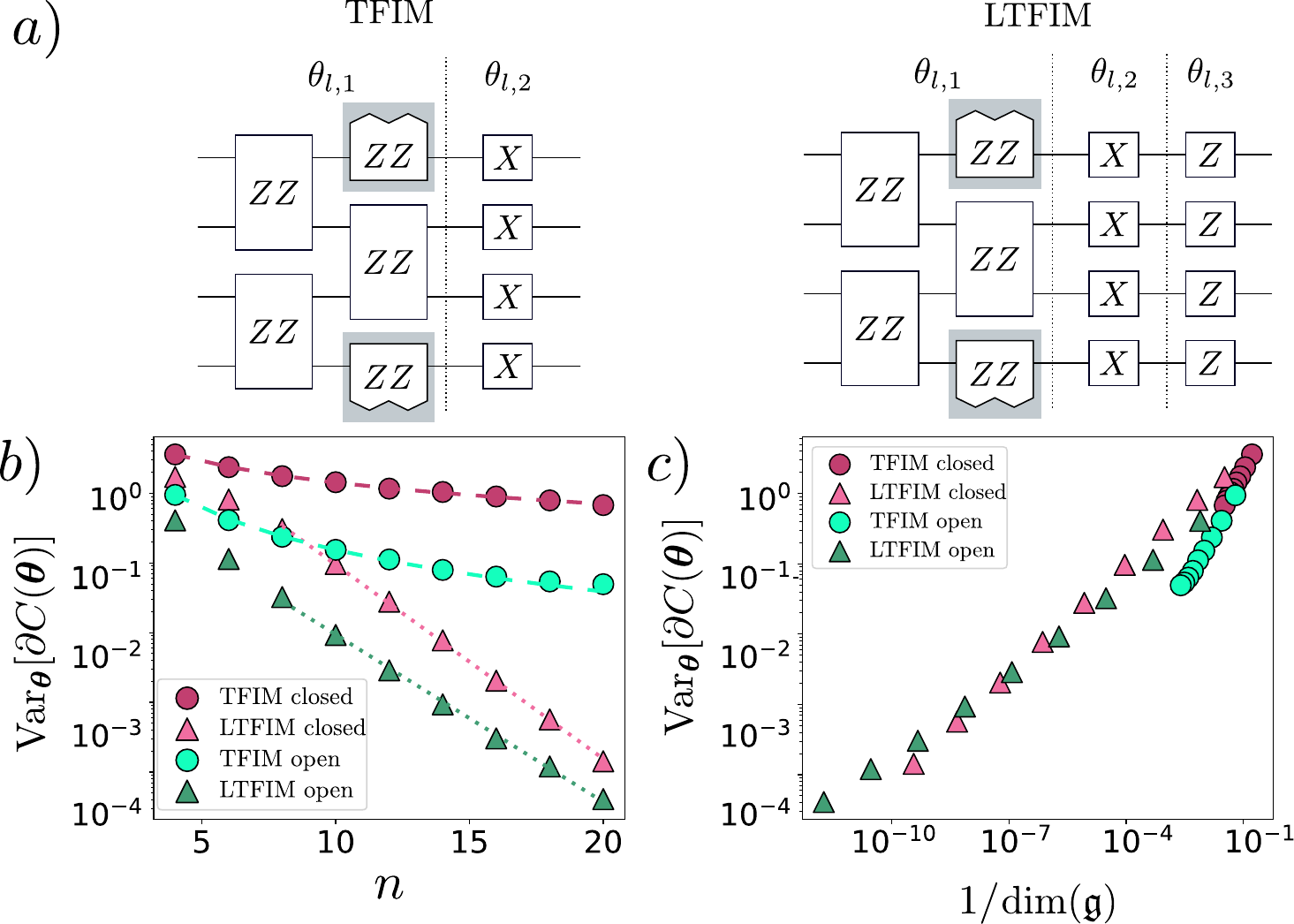}
\caption{\textbf{Numerical results for the TFIM and LTFIM models.} a) Schematic illustration of a single layer of the PSA for the sets of generators $\GC_{\TFIM}$ in Eq.~\eqref{eq:genTFIM} (left), and $\GC_{\LTFIM}$ in Eq.~\eqref{eq:genLTFIM} (right) for $n=4$ qubits. By adding (removing) the gates in the shaded one obtains the ansatz in Eq.~\eqref{eq:genTFIM} with periodic (open) boundary conditions.   b) Variance of the cost function partial derivative of the cost function in Eq.~\eqref{eq:costTFIM} versus the number of qubits $n$ for each ansatz. The dashed (dotted) lines indicate the best polynomial (exponential) fit. The plot is shown in a log-linear scale. c) Variance of the cost function partial derivative versus $1/\dim(\liea)$. The plot is shown in a log-log scale. }
\label{fig:Ising}
\end{figure}

Similarly to what happened in the $XXZ$ case, we can turn the TFIM model subspace controllable upon the introduction of an extra generator. Consider the set
\begin{equation}\label{eq:genLTFIM}
    \GC_{\rm LTFIM} = \GC_{\rm TFIM}\bigcup\left\{\sum_{i=1}^{n} Z_i\right\},
\end{equation}
leading to the PSA in Figure~\ref{fig:Ising}(a, right). The set $\GC_{\rm LTFIM}$ can also be regarded as being constituted by the individual terms in the one-dimensional Longitudinal and Transverse Field Ising Model (LTFIM) Hamiltonian
\begin{equation}\label{eq:LTFIM}
    H_{\rm LTFIM}=\sum_{i=1}^{n_f} Z_iZ_{i+1}+h_x\sum_{i=1}^{n} X_i+h_z\sum_{i=1}^{n} Z_i\,.
\end{equation}
In addition, the ansatz generated by $\GC_{\rm LTFIM} $ is also a QAOA-type ansatz where an additional mixer has been added. 

In the case of open boundary conditions, the $\sum_{i=1}^{n} Z_i$ term breaks the $\Zbb_2$ symmetry, and thus the set $\GC_{\rm LTFIM}$ only conserves the parity symmetry, $\liea_{\rm LTFIM} = \bigoplus_{\substack{\sigma}}\liea_{\sigma}$. Using Algorithm~\ref{alg:lie} we find that the DLA is full rank on both $\sigma=\pm1$ parity subspaces, and hence
\begin{equation}
    \dim(\liea_{\sigma})\in\OC(2^{2n})\,.
\end{equation}
Similarly, in the closed boundary condition case, one can also find that the dimension of the DLA grows exponentially with $n$. This is an example where we show how a simple modification to the ansatz (adding a layer generated by $\sum_{i=1}^{n} Z_i$)  can greatly change the dimension of the DLA, and, as discussed below, such a small change can greatly affect the trainability of the cost function. 

In Figure~\ref{fig:Ising}(b) we show results for numerically computing $\Var_{\thv}[\partial_{\mu} C(\thv)]$ for the cost function
\begin{equation}\label{eq:costTFIM}
    C(\thv)=\bra{+}^{\otimes n}{U\ad(\thv)H_{\TFIM} U(\thv)}\ket{+}^{\otimes n}/n\,,
\end{equation}
where $U(\thv)$ is the PSA generated by the set  $\GC_{\TFIM}$ of Eq.~\eqref{eq:genTFIM} with $L=12n$ layers for open boundary conditions, and $L=6n$ for closed boundary conditions. For each value of $n=4,6,\ldots,18$ we computed the variance by picking $4400$ random sets of parameters, while for $n=20$ we picked $1000$ random intializations. In all cases the partial derivative was taken with respect to $\theta_{\frac{L}{2},2}$.  We see from Figure~\ref{fig:Ising}(b) that the variance of the cost partial derivative vanishes polynomially with $n$ for both open and closed boundary conditions, and hence the system does not exhibit a barren plateau. Then, as shown in Figure~\ref{fig:Ising}(c), once again, Conjecture~\ref{conjecture} holds for both open and closed boundary conditions: $\Var_{\thv}[\partial_{\mu} C(\thv)]$ and $\dim(\liea_{\TFIM})$ respectively vanish, and grow, polynomially with $n$. 

Moreover, in Figure~\ref{fig:Ising}(b) we also depict results obtained by computing $\Var_{\thv}[\partial_{\mu} C(\thv)]$ for the  LTFIM ansatz, using the same cost function of Eq.~\eqref{eq:costTFIM}. Now,  $U(\thv)$ is the PSA generated by the set  $\GC_{\LTFIM}$ in Eq.~\eqref{eq:genTFIM} with $L=6n$ layers. Using the same number of samples than for the TFIM case, we find that $\Var_{\thv}[\partial_{\mu} C(\thv)]$  vanishes exponentially with $n$ for both open and closed boundary conditions, and hence the cost exhibits a barren plateau. We see in Figure~\ref{fig:Ising}(c) that  Conjecture~\ref{conjecture} also holds for the LTFIM ansatzes with open and closed boundary conditions, as this time $\Var_{\thv}[\partial_{\mu} C(\thv)]$ vanishes exponentially with $n$, while $\dim(\liea_{\TFIM})$ grows exponentially with $n$. 

It is worth noting that, as discussed before, and  as shown in Figure~\ref{fig:Ising}(a), the difference between the TFIM and the LTFIM ansatz is given by an additional unitary in each layer (parametrized by a single angle). However, despite this simple difference, we find the variance of the cost function have different scaling, as one cost exhibits a  barren plateau while the other one does not exhibit a barren plateau. 

\subsubsection{ Erd\"{o}s–R\'{e}nyi model}

\begin{figure}[t]
\centering
\includegraphics[width=1\columnwidth]{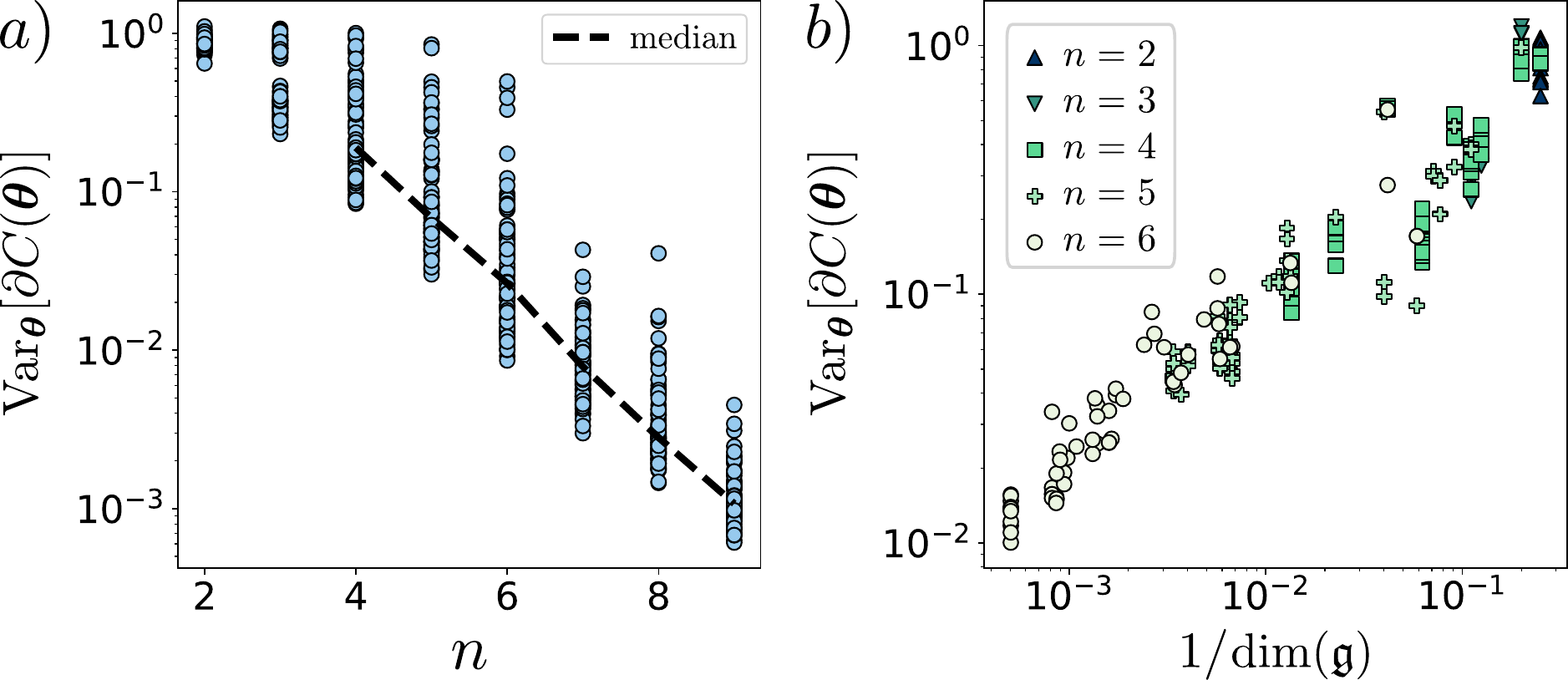}
\caption{\textbf{Numerical results for the  Erd\"{o}s–R\'{e}nyi model.} a) Variance of the cost function partial derivative of the cost function in Eq.~\eqref{eq:costErdos} versus the number of qubits $n$. The dashed  line indicates the medians across graphs computed for each value of $n$. The plot is shown in a log-linear scale. b) Variance of the cost function partial derivative versus $1/\dim(\liea)$. The plot is shown in a log-log scale. }
\label{fig:Erdos}
\end{figure}

Let us now consider the task of solving MAXCUT problems with a QAOA ansatz. Here, we recall that MAXCUT is specified by a graph $G=(V,E)$ of nodes $V$ and edges $E$, such that one seeks to determine a partition of the nodes of $G$ into two sets that maximize the number of edges connecting nodes between sets. The MAXCUT  Hamiltonian is given by

\begin{equation}
    H_{\rm ER}=-\frac{1}{2}\sum_{ij \in E}(\id-Z_i Z_j)\,,
\end{equation}
and we consider the standard QAOA ansatz generated by
\begin{equation}\label{eq:genErdos}
    \GC = \left\{\sum_{i=1}^{n} X_i,H_{\rm ER}\right\}.
\end{equation}
Let us analyze the variance of the partial derivative of the cost 
\begin{equation}\label{eq:costErdos}
    C(\thv)=\bra{+}^{\otimes n}{U\ad(\thv)H_{\rm ER} U(\thv)}\ket{+}^{\otimes n}/|E|\,,
\end{equation}
where we use $|E|$ (the number of edges in the graph) to normalize the cost function. For each value of $n=2,3,\ldots,9$ we generated 90 graphs according  to the Erd\"{o}s–R\'{e}nyi model~\cite{erdds1959random}. That is, each graph $G$ was chosen uniformly at random from the set of all graphs of $n$ nodes. Then, for each graph we sampled $3000$ random initializations with $L=12 n$ layers and we took the partial derivative with respect to the angle in the $L/2$-th layer associated to the (mixer) Hamiltonian $\sum_{i=1}^{n} X_i$. In Fig.~\ref{fig:Erdos}(a) we show results of  $\Var_{\thv}[\partial_{\mu} C(\thv)]$ versus the number of qubits. Here we can see that, as expected, even for fixed $n$ different graphs will have different value of the variance. However, by computing the median variance for each system size we found that the scaling of the median is exponentially decaying with the system size. While this result does not preclude the possibility of  generating graphs that will not have a barren plateau, it suggests that  uniform sampling of graphs from the Erd\"{o}s–R\'{e}nyi model will lead to the landscape for a typical graph having a barren plateau. 
Then, as shown in Fig.~\ref{fig:Erdos}(b), we compute the dimension of the DLA for each graph, and find that Conjecture~\ref{conjecture} is confirmed, as the relation between $\Var_{\thv}[\partial_{\mu} C(\thv)]$  and $\dim(\liea)$ is linear in a log-log-scale.

\section{DISCUSSION}\label{sec:discussions}

In this work, we have explored a fundamental connection between VQAs and the theory of QOC with the purpose of analyzing the existence of barren plateaus in a family of periodic-structured ansatzes which contain, as special cases, the QAOA and the HVA, among other widely used ansatzes in variational quantum algorithms and quantum machine learning. Our results show that one can diagnose the presence of barren plateaus in the cost function landscape by analyzing the degree of controllability of the system, characterized by the dimension of the dynamical Lie algebra (DLA) obtained from the set of generators of the ansatz.

Our main results are the following. First, we show that if the DLA is full rank, i.e. if  the system is controllable, then the cost function exhibits a barren plateau. This follows from the fact that, as we show, controllable systems converge to $2$-designs. Here, we also derive an expression relating the depth required for a given ansatz to become an $\varepsilon$-approximate two-design with the expressibility of one of its layers.

We then consider systems with symmetries, where the Hilbert space partitions into invariant subspaces associated with the different eigenspaces. In this context, we show that when the system is subspace controllable, the existence of barren plateaus crucially depends on the input state to the VQA. For example, the cost might be trainable for certain input states, but might exhibit a barren plateau for others. Specifically, our results connect the scaling of the variance of cost function partial derivatives to that of the dimension of the subspace in which the input state has support on. Instead, when the system is subspace uncontrollable, we show that one can still upper bound the variance of the cost function partial derivative using the expressibility of the ansatz in the relevant subspace. This indicates that larger subspace expressibilities leads to smaller gradients.

Finally, we present an conjecture that shows that one can directly study the scaling of the cost function partial derivative variance by computing the dimension of the subspace DLA to which the input state belongs. This conjecture implies that ansatzes with polynomially growing DLAs can exhibit polynomially vanishing gradients, while ansatzes with exponentially growing DLAs should exhibit exponentially vanishing gradients.

In addition, we performed numerical simulations of VQAs with the hardware efficient ansatz, QAOA, and HVA, for problems such as preparing ground states of the $XXZ$ model and of the Ising model, or solving MAXCUT problems on graphs generated from the Erd\"{o}s–R\'{e}nyi model. The numerical results match our theoretical predictions and hence verify our analytical results for controllable and subspace controllable systems. Moreover, in all cases considered we verify that our conjecture holds, further providing evidence that the scaling of the cost function partial derivative variance may be directly linked to the dimension of the subspace DLA. 

\subsubsection*{Implications of our results to ansatz design}

The broader implication of our results is that the framework introduced here can be used to design ansatzes, as one could potentially predict if an ansatz, or a modification to the ansatz, will lead to the cost function exhibiting a barren plateau. Hence, our work can be considered as paving the way towards trainability-aware ansatz design.

For instance, we have shown how a simple change in the ansatz structure, such as adding an additional parametrized unitary per layer, can greatly affect the gradient scaling of the cost by changing the controllability of the system. This means that one should be careful when employing schemes such as the Adaptive QAOA or quantum optimal control ansatz as the addition of an operator $H$ to the set  $\GC$ of generators of the ansatz can lead to barren plateaus if the system becomes controllable (or subspace controllable in an exponentially growing subspace). In particular, if $H$ does not commute with the elements in $\GC$, one should analyze how the DLA changes by such addition before proceeding to change the ansatz.

Here, we crucially remark that one of the main advantages of the aforementioned theoretical analysis is that it can be performed classically (either analytically or numerically) as it just requires the evaluation of the DLA. Hence, our methods save precious quantum resources as one does not need to run the quantum algorithm, or even access a quantum computer, to test the trainability of the ansatzes.

Finally, we remark that if our conjecture holds more generally, then one can use this additional tool to directly study the trainability of an ansatz by estimating the scaling of the variance of the cost function partial derivative through the scaling of the dimension of the DLA. For example, such results can be used to show that certain ansatzes might not have exponentially vanishing gradients. For instance,  when considering a QAOA ansatz for solving MAXCUT on $2$-regular graphs, a straightforward computation of the DLA reveals its scaling is only linear in $n$. Hence, we expect (and we find) no barren plateaus. Similarly, one can use our conjecture to analyze ansatz proposals in the literature. For example, Ref.~\cite{lee2021towards} recently proposed an ansatz generated by the set of products up to $K$-body Pauli $X$ operators, i.e., $\GC=\{ X_i\}_i \cup \{ X_iX_{j}\}_{i>j}\cup \{X_iX_jX_k\}_{i>j>k}\cup \cdots$. Since the ansatz is abelian, the dimension of DLA is just the number of generators. Thus, we expect that when using a poly number of layers the ansatz should be rid barren plateaus.

\subsubsection*{Outlook}

In the present work, we have established a novel framework for diagnosing the presence of barren plateaus in VQAs. While here we mainly focus on the trainability of ansatzes for near-term quantum computing, our results  should also be considered as useful in the broader context of QOC. For instance, while the barren plateau phenomenon has been recently widely studied in VQAs, it is clear from our manuscript that barren plateaus can (and will) also arise in QOC schemes (see also~\cite{arenz2020drawing}). Hence, we leave for future work to study how some of the results derived for the trainability of VQAs be used to analyze the trainability of QOC control pulses. 

In addition, we note that since our work studies the trainability of certain families of ansatzes, we also leave for future work to show how the tools here presented can be employed to study more general ansatzes (e.g., ansatzes for quantum machine learning applications) which do not necessarily have a periodic structure. In addition, we leave as an open question how the results in our conjecture can be generalized and formally proved.

\section{ACKNOWLEDGMENTS}

We thank Marco Farinati and Robert Zeier for useful discussions on Lie-algebras, and we also thank Zoe Holmes and Pablo Poggi for helpful discussions. ML acknowledges partial support by CONICET (PIP 112201 50100493CO), UBACyT (20020130100406BA) and ANPCyT (PICT-2016-1056)).  ML was also supported by the U.S. Department of Energy (DOE), Office of Science, Office of Advanced Scientific Computing Research, under the Quantum Computing Applications Team (QCAT) program and also under the Accelerated Research in Quantum Computing (ARQC) program. Piotr C. was supported by the Laboratory Directed Research and Development (LDRD) program of Los Alamos National Laboratory (LANL) under project numbers 20190659PRD4. KS acknowledges support from the National Science Foundation under Grant No. 2014010. GM was supported by the US Department of Energy, Office of Advanced Scientific Computing Research, under the ARQC program and by the LANL ASC Beyond Moore's Law project. PJC acknowledges initial support from the LANL ASC Beyond Moore's Law project. MC acknowledges initial support from the Center for Nonlinear Studies at LANL. KS, PJC, and MC also acknowledge support from LDRD program of LANL under project number 20190065DR.

\bibliographystyle{unsrtnat}
\bibliography{quantum}

\begin{thebibliography}{111}
\providecommand{\natexlab}[1]{#1}
\providecommand{\url}[1]{\texttt{#1}}
\expandafter\ifx\csname urlstyle\endcsname\relax
  \providecommand{\doi}[1]{doi: #1}\else
  \providecommand{\doi}{doi: \begingroup \urlstyle{rm}\Url}\fi

\bibitem[Shor(1994)]{shor1994algorithms}
Peter~W Shor.
\newblock Algorithms for quantum computation: discrete logarithms and
  factoring.
\newblock In \emph{Proceedings 35th annual symposium on foundations of computer
  science}, pages 124--134. Ieee, 1994.
\newblock \doi{10.1109/SFCS.1994.365700}.
\newblock URL \url{https://ieeexplore.ieee.org/document/365700}.

\bibitem[Harrow et~al.(2009)Harrow, Hassidim, and Lloyd]{harrow2009quantum}
Aram~W Harrow, Avinatan Hassidim, and Seth Lloyd.
\newblock Quantum algorithm for linear systems of equations.
\newblock \emph{Physical Review Letters}, 103\penalty0 (15):\penalty0 150502,
  2009.
\newblock \doi{10.1103/PhysRevLett.103.150502}.
\newblock URL
  \url{https://journals.aps.org/prl/abstract/10.1103/PhysRevLett.103.150502}.

\bibitem[Berry et~al.(2015)Berry, Childs, Cleve, Kothari, and
  Somma]{berry2015simulating}
Dominic~W Berry, Andrew~M Childs, Richard Cleve, Robin Kothari, and Rolando~D
  Somma.
\newblock Simulating hamiltonian dynamics with a truncated taylor series.
\newblock \emph{Physical Review Letters}, 114\penalty0 (9):\penalty0 090502,
  2015.
\newblock \doi{10.1103/PhysRevLett.114.090502}.
\newblock URL
  \url{https://journals.aps.org/prl/abstract/10.1103/PhysRevLett.114.090502}.

\bibitem[Georgescu et~al.(2014)Georgescu, Ashhab, and
  Nori]{georgescu2014quantum}
Iulia~M Georgescu, Sahel Ashhab, and Franco Nori.
\newblock Quantum simulation.
\newblock \emph{Reviews of Modern Physics}, 86\penalty0 (1):\penalty0 153,
  2014.
\newblock \doi{10.1103/RevModPhys.86.153}.
\newblock URL
  \url{https://journals.aps.org/rmp/abstract/10.1103/RevModPhys.86.153}.

\bibitem[Preskill(2018)]{preskill2018quantum}
John Preskill.
\newblock Quantum computing in the nisq era and beyond.
\newblock \emph{Quantum}, 2:\penalty0 79, 2018.
\newblock \doi{10.22331/q-2018-08-06-79}.
\newblock URL \url{https://quantum-journal.org/papers/q-2018-08-06-79/}.

\bibitem[Cerezo et~al.(2021{\natexlab{a}})Cerezo, Arrasmith, Babbush, Benjamin,
  Endo, Fujii, McClean, Mitarai, Yuan, Cincio, and
  Coles]{cerezo2020variationalreview}
M.~Cerezo, Andrew Arrasmith, Ryan Babbush, Simon~C Benjamin, Suguru Endo,
  Keisuke Fujii, Jarrod~R McClean, Kosuke Mitarai, Xiao Yuan, Lukasz Cincio,
  and Patrick~J. Coles.
\newblock Variational quantum algorithms.
\newblock \emph{Nature Reviews Physics}, 3\penalty0 (1):\penalty0 625–644,
  2021{\natexlab{a}}.
\newblock \doi{10.1038/s42254-021-00348-9}.
\newblock URL \url{https://www.nature.com/articles/s42254-021-00348-9}.

\bibitem[Bravo-Prieto et~al.(2019)Bravo-Prieto, LaRose, Cerezo, Subasi, Cincio,
  and Coles]{bravo2020variational}
Carlos Bravo-Prieto, Ryan LaRose, M.~Cerezo, Yigit Subasi, Lukasz Cincio, and
  Patrick Coles.
\newblock Variational quantum linear solver.
\newblock \emph{arXiv preprint arXiv:1909.05820}, 2019.
\newblock URL \url{https://arxiv.org/abs/1909.05820}.

\bibitem[Huang et~al.(2019)Huang, Bharti, and Rebentrost]{huang2019near}
Hsin-Yuan Huang, Kishor Bharti, and Patrick Rebentrost.
\newblock Near-term quantum algorithms for linear systems of equations.
\newblock \emph{arXiv preprint arXiv:1909.07344}, 2019.
\newblock URL \url{https://arxiv.org/abs/1909.07344}.

\bibitem[Xu et~al.(2021)Xu, Sun, Endo, Li, Benjamin, and
  Yuan]{xu2019variational}
Xiaosi Xu, Jinzhao Sun, Suguru Endo, Ying Li, Simon~C Benjamin, and Xiao Yuan.
\newblock Variational algorithms for linear algebra.
\newblock \emph{Science Bulletin}, 66\penalty0 (21):\penalty0 2181--2188, 2021.
\newblock \doi{10.1016/j.scib.2021.06.023}.
\newblock URL
  \url{https://www.sciencedirect.com/science/article/pii/S2095927321004631}.

\bibitem[McArdle et~al.(2019)McArdle, Jones, Endo, Li, Benjamin, and
  Yuan]{mcardle2019variational}
Sam McArdle, Tyson Jones, Suguru Endo, Ying Li, Simon~C Benjamin, and Xiao
  Yuan.
\newblock Variational ansatz-based quantum simulation of imaginary time
  evolution.
\newblock \emph{npj Quantum Information}, 5\penalty0 (1):\penalty0 1--6, 2019.
\newblock \doi{10.1038/s41534-019-0187-2}.
\newblock URL \url{https://www.nature.com/articles/s41534-019-0187-2}.

\bibitem[Grimsley et~al.(2019)Grimsley, Economou, Barnes, and
  Mayhall]{grimsley2019adaptive}
Harper~R Grimsley, Sophia~E Economou, Edwin Barnes, and Nicholas~J Mayhall.
\newblock An adaptive variational algorithm for exact molecular simulations on
  a quantum computer.
\newblock \emph{Nature Communications}, 10\penalty0 (1):\penalty0 1--9, 2019.
\newblock \doi{10.1038/s41467-019-10988-2}.
\newblock URL \url{https://www.nature.com/articles/s41467-019-10988-2}.

\bibitem[Cirstoiu et~al.(2020)Cirstoiu, Holmes, Iosue, Cincio, Coles, and
  Sornborger]{cirstoiu2020variational}
Cristina Cirstoiu, Zoe Holmes, Joseph Iosue, Lukasz Cincio, Patrick~J. Coles,
  and Andrew Sornborger.
\newblock Variational fast forwarding for quantum simulation beyond the
  coherence time.
\newblock \emph{npj Quantum Information}, 6\penalty0 (1):\penalty0 1--10, 2020.
\newblock \doi{10.1038/s41534-020-00302-0}.
\newblock URL \url{https://www.nature.com/articles/s41534-020-00302-0}.

\bibitem[Commeau et~al.(2020)Commeau, Cerezo, Holmes, Cincio, Coles, and
  Sornborger]{commeau2020variational}
Benjamin Commeau, M.~Cerezo, Zo{\"e} Holmes, Lukasz Cincio, Patrick~J. Coles,
  and Andrew Sornborger.
\newblock Variational hamiltonian diagonalization for dynamical quantum
  simulation.
\newblock \emph{arXiv preprint arXiv:2009.02559}, 2020.
\newblock URL \url{https://arxiv.org/abs/2009.02559}.

\bibitem[Gibbs et~al.(2021)Gibbs, Gili, Holmes, Commeau, Arrasmith, Cincio,
  Coles, and Sornborger]{gibbs2021long}
Joe Gibbs, Kaitlin Gili, Zo{\"e} Holmes, Benjamin Commeau, Andrew Arrasmith,
  Lukasz Cincio, Patrick~J. Coles, and Andrew Sornborger.
\newblock Long-time simulations with high fidelity on quantum hardware.
\newblock \emph{arXiv preprint arXiv:2102.04313}, 2021.
\newblock URL \url{https://arxiv.org/abs/2102.04313}.

\bibitem[Yao et~al.(2021)Yao, Gomes, Zhang, Wang, Ho, Iadecola, and
  Orth]{yao2020adaptive}
Yong-Xin Yao, Niladri Gomes, Feng Zhang, Cai-Zhuang Wang, Kai-Ming Ho, Thomas
  Iadecola, and Peter~P Orth.
\newblock Adaptive variational quantum dynamics simulations.
\newblock \emph{PRX Quantum}, 2\penalty0 (3):\penalty0 030307, 2021.
\newblock \doi{10.1103/PRXQuantum.2.030307}.
\newblock URL
  \url{https://journals.aps.org/prxquantum/abstract/10.1103/PRXQuantum.2.030307}.

\bibitem[Endo et~al.(2020)Endo, Sun, Li, Benjamin, and
  Yuan]{endo2020variational}
Suguru Endo, Jinzhao Sun, Ying Li, Simon~C Benjamin, and Xiao Yuan.
\newblock Variational quantum simulation of general processes.
\newblock \emph{Physical Review Letters}, 125\penalty0 (1):\penalty0 010501,
  2020.
\newblock \doi{10.1103/PhysRevLett.125.010501}.
\newblock URL
  \url{https://journals.aps.org/prl/abstract/10.1103/PhysRevLett.125.010501}.

\bibitem[Lau et~al.(2021)Lau, Bharti, Haug, and Kwek]{lau2021quantum}
Jonathan Wei~Zhong Lau, Kishor Bharti, Tobias Haug, and Leong~Chuan Kwek.
\newblock Quantum assisted simulation of time dependent hamiltonians.
\newblock \emph{arXiv preprint arXiv:2101.07677}, 2021.
\newblock URL \url{https://arxiv.org/abs/2101.07677}.

\bibitem[Peruzzo et~al.(2014)Peruzzo, McClean, Shadbolt, Yung, Zhou, Love,
  Aspuru-Guzik, and O’brien]{peruzzo2014variational}
Alberto Peruzzo, Jarrod McClean, Peter Shadbolt, Man-Hong Yung, Xiao-Qi Zhou,
  Peter~J Love, Al{\'a}n Aspuru-Guzik, and Jeremy~L O’brien.
\newblock A variational eigenvalue solver on a photonic quantum processor.
\newblock \emph{Nature Communications}, 5\penalty0 (1):\penalty0 1--7, 2014.
\newblock \doi{doi.org/10.1038/ncomms5213}.
\newblock URL \url{https://www.nature.com/articles/ncomms5213#citeas}.

\bibitem[Farhi et~al.(2014)Farhi, Goldstone, and Gutmann]{farhi2014quantum}
Edward Farhi, Jeffrey Goldstone, and Sam Gutmann.
\newblock A quantum approximate optimization algorithm.
\newblock \emph{arXiv preprint arXiv:1411.4028}, 2014.
\newblock URL \url{https://arxiv.org/abs/1411.4028}.

\bibitem[McClean et~al.(2016)McClean, Romero, Babbush, and
  Aspuru-Guzik]{mcclean2016theory}
Jarrod~R McClean, Jonathan Romero, Ryan Babbush, and Al{\'a}n Aspuru-Guzik.
\newblock The theory of variational hybrid quantum-classical algorithms.
\newblock \emph{New Journal of Physics}, 18\penalty0 (2):\penalty0 023023,
  2016.
\newblock \doi{10.1007/978-94-015-8330-5_4}.
\newblock URL
  \url{https://iopscience.iop.org/article/10.1088/1367-2630/18/2/023023}.

\bibitem[Khatri et~al.(2019)Khatri, LaRose, Poremba, Cincio, Sornborger, and
  Coles]{khatri2019quantum}
Sumeet Khatri, Ryan LaRose, Alexander Poremba, Lukasz Cincio, Andrew~T
  Sornborger, and Patrick~J Coles.
\newblock Quantum-assisted quantum compiling.
\newblock \emph{Quantum}, 3:\penalty0 140, 2019.
\newblock \doi{10.22331/q-2019-05-13-140}.
\newblock URL \url{https://quantum-journal.org/papers/q-2019-05-13-140/}.

\bibitem[Romero et~al.(2017)Romero, Olson, and Aspuru-Guzik]{romero2017quantum}
Jonathan Romero, Jonathan~P Olson, and Alan Aspuru-Guzik.
\newblock Quantum autoencoders for efficient compression of quantum data.
\newblock \emph{Quantum Science and Technology}, 2\penalty0 (4):\penalty0
  045001, 2017.
\newblock \doi{10.1088/2058-9565/aa8072}.
\newblock URL
  \url{https://iopscience.iop.org/article/10.1088/2058-9565/aa8072}.

\bibitem[LaRose et~al.(2019)LaRose, Tikku, O'Neel-Judy, Cincio, and
  Coles]{larose2019variational}
Ryan LaRose, Arkin Tikku, {\'E}tude O'Neel-Judy, Lukasz Cincio, and Patrick~J
  Coles.
\newblock Variational quantum state diagonalization.
\newblock \emph{npj Quantum Information}, 5\penalty0 (1):\penalty0 1--10, 2019.
\newblock \doi{10.1038/s41534-019-0167-6}.
\newblock URL \url{https://www.nature.com/articles/s41534-019-0167-6}.

\bibitem[Arrasmith et~al.(2019)Arrasmith, Cincio, Sornborger, Zurek, and
  Coles]{arrasmith2019variational}
Andrew Arrasmith, Lukasz Cincio, Andrew~T Sornborger, Wojciech~H Zurek, and
  Patrick~J Coles.
\newblock Variational consistent histories as a hybrid algorithm for quantum
  foundations.
\newblock \emph{Nature Communications}, 10\penalty0 (1):\penalty0 1--7, 2019.
\newblock \doi{10.1038/s41467-019-11417-0}.
\newblock URL \url{https://www.nature.com/articles/s41467-019-11417-0}.

\bibitem[Cerezo et~al.(2020)Cerezo, Poremba, Cincio, and
  Coles]{cerezo2020variationalfidelity}
M.~Cerezo, Alexander Poremba, Lukasz Cincio, and Patrick~J Coles.
\newblock Variational quantum fidelity estimation.
\newblock \emph{Quantum}, 4:\penalty0 248, 2020.
\newblock \doi{10.22331/q-2020-03-26-248}.
\newblock URL \url{https://quantum-journal.org/papers/q-2020-03-26-248/}.

\bibitem[Li and Benjamin(2017)]{li2017efficient}
Y.~Li and S.~C. Benjamin.
\newblock Efficient variational quantum simulator incorporating active error
  minimization.
\newblock \emph{Phys. Rev. X}, 7:\penalty0 021050, Jun 2017.
\newblock \doi{10.1103/PhysRevX.7.021050}.
\newblock URL \url{https://link.aps.org/doi/10.1103/PhysRevX.7.021050}.

\bibitem[Heya et~al.(2019)Heya, Nakanishi, Mitarai, and
  Fujii]{heya2019subspace}
Kentaro Heya, Ken~M Nakanishi, Kosuke Mitarai, and Keisuke Fujii.
\newblock Subspace variational quantum simulator.
\newblock \emph{arXiv preprint arXiv:1904.08566}, 2019.
\newblock URL \url{https://arxiv.org/abs/1904.08566}.

\bibitem[Bharti and Haug(2021)]{bharti2020quantum}
Kishor Bharti and Tobias Haug.
\newblock Quantum-assisted simulator.
\newblock \emph{Physical Review A}, 104\penalty0 (4):\penalty0 042418, 2021.
\newblock \doi{10.1103/PhysRevA.104.042418}.
\newblock URL
  \url{https://journals.aps.org/pra/abstract/10.1103/PhysRevA.104.042418}.

\bibitem[Cerezo et~al.(2022)Cerezo, Sharma, Arrasmith, and
  Coles]{cerezo2020variational}
M.~Cerezo, Kunal Sharma, Andrew Arrasmith, and Patrick~J Coles.
\newblock Variational quantum state eigensolver.
\newblock \emph{npj Quantum Information}, 8\penalty0 (1):\penalty0 1--11, 2022.
\newblock \doi{10.1038/s41534-022-00611-6}.
\newblock URL \url{https://doi.org/10.1038/s41534-022-00611-6}.

\bibitem[Beckey et~al.(2022)Beckey, Cerezo, Sone, and
  Coles]{beckey2020variational}
Jacob~L Beckey, M.~Cerezo, Akira Sone, and Patrick~J Coles.
\newblock Variational quantum algorithm for estimating the quantum {F}isher
  information.
\newblock \emph{Physical Review Research}, 4\penalty0 (1):\penalty0 013083,
  2022.
\newblock \doi{10.1103/PhysRevResearch.4.013083}.
\newblock URL
  \url{https://journals.aps.org/prresearch/abstract/10.1103/PhysRevResearch.4.013083}.

\bibitem[Bittel and Kliesch(2021)]{bittel2021training}
Lennart Bittel and Martin Kliesch.
\newblock Training variational quantum algorithms is np-hard.
\newblock \emph{Phys. Rev. Lett.}, 127:\penalty0 120502, Sep 2021.
\newblock \doi{10.1103/PhysRevLett.127.120502}.
\newblock URL \url{https://link.aps.org/doi/10.1103/PhysRevLett.127.120502}.

\bibitem[Mitarai et~al.(2018)Mitarai, Negoro, Kitagawa, and
  Fujii]{mitarai2018quantum}
Kosuke Mitarai, Makoto Negoro, Masahiro Kitagawa, and Keisuke Fujii.
\newblock Quantum circuit learning.
\newblock \emph{Physical Review A}, 98\penalty0 (3):\penalty0 032309, 2018.
\newblock \doi{10.1103/PhysRevA.98.032309}.
\newblock URL
  \url{https://journals.aps.org/pra/abstract/10.1103/PhysRevA.98.032309}.

\bibitem[Schuld et~al.(2019)Schuld, Bergholm, Gogolin, Izaac, and
  Killoran]{schuld2019evaluating}
Maria Schuld, Ville Bergholm, Christian Gogolin, Josh Izaac, and Nathan
  Killoran.
\newblock Evaluating analytic gradients on quantum hardware.
\newblock \emph{Physical Review A}, 99\penalty0 (3):\penalty0 032331, 2019.
\newblock \doi{10.1103/PhysRevA.99.032331}.
\newblock URL
  \url{https://journals.aps.org/pra/abstract/10.1103/PhysRevA.99.032331}.

\bibitem[K{\"u}bler et~al.(2020)K{\"u}bler, Arrasmith, Cincio, and
  Coles]{kubler2020adaptive}
Jonas~M K{\"u}bler, Andrew Arrasmith, Lukasz Cincio, and Patrick~J Coles.
\newblock An adaptive optimizer for measurement-frugal variational algorithms.
\newblock \emph{Quantum}, 4:\penalty0 263, 2020.
\newblock \doi{10.22331/q-2020-05-11-263}.
\newblock URL \url{https://quantum-journal.org/papers/q-2020-05-11-263/}.

\bibitem[Stokes et~al.(2020)Stokes, Izaac, Killoran, and
  Carleo]{stokes2020quantum}
James Stokes, Josh Izaac, Nathan Killoran, and Giuseppe Carleo.
\newblock Quantum natural gradient.
\newblock \emph{Quantum}, 4:\penalty0 269, 2020.
\newblock \doi{10.22331/q-2020-05-25-269}.
\newblock URL \url{https://quantum-journal.org/papers/q-2020-05-25-269/}.

\bibitem[Arrasmith et~al.(2020)Arrasmith, Cincio, Somma, and
  Coles]{arrasmith2020operator}
Andrew Arrasmith, Lukasz Cincio, Rolando~D Somma, and Patrick~J Coles.
\newblock Operator sampling for shot-frugal optimization in variational
  algorithms.
\newblock \emph{arXiv preprint arXiv:2004.06252}, 2020.
\newblock URL \url{https://arxiv.org/abs/2004.06252}.

\bibitem[McClean et~al.(2018)McClean, Boixo, Smelyanskiy, Babbush, and
  Neven]{mcclean2018barren}
Jarrod~R McClean, Sergio Boixo, Vadim~N Smelyanskiy, Ryan Babbush, and Hartmut
  Neven.
\newblock Barren plateaus in quantum neural network training landscapes.
\newblock \emph{Nature Communications}, 9\penalty0 (1):\penalty0 1--6, 2018.
\newblock \doi{10.1038/s41467-018-07090-4}.
\newblock URL \url{https://www.nature.com/articles/s41467-018-07090-4}.

\bibitem[Cerezo et~al.(2021{\natexlab{b}})Cerezo, Sone, Volkoff, Cincio, and
  Coles]{cerezo2021cost}
M.~Cerezo, Akira Sone, Tyler Volkoff, Lukasz Cincio, and Patrick~J Coles.
\newblock Cost function dependent barren plateaus in shallow parametrized
  quantum circuits.
\newblock \emph{Nature Communications}, 12\penalty0 (1):\penalty0 1--12,
  2021{\natexlab{b}}.
\newblock \doi{10.1038/s41467-021-21728-w}.
\newblock URL \url{https://www.nature.com/articles/s41467-021-21728-w}.

\bibitem[Wang et~al.(2021)Wang, Fontana, Cerezo, Sharma, Sone, Cincio, and
  Coles]{wang2020noise}
Samson Wang, Enrico Fontana, M.~Cerezo, Kunal Sharma, Akira Sone, Lukasz
  Cincio, and Patrick~J Coles.
\newblock Noise-induced barren plateaus in variational quantum algorithms.
\newblock \emph{Nature Communications}, 12\penalty0 (1):\penalty0 1--11, 2021.
\newblock \doi{10.1038/s41467-021-27045-6}.
\newblock URL \url{https://www.nature.com/articles/s41467-021-27045-6}.

\bibitem[Cerezo and Coles(2021)]{cerezo2020impact}
M.~Cerezo and Patrick~J Coles.
\newblock Higher order derivatives of quantum neural networks with barren
  plateaus.
\newblock \emph{Quantum Science and Technology}, 6\penalty0 (2):\penalty0
  035006, 2021.
\newblock \doi{10.1088/2058-9565/abf51a}.
\newblock URL
  \url{https://iopscience.iop.org/article/10.1088/2058-9565/abf51a}.

\bibitem[Sharma et~al.(2022)Sharma, Cerezo, Cincio, and
  Coles]{sharma2020trainability}
Kunal Sharma, M.~Cerezo, Lukasz Cincio, and Patrick~J Coles.
\newblock Trainability of dissipative perceptron-based quantum neural networks.
\newblock \emph{Physical Review Letters}, 128\penalty0 (18):\penalty0 180505,
  2022.
\newblock \doi{10.1103/PhysRevLett.128.180505}.

\bibitem[Arrasmith et~al.(2021)Arrasmith, Cerezo, Czarnik, Cincio, and
  Coles]{arrasmith2020effect}
Andrew Arrasmith, M.~Cerezo, Piotr Czarnik, Lukasz Cincio, and Patrick~J Coles.
\newblock Effect of barren plateaus on gradient-free optimization.
\newblock \emph{Quantum}, 5:\penalty0 558, 2021.
\newblock \doi{10.22331/q-2021-10-05-558}.
\newblock URL \url{https://quantum-journal.org/papers/q-2021-10-05-558/}.

\bibitem[Holmes et~al.(2021)Holmes, Arrasmith, Yan, Coles, Albrecht, and
  Sornborger]{holmes2020barren}
Zo{\"e} Holmes, Andrew Arrasmith, Bin Yan, Patrick~J. Coles, Andreas Albrecht,
  and Andrew~T Sornborger.
\newblock Barren plateaus preclude learning scramblers.
\newblock \emph{Physical Review Letters}, 126\penalty0 (19):\penalty0 190501,
  2021.
\newblock \doi{10.1103/PhysRevLett.126.190501}.
\newblock URL
  \url{https://journals.aps.org/prl/abstract/10.1103/PhysRevLett.126.190501}.

\bibitem[Marrero et~al.(2021)Marrero, Kieferov{\'a}, and
  Wiebe]{marrero2020entanglement}
Carlos~Ortiz Marrero, M{\'a}ria Kieferov{\'a}, and Nathan Wiebe.
\newblock Entanglement-induced barren plateaus.
\newblock \emph{PRX Quantum}, 2\penalty0 (4):\penalty0 040316, 2021.
\newblock \doi{10.1103/PRXQuantum.2.040316}.
\newblock URL
  \url{https://journals.aps.org/prxquantum/abstract/10.1103/PRXQuantum.2.040316}.

\bibitem[Patti et~al.(2021)Patti, Najafi, Gao, and
  Yelin]{patti2020entanglement}
Taylor~L Patti, Khadijeh Najafi, Xun Gao, and Susanne~F Yelin.
\newblock Entanglement devised barren plateau mitigation.
\newblock \emph{Physical Review Research}, 3\penalty0 (3):\penalty0 033090,
  2021.
\newblock \doi{10.1103/PhysRevResearch.3.033090}.
\newblock URL \url{https://par.nsf.gov/servlets/purl/10328786}.

\bibitem[Pesah et~al.(2021)Pesah, Cerezo, Wang, Volkoff, Sornborger, and
  Coles]{pesah2020absence}
Arthur Pesah, M.~Cerezo, Samson Wang, Tyler Volkoff, Andrew~T Sornborger, and
  Patrick~J Coles.
\newblock Absence of barren plateaus in quantum convolutional neural networks.
\newblock \emph{Physical Review X}, 11\penalty0 (4):\penalty0 041011, 2021.
\newblock \doi{10.1103/PhysRevX.11.041011}.
\newblock URL
  \url{https://journals.aps.org/prx/abstract/10.1103/PhysRevX.11.041011}.

\bibitem[Holmes et~al.(2022)Holmes, Sharma, Cerezo, and
  Coles]{holmes2021connecting}
Zo{\"e} Holmes, Kunal Sharma, M.~Cerezo, and Patrick~J Coles.
\newblock Connecting ansatz expressibility to gradient magnitudes and barren
  plateaus.
\newblock \emph{PRX Quantum}, 3:\penalty0 010313, Jan 2022.
\newblock \doi{10.1103/PRXQuantum.3.010313}.
\newblock URL \url{https://link.aps.org/doi/10.1103/PRXQuantum.3.010313}.

\bibitem[Arrasmith et~al.(2022)Arrasmith, Holmes, Cerezo, and
  Coles]{arrasmith2021equivalence}
Andrew Arrasmith, Zo{\"e} Holmes, Marco Cerezo, and Patrick~J Coles.
\newblock Equivalence of quantum barren plateaus to cost concentration and
  narrow gorges.
\newblock \emph{Quantum Science and Technology}, 7\penalty0 (4):\penalty0
  045015, 2022.
\newblock \doi{10.1088/2058-9565/ac7d06}.
\newblock URL
  \url{https://iopscience.iop.org/article/10.1088/2058-9565/ac7d06}.

\bibitem[Cerezo et~al.(2021{\natexlab{c}})Cerezo, Sone, Volkoff, Cincio, and
  Coles]{cerezo2020cost}
M.~Cerezo, Akira Sone, Tyler Volkoff, Lukasz Cincio, and Patrick~J Coles.
\newblock Cost function dependent barren plateaus in shallow parametrized
  quantum circuits.
\newblock \emph{Nature Communications}, 12\penalty0 (1):\penalty0 1--12,
  2021{\natexlab{c}}.
\newblock \doi{10.1038/s41467-021-21728-w}.
\newblock URL \url{https://www.nature.com/articles/s41467-021-21728-w}.

\bibitem[Uvarov and Biamonte(2021)]{uvarov2020barren}
AV~Uvarov and Jacob~D Biamonte.
\newblock On barren plateaus and cost function locality in variational quantum
  algorithms.
\newblock \emph{Journal of Physics A: Mathematical and Theoretical},
  54\penalty0 (24):\penalty0 245301, 2021.
\newblock \doi{10.1088/1751-8121/abfac7}.
\newblock URL \url{https://doi.org/10.1088/1751-8121/abfac7}.

\bibitem[Volkoff and Coles(2021)]{volkoff2021large}
Tyler Volkoff and Patrick~J Coles.
\newblock Large gradients via correlation in random parameterized quantum
  circuits.
\newblock \emph{Quantum Science and Technology}, 6\penalty0 (2):\penalty0
  025008, 2021.
\newblock \doi{10.1088/2058-9565/abd89}.
\newblock URL
  \url{https://iopscience.iop.org/article/10.1088/2058-9565/abd891}.

\bibitem[Verdon et~al.(2019)Verdon, Broughton, McClean, Sung, Babbush, Jiang,
  Neven, and Mohseni]{verdon2019learning}
Guillaume Verdon, Michael Broughton, Jarrod~R McClean, Kevin~J Sung, Ryan
  Babbush, Zhang Jiang, Hartmut Neven, and Masoud Mohseni.
\newblock Learning to learn with quantum neural networks via classical neural
  networks.
\newblock \emph{arXiv preprint arXiv:1907.05415}, 2019.
\newblock URL \url{https://arxiv.org/abs/1907.05415}.

\bibitem[Grant et~al.(2019)Grant, Wossnig, Ostaszewski, and
  Benedetti]{grant2019initialization}
Edward Grant, Leonard Wossnig, Mateusz Ostaszewski, and Marcello Benedetti.
\newblock An initialization strategy for addressing barren plateaus in
  parametrized quantum circuits.
\newblock \emph{Quantum}, 3:\penalty0 214, 2019.
\newblock \doi{10.22331/q-2019-12-09-214}.
\newblock URL \url{https://quantum-journal.org/papers/q-2019-12-09-214/}.

\bibitem[Skolik et~al.(2021)Skolik, McClean, Mohseni, van~der Smagt, and
  Leib]{skolik2020layerwise}
Andrea Skolik, Jarrod~R McClean, Masoud Mohseni, Patrick van~der Smagt, and
  Martin Leib.
\newblock Layerwise learning for quantum neural networks.
\newblock \emph{Quantum Machine Intelligence}, 3\penalty0 (1):\penalty0 1--11,
  2021.
\newblock \doi{10.1007/s42484-020-00036-4}.
\newblock URL \url{https://doi.org/10.1007/s42484-020-00036-4}.

\bibitem[Bilkis et~al.(2021)Bilkis, Cerezo, Verdon, Coles, and
  Cincio]{bilkis2021semi}
M~Bilkis, M.~Cerezo, Guillaume Verdon, Patrick~J. Coles, and Lukasz Cincio.
\newblock A semi-agnostic ansatz with variable structure for quantum machine
  learning.
\newblock \emph{arXiv preprint arXiv:2103.06712}, 2021.
\newblock URL \url{https://arxiv.org/abs/2103.06712}.

\bibitem[Magann et~al.(2021)Magann, Arenz, Grace, Ho, Kosut, McClean, Rabitz,
  and Sarovar]{magann2021pulses}
Alicia~B Magann, Christian Arenz, Matthew~D Grace, Tak-San Ho, Robert~L Kosut,
  Jarrod~R McClean, Herschel~A Rabitz, and Mohan Sarovar.
\newblock From pulses to circuits and back again: A quantum optimal control
  perspective on variational quantum algorithms.
\newblock \emph{PRX Quantum}, 2\penalty0 (1):\penalty0 010101, 2021.
\newblock \doi{https://doi.org/10.1103/PRXQuantum.2.010101}.
\newblock URL
  \url{https://journals.aps.org/prxquantum/abstract/10.1103/PRXQuantum.2.010101}.

\bibitem[Hadfield et~al.(2019)Hadfield, Wang, O'Gorman, Rieffel, Venturelli,
  and Biswas]{hadfield2019quantum}
Stuart Hadfield, Zhihui Wang, Bryan O'Gorman, Eleanor~G Rieffel, Davide
  Venturelli, and Rupak Biswas.
\newblock From the quantum approximate optimization algorithm to a quantum
  alternating operator ansatz.
\newblock \emph{Algorithms}, 12\penalty0 (2):\penalty0 34, 2019.
\newblock \doi{10.3390/a12020034}.
\newblock URL \url{https://www.mdpi.com/1999-4893/12/2/34}.

\bibitem[Wecker et~al.(2015)Wecker, Hastings, and Troyer]{wecker2015progress}
Dave Wecker, Matthew~B. Hastings, and Matthias Troyer.
\newblock Progress towards practical quantum variational algorithms.
\newblock \emph{Physical Review A}, 92:\penalty0 042303, Oct 2015.
\newblock \doi{10.1103/PhysRevA.92.042303}.
\newblock URL \url{https://link.aps.org/doi/10.1103/PhysRevA.92.042303}.

\bibitem[Wiersema et~al.(2020)Wiersema, Zhou, de~Sereville, Carrasquilla, Kim,
  and Yuen]{wiersema2020exploring}
Roeland Wiersema, Cunlu Zhou, Yvette de~Sereville, Juan~Felipe Carrasquilla,
  Yong~Baek Kim, and Henry Yuen.
\newblock Exploring entanglement and optimization within the hamiltonian
  variational ansatz.
\newblock \emph{PRX Quantum}, 1\penalty0 (2):\penalty0 020319, 2020.
\newblock \doi{10.1103/PRXQuantum.1.020319}.
\newblock URL
  \url{https://journals.aps.org/prxquantum/abstract/10.1103/PRXQuantum.1.020319}.

\bibitem[Zhu et~al.(2022)Zhu, Tang, Barron, Mayhall, Barnes, and
  Economou]{zhu2020adaptive}
Linghua Zhu, Ho~Lun Tang, George~S Barron, Nicholas~J Mayhall, Edwin Barnes,
  and Sophia~E Economou.
\newblock Adaptive quantum approximate optimization algorithm for solving
  combinatorial problems on a quantum computer.
\newblock \emph{Physical Review Research}, 4\penalty0 (3):\penalty0 033029,
  2022.
\newblock \doi{10.1103/PhysRevResearch.4.033029}.
\newblock URL
  \url{https://journals.aps.org/prresearch/abstract/10.1103/PhysRevResearch.4.033029}.

\bibitem[Choquette et~al.(2021)Choquette, Di~Paolo, Barkoutsos,
  S{\'e}n{\'e}chal, Tavernelli, and Blais]{choquette2020quantum}
Alexandre Choquette, Agustin Di~Paolo, Panagiotis~Kl Barkoutsos, David
  S{\'e}n{\'e}chal, Ivano Tavernelli, and Alexandre Blais.
\newblock Quantum-optimal-control-inspired ansatz for variational quantum
  algorithms.
\newblock \emph{Physical Review Research}, 3\penalty0 (2):\penalty0 023092,
  2021.
\newblock \doi{10.1103/PhysRevResearch.3.023092}.
\newblock URL
  \url{https://journals.aps.org/prresearch/abstract/10.1103/PhysRevResearch.3.023092}.

\bibitem[Thanasilp et~al.(2021)Thanasilp, Wang, Nghiem, Coles, and
  Cerezo]{thanasilp2021subtleties}
Supanut Thanasilp, Samson Wang, Nhat~A Nghiem, Patrick~J. Coles, and M.~Cerezo.
\newblock Subtleties in the trainability of quantum machine learning models.
\newblock \emph{arXiv preprint arXiv:2110.14753}, 2021.
\newblock URL \url{https://arxiv.org/abs/2110.14753}.

\bibitem[D'Alessandro(2007)]{dalessandro2010introduction}
D.~D'Alessandro.
\newblock \emph{Introduction to Quantum Control and Dynamics}.
\newblock Chapman \& Hall/CRC Applied Mathematics \& Nonlinear Science. Taylor
  \& Francis, 2007.
\newblock ISBN 9781584888840.
\newblock URL \url{https://books.google.sm/books?id=HbMYmAEACAAJ}.

\bibitem[Sim et~al.(2019)Sim, Johnson, and Aspuru-Guzik]{sim2019expressibility}
Sukin Sim, Peter~D Johnson, and Al{\'a}n Aspuru-Guzik.
\newblock Expressibility and entangling capability of parameterized quantum
  circuits for hybrid quantum-classical algorithms.
\newblock \emph{Advanced Quantum Technologies}, 2\penalty0 (12):\penalty0
  1900070, 2019.
\newblock \doi{10.1002/qute.201900070}.
\newblock URL
  \url{https://onlinelibrary.wiley.com/doi/full/10.1002/qute.201900070}.

\bibitem[Caves(1999)]{caves1999quantum}
Carlton~M Caves.
\newblock Quantum error correction and reversible operations.
\newblock \emph{Journal of Superconductivity}, 12\penalty0 (6):\penalty0
  707--718, 1999.
\newblock \doi{10.1023/A:1007720606911}.
\newblock URL \url{https://link.springer.com/article/10.1023/A:1007720606911}.

\bibitem[Rungta et~al.(2001)Rungta, Munro, Nemoto, Deuar, Milburn, and
  Caves]{rungta2001qudit}
P~Rungta, WJ~Munro, K~Nemoto, P~Deuar, Gerard~J Milburn, and CM~Caves.
\newblock Qudit entanglement.
\newblock In \emph{Directions in Quantum Optics}, pages 149--164. Springer,
  2001.
\newblock \doi{10.1007/3-540-40894-0_14}.
\newblock URL
  \url{https://link.springer.com/chapter/10.1007%2F3-540-40894-0_14}.

\bibitem[Hunter-Jones(2019)]{hunter2019unitary}
Nicholas Hunter-Jones.
\newblock Unitary designs from statistical mechanics in random quantum
  circuits.
\newblock \emph{arXiv preprint arXiv:1905.12053}, 2019.
\newblock URL \url{https://arxiv.org/abs/1905.12053}.

\bibitem[Nakata et~al.(2014)Nakata, Koashi, and Murao]{nakata2014generating}
Yoshifumi Nakata, Masato Koashi, and Mio Murao.
\newblock Generating a state t-design by diagonal quantum circuits.
\newblock \emph{New Journal of Physics}, 16\penalty0 (5):\penalty0 053043,
  2014.
\newblock \doi{10.1088/1367-2630/16/5/053043}.
\newblock URL
  \url{https://iopscience.iop.org/article/10.1088/1367-2630/16/5/053043}.

\bibitem[Yang et~al.(2017)Yang, Rahmani, Shabani, Neven, and
  Chamon]{yang2017optimizing}
Zhi-Cheng Yang, Armin Rahmani, Alireza Shabani, Hartmut Neven, and Claudio
  Chamon.
\newblock Optimizing variational quantum algorithms using pontryagin’s
  minimum principle.
\newblock \emph{Physical Review X}, 7\penalty0 (2):\penalty0 021027, 2017.
\newblock \doi{10.1103/PhysRevX.7.021027}.
\newblock URL
  \url{https://journals.aps.org/prx/abstract/10.1103/PhysRevX.7.021027}.

\bibitem[Meitei et~al.(2020)Meitei, Gard, Barron, Pappas, Economou, Barnes, and
  Mayhall]{meitei2020gate}
Oinam~Romesh Meitei, Bryan~T Gard, George~S Barron, David~P Pappas, Sophia~E
  Economou, Edwin Barnes, and Nicholas~J Mayhall.
\newblock Gate-free state preparation for fast variational quantum eigensolver
  simulations: ctrl-vqe.
\newblock \emph{arXiv preprint arXiv:2008.04302}, 2020.
\newblock URL \url{https://arxiv.org/abs/2008.04302}.

\bibitem[Lee et~al.(2021)Lee, Magann, Rabitz, and Arenz]{lee2021towards}
Juneseo Lee, Alicia~B Magann, Herschel~A Rabitz, and Christian Arenz.
\newblock Progress toward favorable landscapes in quantum combinatorial
  optimization.
\newblock \emph{Physical Review A}, 104\penalty0 (3):\penalty0 032401, 2021.
\newblock \doi{10.1103/PhysRevA.104.032401}.
\newblock URL \url{https://link.aps.org/doi/10.1103/PhysRevA.104.032401}.

\bibitem[Li et~al.(2017)Li, Yang, Peng, and Sun]{li2017hybrid}
Jun Li, Xiaodong Yang, Xinhua Peng, and Chang-Pu Sun.
\newblock Hybrid quantum-classical approach to quantum optimal control.
\newblock \emph{Physical Review Letters}, 118\penalty0 (15):\penalty0 150503,
  2017.
\newblock \doi{10.1103/PhysRevLett.118.150503}.
\newblock URL
  \url{https://journals.aps.org/prl/abstract/10.1103/PhysRevLett.118.150503}.

\bibitem[Ramakrishna and Rabitz(1996)]{ramakrishna1996relation}
Viswanath Ramakrishna and Herschel Rabitz.
\newblock Relation between quantum computing and quantum controllability.
\newblock \emph{Physical Review A}, 54\penalty0 (2):\penalty0 1715, 1996.
\newblock \doi{10.1103/PhysRevA.54.1715}.
\newblock URL
  \url{https://journals.aps.org/pra/abstract/10.1103/PhysRevA.54.1715}.

\bibitem[Lloyd(2018)]{lloyd2018quantum}
Seth Lloyd.
\newblock Quantum approximate optimization is computationally universal.
\newblock \emph{arXiv preprint arXiv:1812.11075}, 2018.
\newblock URL \url{https://arxiv.org/abs/1812.11075}.

\bibitem[Morales et~al.(2020)Morales, Biamonte, and
  Zimbor{\'a}s]{morales2020universality}
Mauro~ES Morales, JD~Biamonte, and Zolt{\'a}n Zimbor{\'a}s.
\newblock On the universality of the quantum approximate optimization
  algorithm.
\newblock \emph{Quantum Information Processing}, 19\penalty0 (9):\penalty0
  1--26, 2020.
\newblock \doi{10.1007/s11128-020-02748-9}.
\newblock URL
  \url{https://link.springer.com/article/10.1007/s11128-020-02748-9}.

\bibitem[Akshay et~al.(2020)Akshay, Philathong, Morales, and
  Biamonte]{akshay2020reachability}
V~Akshay, H~Philathong, Mauro~ES Morales, and Jacob~D Biamonte.
\newblock Reachability deficits in quantum approximate optimization.
\newblock \emph{Physical Review Letters}, 124\penalty0 (9):\penalty0 090504,
  2020.
\newblock \doi{10.1103/PhysRevLett.124.090504}.
\newblock URL
  \url{https://journals.aps.org/prl/abstract/10.1103/PhysRevLett.124.090504}.

\bibitem[Zeier and Schulte-Herbr{\"u}ggen(2011)]{zeier2011symmetry}
Robert Zeier and Thomas Schulte-Herbr{\"u}ggen.
\newblock Symmetry principles in quantum systems theory.
\newblock \emph{Journal of mathematical physics}, 52\penalty0 (11):\penalty0
  113510, 2011.
\newblock \doi{https://doi.org/10.1063/1.3657939}.
\newblock URL \url{https://aip.scitation.org/doi/pdf/10.1063/1.3657939}.

\bibitem[Polack et~al.(2009)Polack, Suchowski, and
  Tannor]{polack2009uncontrollable}
Thomas Polack, Haim Suchowski, and David~J Tannor.
\newblock Uncontrollable quantum systems: A classification scheme based on lie
  subalgebras.
\newblock \emph{Physical Review A}, 79\penalty0 (5):\penalty0 053403, 2009.
\newblock \doi{https://doi.org/10.1103/PhysRevA.79.053403}.
\newblock URL
  \url{https://journals.aps.org/pra/abstract/10.1103/PhysRevA.79.053403}.

\bibitem[Banchi et~al.(2017)Banchi, Burgarth, and Kastoryano]{banchi2017driven}
Leonardo Banchi, Daniel Burgarth, and Michael~J Kastoryano.
\newblock Driven quantum dynamics: will it blend?
\newblock \emph{Physical Review X}, 7\penalty0 (4):\penalty0 041015, 2017.
\newblock \doi{10.1103/PhysRevX.7.041015}.
\newblock URL
  \url{https://journals.aps.org/prx/abstract/10.1103/PhysRevX.7.041015}.

\bibitem[Kandala et~al.(2017)Kandala, Mezzacapo, Temme, Takita, Brink, Chow,
  and Gambetta]{kandala2017hardware}
Abhinav Kandala, Antonio Mezzacapo, Kristan Temme, Maika Takita, Markus Brink,
  Jerry~M. Chow, and Jay~M. Gambetta.
\newblock Hardware-efficient variational quantum eigensolver for small
  molecules and quantum magnets.
\newblock \emph{Nature}, 549\penalty0 (7671):\penalty0 242--246, Sep 2017.
\newblock ISSN 1476-4687.
\newblock \doi{10.1038/nature23879}.
\newblock URL \url{https://doi.org/10.1038/nature23879}.

\bibitem[Harrow and Low(2009)]{harrow2009random}
Aram~W Harrow and Richard~A Low.
\newblock Random quantum circuits are approximate 2-designs.
\newblock \emph{Communications in Mathematical Physics}, 291\penalty0
  (1):\penalty0 257--302, 2009.
\newblock \doi{10.1007/s00220-009-0873-6}.
\newblock URL
  \url{https://link.springer.com/article/10.1007%2Fs00220-009-0873-6}.

\bibitem[Brandao et~al.(2016)Brandao, Harrow, and Horodecki]{brandao2016local}
Fernando~GSL Brandao, Aram~W Harrow, and Micha{\l} Horodecki.
\newblock Local random quantum circuits are approximate polynomial-designs.
\newblock \emph{Communications in Mathematical Physics}, 346\penalty0
  (2):\penalty0 397--434, 2016.
\newblock \doi{10.1007/s00220-016-2706-8}.
\newblock URL
  \url{https://link.springer.com/article/10.1007%2Fs00220-016-2706-8}.

\bibitem[Harrow and Mehraban(2018)]{harrow2018approximate}
Aram Harrow and Saeed Mehraban.
\newblock Approximate unitary $ t $-designs by short random quantum circuits
  using nearest-neighbor and long-range gates.
\newblock \emph{arXiv preprint arXiv:1809.06957}, 2018.
\newblock URL \url{https://arxiv.org/abs/1809.06957}.

\bibitem[Lucas(2014)]{lucas2014ising}
Andrew Lucas.
\newblock Ising formulations of many np problems.
\newblock \emph{Frontiers in Physics}, 2:\penalty0 5, 2014.
\newblock \doi{10.3389/fphy.2014.00005}.
\newblock URL
  \url{https://www.frontiersin.org/articles/10.3389/fphy.2014.00005/full}.

\bibitem[Streif and Leib(2020)]{streif2020training}
Michael Streif and Martin Leib.
\newblock Training the quantum approximate optimization algorithm without
  access to a quantum processing unit.
\newblock \emph{Quantum Science and Technology}, 5\penalty0 (3):\penalty0
  034008, 2020.
\newblock \doi{10.1088/2058-9565/ab8c2b}.
\newblock URL
  \url{https://iopscience.iop.org/article/10.1088/2058-9565/ab8c2b}.

\bibitem[Cerezo et~al.(2017)Cerezo, Rossignoli, Canosa, and
  R{\'\i}os]{cerezo2017factorization}
M.~Cerezo, Ra{\'u}l Rossignoli, N~Canosa, and E~R{\'\i}os.
\newblock Factorization and criticality in finite $xxz$ systems of arbitrary
  spin.
\newblock \emph{Physical Review Letters}, 119\penalty0 (22):\penalty0 220605,
  2017.
\newblock \doi{10.1103/PhysRevLett.119.220605}.
\newblock URL
  \url{https://journals.aps.org/prl/abstract/10.1103/PhysRevLett.119.220605}.

\bibitem[Wang et~al.(2016)Wang, Burgarth, and Schirmer]{wang2016subspace}
Xiaoting Wang, Daniel Burgarth, and S~Schirmer.
\newblock Subspace controllability of spin-1 2 chains with symmetries.
\newblock \emph{Physical Review A}, 94\penalty0 (5):\penalty0 052319, 2016.
\newblock \doi{https://doi.org/10.1103/PhysRevA.94.052319}.
\newblock URL
  \url{https://journals.aps.org/pra/abstract/10.1103/PhysRevA.94.052319}.

\bibitem[Collins and {\'S}niady(2006)]{collins2006integration}
Beno{\^\i}t Collins and Piotr {\'S}niady.
\newblock Integration with respect to the haar measure on unitary, orthogonal
  and symplectic group.
\newblock \emph{Communications in Mathematical Physics}, 264\penalty0
  (3):\penalty0 773--795, 2006.
\newblock \doi{10.1007/s00220-006-1554-3}.
\newblock URL
  \url{https://link.springer.com/article/10.1007%2Fs00220-006-1554-3}.

\bibitem[Poggi and Wisniacki(2016)]{poggi2016optimal}
PM~Poggi and Diego~Ariel Wisniacki.
\newblock Optimal control of many-body quantum dynamics: Chaos and complexity.
\newblock \emph{Physical Review A}, 94\penalty0 (3):\penalty0 033406, 2016.
\newblock \doi{10.1103/PhysRevA.94.033406}.
\newblock URL
  \url{https://journals.aps.org/pra/abstract/10.1103/PhysRevA.94.033406}.

\bibitem[Larocca and Wisniacki(2021)]{larocca2021krylov}
Mart{\'\i}n Larocca and Diego Wisniacki.
\newblock Krylov-subspace approach for the efficient control of quantum
  many-body dynamics.
\newblock \emph{Physical Review A}, 103\penalty0 (2):\penalty0 023107, 2021.
\newblock \doi{10.1103/PhysRevA.103.023107}.
\newblock URL
  \url{https://journals.aps.org/pra/abstract/10.1103/PhysRevA.103.023107}.

\bibitem[Erdos and Renyi(1959)]{erdds1959random}
P~Erdos and A~Renyi.
\newblock On random graphs i.
\newblock \emph{Publ. math. debrecen}, 6\penalty0 (290-297):\penalty0 18, 1959.
\newblock URL
  \url{http://snap.stanford.edu/class/cs224w-readings/erdos59random.pdf}.

\bibitem[Arenz and Rabitz(2020)]{arenz2020drawing}
Christian Arenz and Herschel Rabitz.
\newblock Drawing together control landscape and tomography principles.
\newblock \emph{Physical Review A}, 102\penalty0 (4):\penalty0 042207, 2020.
\newblock \doi{10.1103/PhysRevA.102.042207}.
\newblock URL
  \url{https://journals.aps.org/pra/abstract/10.1103/PhysRevA.102.042207}.

\bibitem[Puchala and Miszczak(2017)]{puchala2017symbolic}
Zbigniew Puchala and Jaroslaw~Adam Miszczak.
\newblock Symbolic integration with respect to the haar measure on the unitary
  groups.
\newblock \emph{Bulletin of the Polish Academy of Sciences Technical Sciences},
  65\penalty0 (1):\penalty0 21--27, 2017.
\newblock \doi{10.1515/bpasts-2017-0003}.
\newblock URL
  \url{http://journals.pan.pl/dlibra/publication/121307/edition/105697/content}.

\bibitem[Gard et~al.(2020)Gard, Zhu, Barron, Mayhall, Economou, and
  Barnes]{gard2020efficient}
Bryan~T Gard, Linghua Zhu, George~S Barron, Nicholas~J Mayhall, Sophia~E
  Economou, and Edwin Barnes.
\newblock Efficient symmetry-preserving state preparation circuits for the
  variational quantum eigensolver algorithm.
\newblock \emph{npj Quantum Information}, 6\penalty0 (1):\penalty0 1--9, 2020.
\newblock \doi{10.1038/s41534-019-0240-1}.
\newblock URL \url{https://www.nature.com/articles/s41534-019-0240-1}.

\bibitem[Kokail et~al.(2019)Kokail, Maier, van Bijnen, Brydges, Joshi,
  Jurcevic, Muschik, Silvi, Blatt, Roos, et~al.]{kokail2019self}
Christian Kokail, Christine Maier, Rick van Bijnen, Tiff Brydges, Manoj~K
  Joshi, Petar Jurcevic, Christine~A Muschik, Pietro Silvi, Rainer Blatt,
  Christian~F Roos, et~al.
\newblock Self-verifying variational quantum simulation of lattice models.
\newblock \emph{Nature}, 569\penalty0 (7756):\penalty0 355--360, 2019.
\newblock \doi{10.1038/s41586-019-1177-4}.
\newblock URL \url{https://www.nature.com/articles/s41586-019-1177-4}.

\bibitem[Sharma et~al.(2020)Sharma, Khatri, Cerezo, and Coles]{sharma2019noise}
Kunal Sharma, Sumeet Khatri, M.~Cerezo, and Patrick~J Coles.
\newblock Noise resilience of variational quantum compiling.
\newblock \emph{New Journal of Physics}, 22\penalty0 (4):\penalty0 043006,
  2020.
\newblock \doi{10.1088/1367-2630/ab784c}.
\newblock URL
  \url{https://iopscience.iop.org/article/10.1088/1367-2630/ab784c}.

\bibitem[Tkachenko et~al.(2021)Tkachenko, Sud, Zhang, Tretiak, Anisimov,
  Arrasmith, Coles, Cincio, and Dub]{tkachenko2020correlation}
Nikolay~V Tkachenko, James Sud, Yu~Zhang, Sergei Tretiak, Petr~M Anisimov,
  Andrew~T Arrasmith, Patrick~J. Coles, Lukasz Cincio, and Pavel~A Dub.
\newblock Correlation-informed permutation of qubits for reducing ansatz depth
  in vqe.
\newblock \emph{PRX Quantum}, 2\penalty0 (2):\penalty0 020337, 2021.
\newblock \doi{10.1103/PRXQuantum.2.020337}.
\newblock URL
  \url{https://journals.aps.org/prxquantum/abstract/10.1103/PRXQuantum.2.020337}.

\bibitem[Kiani et~al.(2020)Kiani, Lloyd, and Maity]{kiani2020learning}
Bobak~Toussi Kiani, Seth Lloyd, and Reevu Maity.
\newblock Learning unitaries by gradient descent.
\newblock \emph{arXiv preprint arXiv:2001.11897}, 2020.
\newblock URL \url{https://arxiv.org/abs/2001.11897}.

\bibitem[Wang et~al.(2020)Wang, Rubin, Dominy, and Rieffel]{wang2020x}
Zhihui Wang, Nicholas~C Rubin, Jason~M Dominy, and Eleanor~G Rieffel.
\newblock {$XY$} mixers: Analytical and numerical results for the quantum
  alternating operator ansatz.
\newblock \emph{Physical Review A}, 101\penalty0 (1):\penalty0 012320, 2020.
\newblock \doi{10.1103/PhysRevA.101.012320}.
\newblock URL
  \url{https://journals.aps.org/pra/abstract/10.1103/PhysRevA.101.012320}.

\bibitem[B{\"a}rtschi and Eidenbenz(2020)]{bartschi2020grover}
Andreas B{\"a}rtschi and Stephan Eidenbenz.
\newblock Grover mixers for qaoa: Shifting complexity from mixer design to
  state preparation.
\newblock In \emph{2020 IEEE International Conference on Quantum Computing and
  Engineering (QCE)}, pages 72--82. IEEE, 2020.
\newblock \doi{10.1109/QCE49297.2020.00020}.
\newblock URL
  \url{https://www.computer.org/csdl/proceedings-article/qce/2020/896900a072/1p2VnUCmpYA}.

\bibitem[Ho and Hsieh(2019)]{ho2019efficient}
Wen~Wei Ho and Timothy~H. Hsieh.
\newblock {Efficient variational simulation of non-trivial quantum states}.
\newblock \emph{SciPost Phys.}, 6:\penalty0 29, 2019.
\newblock \doi{10.21468/SciPostPhys.6.3.029}.
\newblock URL \url{https://scipost.org/10.21468/SciPostPhys.6.3.029}.

\bibitem[Cade et~al.(2020)Cade, Mineh, Montanaro, and
  Stanisic]{cade2020strategies}
Chris Cade, Lana Mineh, Ashley Montanaro, and Stasja Stanisic.
\newblock Strategies for solving the fermi-hubbard model on near-term quantum
  computers.
\newblock \emph{Physical Review B}, 102\penalty0 (23):\penalty0 235122, 2020.
\newblock \doi{10.1103/PhysRevB.102.235122}.
\newblock URL
  \url{https://journals.aps.org/prb/abstract/10.1103/PhysRevB.102.235122}.

\bibitem[Zhao and Gao(2021)]{zhao2021analyzing}
Chen Zhao and Xiao-Shan Gao.
\newblock Analyzing the barren plateau phenomenon in training quantum neural
  networks with the {ZX}-calculus.
\newblock \emph{{Quantum}}, 5:\penalty0 466, June 2021.
\newblock ISSN 2521-327X.
\newblock \doi{10.22331/q-2021-06-04-466}.
\newblock URL \url{https://doi.org/10.22331/q-2021-06-04-466}.

\bibitem[Zhang et~al.(2020)Zhang, Hsieh, Liu, and Tao]{zhang2020toward}
Kaining Zhang, Min-Hsiu Hsieh, Liu Liu, and Dacheng Tao.
\newblock Toward trainability of quantum neural networks.
\newblock \emph{arXiv preprint arXiv:2011.06258}, 2020.
\newblock URL \url{https://arxiv.org/abs/2011.06258}.

\bibitem[Sauvage et~al.(2021)Sauvage, Sim, Kunitsa, Simon, Mauri, and
  Perdomo-Ortiz]{sauvage2021flip}
Frederic Sauvage, Sukin Sim, Alexander~A Kunitsa, William~A Simon, Marta Mauri,
  and Alejandro Perdomo-Ortiz.
\newblock Flip: A flexible initializer for arbitrarily-sized parametrized
  quantum circuits.
\newblock \emph{arXiv preprint arXiv:2103.08572}, 2021.
\newblock URL \url{https://arxiv.org/abs/2103.08572}.

\bibitem[Liao et~al.(2021)Liao, Hsieh, and Ferrie]{liao2021quantum}
Yidong Liao, Min-Hsiu Hsieh, and Chris Ferrie.
\newblock Quantum optimization for training quantum neural networks.
\newblock \emph{arXiv preprint arXiv:2103.17047}, 2021.
\newblock URL \url{https://arxiv.org/abs/2103.17047}.

\bibitem[Chakrabarti and Rabitz(2007)]{chakrabarti2007quantum}
Raj Chakrabarti and Herschel Rabitz.
\newblock Quantum control landscapes.
\newblock \emph{International Reviews in Physical Chemistry}, 26\penalty0
  (4):\penalty0 671--735, 2007.
\newblock \doi{10.1080/01442350701633300}.
\newblock URL
  \url{https://www.tandfonline.com/doi/abs/10.1080/01442350701633300}.

\bibitem[Larocca et~al.(2018)Larocca, Poggi, and Wisniacki]{larocca2018quantum}
Mart{\'{\i}}n Larocca, Pablo~M Poggi, and Diego~A Wisniacki.
\newblock Quantum control landscape for a two-level system near the quantum
  speed limit.
\newblock \emph{Journal of Physics A: Mathematical and Theoretical},
  51\penalty0 (38):\penalty0 385305, aug 2018.
\newblock \doi{10.1088/1751-8121/aad657}.
\newblock URL \url{https://doi.org/10.1088/1751-8121/aad657}.

\bibitem[Larocca et~al.(2020)Larocca, Calzetta, and
  Wisniacki]{larocca2020exploiting}
Mart\'{\i}n Larocca, Esteban Calzetta, and Diego~A. Wisniacki.
\newblock Exploiting landscape geometry to enhance quantum optimal control.
\newblock \emph{Physical Review A}, 101:\penalty0 023410, Feb 2020.
\newblock \doi{10.1103/PhysRevA.101.023410}.
\newblock URL \url{https://link.aps.org/doi/10.1103/PhysRevA.101.023410}.

\bibitem[Brown and Viola(2010)]{brown2010random}
Winton~G. Brown and Lorenza Viola.
\newblock Convergence rates for arbitrary statistical moments of random quantum
  circuits.
\newblock \emph{Phys. Rev. Lett.}, 104:\penalty0 250501, Jun 2010.
\newblock \doi{10.1103/PhysRevLett.104.250501}.
\newblock URL \url{https://link.aps.org/doi/10.1103/PhysRevLett.104.250501}.

\bibitem[D’Alessandro and Hartwig(2021)]{d2021dynamical}
Domenico D’Alessandro and Jonas~T Hartwig.
\newblock Dynamical decomposition of bilinear control systems subject to
  symmetries.
\newblock \emph{Journal of Dynamical and Control Systems}, 27\penalty0
  (1):\penalty0 1--30, 2021.
\newblock \doi{https://doi.org/10.1007/s10883-020-09488-0}.

\end{thebibliography}

\cleardoublepage

\onecolumngrid

\appendix

\begin{center}
	{\Large \bf Appendices} 
\end{center}

\setcounter{section}{0}
\setcounter{proposition}{0}
\setcounter{figure}{0}
\setcounter{corollary}{0}
\setcounter{theorem}{0}
\setcounter{definition}{0}

\bigskip

In the following appendices we present additional information and derive proofs for the main results in the manuscript. In Appendix~\ref{sec:app:preliminaries} we introduce preliminary notation and definitions that will be  relevant for the rest of the appendices. Then, in Appendix~\ref{sec:app:ansatzes} we provide additional details on different widely known ansatzes that are Periodic Structure Ansatz (PSA). In Appendix~\ref{sec:app:barren plateaus} we provide a brief review of barren plateaus. Appendices~\ref{app:proof_theo_convergence}--\ref{app:proof_su2} contain the proofs of our main Theorems, Corollaries and Propositions. Finally, in Appendix~\ref{App:XXZ} we discuss the symmetries in the $XXZ$ and Ising spin models considered in the main text, and in Appendix~\ref{App:numerical} we provide additional details on the initial state used for the numerical simulations of the $XXZ$ model.

\section{Preliminaries}\label{sec:app:preliminaries}
Let us first review some definitions and prior results that will be relevant for the rest of the appendices.

\medskip

 \textbf{Properties of the Haar measure.} Let $\mathcal{U}(d)$ denote the group of $d\times d$ unitary matrices. Let $d\mu_H(V) = d\mu(V)$ be the volume element of the Haar measure, where $V\in \mathcal{U}(d)$. Then, the Haar measure has the following properties: (1) The volume of the Haar measure is finite: $\int_{\mathcal{U}(d)} d\mu(V) < \infty$. (2) The Haar measure is uniquely defined up to a multiplicative constant factor. (3) Let $d\zeta(V)$ be an invariant measure. Then there exists a constant $c$ such that $d\zeta(V) = c\cdot d\mu(V)$. (4) The Haar measure is left- and right-invariant under the action of the unitary group of degree $d$, i.e., for any integrable function  $g(V)$, the following holds: 
 \begin{align}
\int_{\mathcal{U}(d)} d\mu(V) g(W V) = \int_{\mathcal{U}(d)}d\mu(V) g(VW) = \int_{\mathcal{U}(d)} d\mu(V) g(V), 
\end{align}
where $W\in \mathcal{U}(d)$.

\medskip

\textbf{Symbolic integration.}
Let us present formulas that allow for the symbolical integration with respect to the Haar measure on a unitary group~\cite{puchala2017symbolic}. For any $V\in \mathcal{U}(d)$ the following expressions are valid for the first two moments:  
\small
\begin{equation}\label{eq:delta}
\begin{aligned}
    \int_{\mathcal{U}(d)} d\mu(U)u_{\vec{i}\vec{j}}u_{\vec{p}\vec{k}}^*&=\frac{\delta_{\vec{i}\vec{p}}\delta_{\vec{j}\vec{k}}}{d}\,,   \\
\int_{\mathcal{U}(d)} d\mu(U)u_{\vec{i}_1\vec{j}_1}u_{\vec{i}_2\vec{j}_2}u_{\vec{i}_1'\vec{j}_1'}^{*}u_{\vec{i}_2'\vec{j}_2'}^{*}&=\frac{\delta_{\vec{i}_1\vec{i}_1'}\delta_{\vec{i}_2\vec{i}_2'}\delta_{\vec{j}_1\vec{j}_1'}\delta_{\vec{j}_2\vec{j}_2'}+\delta_{\vec{i}_1\vec{i}_2'}\delta_{\vec{i}_2\vec{i}_1'}\delta_{\vec{j}_1\vec{j}_2'}\delta_{\vec{j}_2\vec{j}_1'}}{d^2-1}
-\frac{\delta_{\vec{i}_1\vec{i}_1'}\delta_{\vec{i}_2\vec{i}_2'}\delta_{\vec{j}_1\vec{j}_2'}\delta_{\vec{j}_2\vec{j}_1'}+\delta_{\vec{i}_1\vec{i}_2'}\delta_{\vec{i}_2\vec{i}_1'}\delta_{\vec{j}_1\vec{j}_1'}\delta_{\vec{j}_2\vec{j}_2'}}{d(d^2-1)}\,,
\end{aligned}
\end{equation}
\normalsize
where $u_{\vec{i}\vec{j}}$ are the matrix elements of $U$. Assuming $d=2^n$, we use the notation $\vec{i} = (i_1, \dots i_n)$ to denote a bitstring of length $n$ such that $i_1,i_2,\dotsc,i_{n}\in\{0,1\}$. 

\medskip

\textbf{Useful Identities.} 
We introduce the following identities, which can be derived using Eq.~\eqref{eq:delta} (see~\cite{cerezo2020cost} for a review):
\footnotesize
\begin{align}
    \int_{\UC(d)} d \mu(U) \operatorname{Tr}\left[U A U^{\dagger} B\right]&=\frac{\operatorname{Tr}[A] \operatorname{Tr}[B]}{d},\label{Eq_lemma1}\\
    \int_{\UC(d)} d \mu(U) \operatorname{Tr}\left[U A U^{\dagger} B U C U^{\dagger} D\right] &=\frac{\operatorname{Tr}[A] \operatorname{Tr}[C] \operatorname{Tr}[B D]+\operatorname{Tr}[A C] \operatorname{Tr}[B] \operatorname{Tr}[D]}{d^{2}-1}  -\frac{\operatorname{Tr}[A C] \operatorname{Tr}[B D]+\operatorname{Tr}[A] \operatorname{Tr}[B] \operatorname{Tr}[C] \operatorname{Tr}[D]}{d\left(d^{2}-1\right)},\label{Eq_lemma2}\\
        \int_{\UC(d)} d \mu(U) \operatorname{Tr}\left[U A U^{\dagger} B\right] \operatorname{Tr}\left[U C U^{\dagger} D\right]& = \frac{\operatorname{Tr}[A] \operatorname{Tr}[B] \operatorname{Tr}[C] \operatorname{Tr}[D]+\operatorname{Tr}[A C] +\operatorname{Tr}[B D]}{d^{2}-1}-\frac{\operatorname{\Tr}[A C] \operatorname{Tr}[B] \operatorname{Tr}[D]+\operatorname{Tr}[A] \operatorname{Tr}[C] \operatorname{Tr}[B D]}{d\left(d^{2}-1\right)},\label{Eq_lemma3}
\end{align}
\normalsize
where $A, B, C$, and $D$ are linear operators on a $d$-dimensional Hilbert space. 

\medskip

\textbf{Integration over parameter space}: In the next sections we will derive analytical expressions for the variance of the  partial derivatives of cost functions $C(\thv)$ over parametrized circuits $U(\thv)$. In such derivations, we will have to deal with integration over the parameter space. A key step in the following analysis will be to relate the integration over parameters with integration over the ensemble of unitaries arising from different parameter choices. In this sense, we recall that given a set of parameters $\{\thv\}$ one can obtain an associated set of unitaries generated by the quantum circuit $\{U(\thv)\}$. Then,  consider the integration of some function $f(U(\thv))$ over $\thv$. Defining $\Ubb$ as the distribution of unitaries generated by $U(\thv)$, the following identity holds
\begin{equation}\label{eq:paramstounit}
    \int_{\thv} d\thv f(U(\thv)) = \int_{\Ubb} dU f(U)\,.
\end{equation}

In addition, if the distribution of unitaries $\Ubb$ can be shown to converge to a $2$-design, the integration over the distibution can be further converted into an integration over the Haar measure
\begin{equation}
    \int_{\Ubb} dU \xrightarrow[]{} \int_{\UC(d)} d\mu(U)
\label{eq:tohaar}
\end{equation}
allowing the use of identities \eqref{Eq_lemma1}, \eqref{Eq_lemma2}, and~\eqref{Eq_lemma3}.

\section{Ansatzes}\label{sec:app:ansatzes}
In general, ansatzes for parametrized quantum circuits can be divided into two primary categories: problem-agnostic and problem-inspired ansatzes. In a problem agnostic ansatz one does not have any information about the problem, or its solution, that one can encode in the ansatz. Such is the case for instance in a task of estimating the spectrum of an unknown density operator~\cite{cerezo2020variational}.  On the other hand, problem-inspired ansatzes employ prior information about a given problem or task. For example, for the problem of estimating the ground state energy of a particular Hamiltonian, one can design ansatzes that preserve the symmetry of the problem Hamiltonian~\cite{gard2020efficient}. 

Here we remark that several well known problem-agnostic and problem-inspired ansatzes in the literature are PSAs of the form in Eq.~\eqref{eq:ansatz}. In particular, our framework allows us to study the hardware-efficient ansatz (HEA) \cite{kandala2017hardware}, quantum alternating operator ansatz (QAOA) \cite{farhi2014quantum}, Adaptive QAOA \cite{hadfield2019quantum,zhu2020adaptive}, Hamiltonian variational ansatz (HVA) \cite{wecker2015progress}, and Quantum Optimal Control Ansatz (QAOC) \cite{choquette2020quantum}.

 Below we provide several examples of problem-inspired and problem-agnostic PSA and highlight their advantages in different problems.

\medskip

\textbf{Hardware efficient ansatz.} The Hardware Efficient Ansatz (HEA) is a problem-agnostic ansatz, which relies on gates native to a quantum hardware. In particular, an ansatz can be designed based on a gate alphabet, which depends on the architecture and the connectivity of a given quantum hardware. This procedure helps in avoiding the overhead associated with transpiling an arbitrary unitary into a sequence of native gates. For example, one can consider native gates, such as single qubit rotations $e^{-i\theta/2 Z}e^{-i\gamma/2 Y}$ and CNOTs, where $Y$ and $Z$ denote Pauli matrices, and a CNOT between the control qubit $i$ and the target qubit $j$ is given by: $e^{-i\pi/2 (|1\rangle \langle 1|_i\otimes(X_j-\id_j) )}$. Then an ansatz of the form in \eqref{eq:ansatz} can be generated as follows: one layer consists of parametrized single qubit rotations on each qubit, followed by unparametrized CNOTs acting on neighboring qubits. 

The HEA has been employed to prepare the ground state of molecules \cite{kandala2017hardware}, to study  Hamiltonians that are similar to the device's interactions \cite{kokail2019self}, and in several other variational quantum algorithms \cite{khatri2019quantum,sharma2019noise,tkachenko2020correlation,cerezo2020cost}. The HEA is also suitable in the near-term implementations of VQAs due to its low-depth structure which results into a lower-noise circuit in comparison to other ansatze~\cite{tkachenko2020correlation,wang2020noise}.

\medskip

\textbf{Quantum alternating operator ansatz.} The Quantum Alternating Operator Ansatz (QAOA) is a problem-inspired ansatz that simulates the discretized adiabatic transformations \cite{farhi2014quantum}. Consider a goal of preparing the ground state of a problem Hamiltonian $H_P$. Let $H_M$ denote a mixer Hamiltonian, with corresponding ground state $|\psi\rangle$. Then the QAOA maps $\ket{\psi}$ to the ground state of $H_P$ by sequentially applying the problem unitary $e^{-i \gamma_l H_P}$, followed by the mixer unitary $e^{-i \beta_l H_M}$. Let $\vec{\theta} = (\vec{\gamma}, \vec{\beta})$. Then the QAOA is given by $U(\vec{\theta}) = \prod_{l=1}^{L} e^{-i \beta_{l} H_{M}} e^{-i \gamma_{l} H_{P}}$, which follows the general form of the ansatz defined in \eqref{eq:ansatz}. Here, $p$ is the order of the discretized adiabatic transformation and it determines the precision of the solution \cite{farhi2014quantum}. The QAOA was originally introduced for finding approximate solutions to combinatorial optimization problems \cite{farhi2014quantum}.  The QAOA has been generalized as a standalone ansatz \cite{hadfield2019quantum} and its performance has been investigated in several tasks, including the task of learning a unitary \cite{kiani2020learning}. Moreover, the QAOA has been shown to be computationally universal \cite{lloyd2018quantum,morales2020universality}, and the choice of optimal mixer is still an open debate~\cite{wang2020x,bartschi2020grover}. 

\medskip 

\textbf{Adaptive QAOA.} As a consequence of the adiabatic theorem, the QAOA should lead to good solutions for high values of $p$ \cite{farhi2014quantum}. However, for small values of $p$, the QAOA is an \textit{ad-hoc} ansatz, which is not necessarily an optimal strategy to approximate the ground state of the problem Hamiltonian. A way to improve such an \textit{ad-hoc} ansatz is to employ a variable mixer instead of a fixed mixer at each layer \cite{hadfield2019quantum}. Let $\{G_k\}_{k=1}^q$ denote a set of mixer Hamiltonians. Then an adaptive QAOA can be defined as follows: 
$U(\vec{\theta}) = \prod_{l=1}^L e^{-i \beta_l G_l}e^{-i \gamma_l H_p}$, where each $G_l$ can be adaptively picked from $\{G_k\}_{k=1}^q$. 

One particular adaptive approach was introduced in \cite{zhu2020adaptive}, where at each layer, $G_l$ is picked based on the largest gradient of the cost function among all $\{G_l\}$. Moreover, \cite{zhu2020adaptive} observed that adaptive entangling mixers can improve performance and reduce the number of parameters and CNOTs to achieve a desired accuracy in comparison to the non-adaptive QAOA. We note that the adaptive QAOA follows the form in \eqref{eq:ansatz} if adaptive mixers are learned up to a fixed layer and then the whole structure is repeated. That is, $U(\vec{\theta}) = \prod_{m=1}^p \prod_{l=1}^r e^{-i \beta_{l,m}G_{l,m}}e^{-i \gamma_{l,m} H_P}$, where $G_{l,1}$ are learned adaptively for each $l \in \{1, \dots, r\}$, and $G_{l,1} = G_{l,m}$ for all $m \in \{1, \dots, m\}$.

\medskip 

\textbf{Hamiltonian variational ansatz.} The Hamiltonian variational ansatz is another problem-inspired ansatz, which implements time evolution under problem Hamiltonian via Trotterization \cite{wecker2015progress}. It can be understood as a generalization of the QAOA to more than two non-commuting Hamiltonians. Let $H_P = \sum_{l} H_l$ denote a problem Hamiltonian, such that $[H_l, H_{l^{'}}]\neq 0$. Then the HVA of order $p$ is given by $U(\vec{\theta}) = \prod_{k=1}^p \prod_{l} e^{-i \theta_{l,k} H_l}$, which is in the form of \eqref{eq:ansatz}. The HVA has been investigated in studying one- and  two-dimensional  quantum  many-body models \cite{ho2019efficient,cade2020strategies}. 

A simple example where the HVA can be employed is the XXZ model to study magnetism. For a one-dimensional chain, the Hamiltonian for the XXZ model is given by $H_{XXZ} = - \sum_{l=1}^n X_l X_{l+1} +  Y_lY_{l+1} + g Z_lZ_{l+1}$, where $g$ determines the phase of magnetisation. Let $H_A = \sum_{l=1}^n A_l A_{l+1}$, where $A \in \{X, Y, Z\}$.  Then, one way to parametrize a HVA of order $p$ is as follows: $U(\vec{\theta}) = \prod_{l=1}^p e^{- i\beta_l H_X} e^{-i\gamma_l H_Y} e^{-i\delta_l H_Z}$ for $g=1$, and where $\vec{\theta} = (\vec{\beta}, \vec{\gamma}, \vec{\delta})$.
Another way to parametrize a HVA is as follows: 
$U(\vec{\theta}) = \prod_{l=1}^p e^{- i\beta_l (H_X+H_Y)} e^{-i\delta_l H_Z}$, where we redefined $\vec{\theta}=(\vec{\beta}, \vec{\delta})$.

\medskip 

\textbf{Quantum optimal control ansatz.} The HVA discussed above helps constraining the variational search to a relevant symmetric subspace of the the total Hilbert space. In general, this approach might require high values of $p$ to achieve a desired accuracy in approximating the ground state of many-body Hamiltonians. One way to avoid high values of $p$ is to introduce drive terms in addition to the problem Hamiltonian, which break the symmetry of the problem Hamiltonian $H_p$. This approach falls under the framework of quantum optimal control \cite{choquette2020quantum}. In particular, let $\{\hat{H}_k\}$ denote a set of drive terms. Then the update time-dependent Hamiltonian is given by $\bar{H} (t)= H_P + \sum_k c_k(t) {H}_k$, where drive terms ${H}_k$ are picked such that $[H_P, {H}_k]\neq 0$ for all $k$. Here, $c_k(t)$ are time-dependent control parameters. Let $H_P = \sum_q H_q$ and let $\vec{\theta} = (\vec{\gamma}, \vec{\beta})$. Then the Quantum Optimal Control Ansatz  (QOCA) of order $L$ is given by $U(\vec{\theta}) = \prod_{l=1}^L \prod_q e^{-i \beta_{l,q} H_q} \prod_k e^{-i \gamma_{l,k} {H}_k}$, where $\gamma_{l,k}$ denote the discrete drive amplitudes of the control parameter $c_k(t)$. 

In general, finding an optimal drive Hamiltonian terms $\{{H}_k\}$ is a computationally challenging problem. One can employ an adaptive approach to pick drive Hamiltonians from a fixed set of Hamiltonian, similar to the adaptive QAOA \cite{zhu2020adaptive}. In \cite{choquette2020quantum}, the QOCA was shown to outperform other ansatze, including the HEA and the HVA for the task of preparation of the ground state of the half-filled Fermi Hubbard model.

\section{Barren Plateaus}\label{sec:app:barren plateaus}

As mentioned in the main text, the barren plateau phenomenon  has been recognized as one of the most important challenges to overcome to guarantee the success of VQAs. When a cost function exhibits a barren plateau, its gradients are exponentially suppressed (in average) across the optimization landscape. Consider the following mathematical definition.
\begin{definition}[Barren Plateau]\label{def:BPSM}
A cost function $C(\thv)$ as in Eq.~\eqref{eq:cost} is said to have a barren plateau when training $\theta_\mu\equiv\theta_{pq}\in\thv$,  if the cost function partial derivative $\partial C(\thv)/\partial \theta_{\mu}\equiv \partial_{\mu} C(\thv)$ is such that
\begin{equation}\label{eq:BPSM}
    \Var_{\thv}[\partial_{\mu} C(\thv)]\leq F(n)\,, \quad \text{with} \quad F(n)\in \OC\left(\frac{1}{b^n}\right)\,,
\end{equation}
for some $b>1$. Here the variance is taken with respect to the set of parameters $\thv$.
\end{definition}
From Chebyshev inequality we know that $\Var_{\thv}[\partial_{\mu} C(\thv)]$ bounds the probability that $\partial_{\mu} C(\thv)$ diverges from its average (of zero) as $P(|(\partial_{\mu} C(\thv)|>c)\leq \Var_{\thv}[\partial_{\mu} C(\thv)]/c^2$ for any $c>0$. 

Equation~\eqref{eq:BPSM} implies that one requires a precision (i.e., a number of shots) that grows exponentially with $n$  to navigate trough the flat landscape and determine a cost minimizing direction when optimizing the cost function. Moreover, as shown in~\cite{cerezo2020impact,arrasmith2020effect}, barren plateaus affect both gradient-based and gradient-free methods meaning that simply changing the optimization strategy does not mitigate or solve the barren plateau issues. Since the goal of VQAs is to have computational complexities that scale polynomial with $n$, such exponential scaling in the required precision  destroys the hope of achieving a computational advantage with the VQA over classical methods (which usually scale exponentially with $n$). 

The first result for barren plateaus was obtained in~\cite{mcclean2018barren}, where it was shown that deep unstructured ansatz that form $2$-designs have barren plateaus. This phenomenon  was then generalized to layered Hardware Efficient Ansatzes in~\cite{cerezo2020cost} were it was proven that the locality of the cost function is connected to the existence of barren plateaus. That is, global cost functions (i.e., cost functions where $O$ in~\eqref{eq:cost} acts non-trivially in all qubits) exhibit barren plateaus even for shallow depths, whereas local cost functions (i.e., cost functions where $O$ in~\eqref{eq:cost} acts non-trivially in a small number of neighboring qubits) do not exhibit barren plateaus for short-depth ansatzes.

The barren plateaus phenomenon  has also been studied in the context of  quantum neural networks~\cite{sharma2020trainability,pesah2020absence,zhao2021analyzing}, and to the problem of learning scramblers~\cite{holmes2020barren}. In addition, it has been shown that circuits that generate large amounts of entanglement~\cite{sharma2020trainability,patti2020entanglement,marrero2020entanglement} are prone to suffer from barren plateaus. To circumvent or mitigate the effect of barren plateaus, several strategies have been developed~\cite{verdon2019learning,volkoff2021large,skolik2020layerwise,grant2019initialization,pesah2020absence,zhang2020toward,bharti2020quantum,cerezo2020variational,sauvage2021flip,liao2021quantum}. 

\section{Quantum Optimal Control}\label{app:qoc}

Here we recall for convenience that in a standard QOC setting one is interested in controlling the dynamical evolution of a quantum state $\kp$ in a$d$-dimensional Hilbert space $\H=\Cbb^d$ (where $d=2^n$)~\cite{dalessandro2010introduction}. Here, the system dynamics are determined by a Hamiltonian
\begin{equation}
    H(\{f_k(t)\}) = H_0+\sum_{k=1}^K f_k(t)H_k
\label{eq:QOC_Hamiltonian2}
\end{equation}
that is tunable through some time-dependent control fields functions $\{f_k(t)\}$.

At its core, the problem in QOC is to determine how to shape the control fields such that the system evolves in a desired manner. A specific set of optimal fields is usually constructed by imposing a parametrization on the functions and applying standard numerical optimization routines. The success of such optimization process depends on the structure of the underling optimization spaces, the so-called quantum control landscapes \cite{chakrabarti2007quantum,larocca2018quantum,larocca2020exploiting}.

For instance, a common choice is to consider piece-wise constant fields where the protocol duration $T$ is divided in $L$ intervals $\Delta t_j=t_j-t_{j-1}$ (such that $T=\sum_{j=1}^L \Delta t_j$) at each of which the fields take a constant value, e.g., $f_k(t) =f_{k,j}$ if $t_{j-1}<t<t_j$. In this case, the propagator factorizes into a product of individual sub-propagators, each of which is generated by a constant Hamiltonian and thus leads to the simple matrix exponential form
\begin{equation}
U(\{f_k(t)\}) \!=\! \prod_{l=1}^L e^{-i \tilde{H}_l \Delta t_l},\,\,\, \tilde{H}_l\!=\!H_0+\sum_{k=1}^K f_{l,k} H_k\,.
\label{eq:QOC_propagator}
\end{equation}
By Trotterizing Eq. \eqref{eq:QOC_propagator} one finds
\begin{equation}
U(\{f_k(t)\}) \approx \prod_{l=1}^L\prod_{k=0}^K e^{-i H_k \Delta t_l f_{l,k}}\,,
\label{eq:QOC_propagatorTrot}
\end{equation}
where $f_{l,0}=1$ for all $l$. Note that~\eqref{eq:QOC_propagatorTrot} is a PSA of the form of Eq.~\eqref{eq:ansatz}, through the identification 
\begin{equation}
        \theta_{l,0}=\Delta t_0\,, \quad \quad
        \theta_{l,k}=\Delta t_l f_{l,k} , \quad k\geq 1\,.
\label{eq:correspondence}
\end{equation}
We remark that in the limit $\Delta t_l \xrightarrow[]{} 0$, Eq.~\eqref{eq:QOC_propagatorTrot} becomes exact. In the general case, the exact and Trotterized ansatzes approximately coincide, and a nontrivial correction of Eq.~\eqref{eq:correspondence} is needed to make the correspondence exact. In any case, Eq.~\eqref{eq:correspondence} allows us to henceforth use the notation $U(\{f_k(t)\})= U(\thv)$ and indicate with $\thv$ the trainable parameters in a QOC setting.

\section{Dynamical Lie Algebra computation}\label{app:alg}

First, let us recall that a Lie algebra is a vector space $\mf{g}$ together with an operation $[\cdot,\cdot]:\mf{g}\rightarrow{}\mf{g}$ called \textit{Lie bracket} that is bilinear, \textit{alternating} (the output is zero if the inputs are linearly dependent) and satisfies the identity $[x,[y,z]]+[y,[z,x]] + [z,[x,y]] = 0$, known as the Jacobi identity. Lie algebras are vector spaces that are \textit{closed} under such Lie bracket, i.e. $[x,y]\in\mf{g}$ for all $x,y\in\mf{g}$. Note that this operation is not necessarily associative, i.e. $[x[y,z] \neq [[x,y],z]$. In fact, this is precisely what the Jacobi identity captures: how the order of evaluation affects the result of the operation.

In our quantum context, Lie algebras manifest as matrix Lie algebras. For example, the space of quantum observables $\mf{u}(d)$ is a subspace of the vector space of $d\times d$ complex matrices that is closed under matrix commutator (playing the role of a Lie bracket). 
More generally, we will encounter ourselves with Lie algebras that form Lie subalgebras of $\mf{u}(d)$. A subalgebra is a subspace of an algebra that is itself closed under the Lie bracket. For example, in a $n$ qubit quantum system, the subspace $\Omega=\spn\{ X,Y,Z \}$, where $X=\sum_i X_i$, $Y=\sum_i Y_i$ and $Z=\sum_i Z_i$ is closed under commutation and thus constitutes a 3-dimensional subalgebra of the $4^n$-dimensional operator space.

Specifically, we will be interested in the so-called \textit{dynamical Lie algebra} (DLA). This Lie algebra is the subspace of $\mf{u}(d)$ generated by the Lie closure of the generators of a paramterized quantum circuit (see Definition~\ref{def:dynamical_lie_algebra}). In general, computing the DLA is a highly nontrivial task. One possible approach to the DLA is direct construction, i.e. start with a set of generators defining a subspace of operator space (but not a subalgebra), and start commuting them, finding new elements until one obtains a basis of the DLA (see Algorithm~\ref{alg:lie}). The complexity of such approach is, in general, $\OC(\rm poly(d))$ with $d=2^n$, that is, exponential in $n$ the number of qubits. For example, a naive approach (representing operators as dense $d\times d$ matrices) yields roughly $\OC(d^2d^6)$,  since, in general, one has to check linear independence $\OC(d^2)$, for example by implementing LU or QR decompositions (whose cost is $\OC(N^3)$ for $N\times N$ matrices) on such $d\times d$ matrices. Although such complexity can be reduced, for example, using more intelligent representations of operators, in essence direct construction is attempting to build a basis for a subalgebra of $\mf{su}(d^2)$, i.e. a basis with potentially as many as $d^2$ elements, and therefore it cannot generally avoid exponentiality.

Despite being exponential, direct construction of the DLA (either numerically or analitically) on small system sizes can constitute a remarkably useful tool to later extrapolate or prove the scaling (e.g. by induction) of the DLA beyond those small 'afforable' system sizes. Moreover, note that in many cases one may only be interested in checking whether the dimension of the DLA is above a certain threshold, a task with complexity linear in the size of such threshold. Of course, this is neglecting the complexity of computing new DLA elements, which, as mentioned above, can be substantially diminished by choosing efficient representations for those  operators.

\begin{algorithm}[t]
    \DontPrintSemicolon
    \KwIn{Set of generators $\GC$ of the ansatz.}
    \KwOut{Basis $S$ of the algebra $\liea$ obtained from $\GC$.}
     \kwInit{  $S \gets \GC$, $S_{prev} \gets \GC$, $S_{new}\gets\{ \}$, $C\gets \id$, $\texttt{new}\gets 1$.  }
    \While{\texttt{new}>0}{
        $S_{new}\gets\{ \}$.\;
        \For{$h_0 \in \GC$}{
            \For{$h \in S_{prev}$}{
                $C\gets[h_0,h]$\;
                \If{$S\cup\{C\}$ is linearly independent\;}{
                    $S\gets S\cup\{C\}$.\;
                    $S_{new}\gets S_{new}\cup\{C\}$.\;
                }
            }
        }
        $\texttt{new}\gets |S_{new}|$.\;
        $S_{prev} \gets S_{new}$.\;
    }
    \Return{$S$}
    \caption{Basis for the Dynamical Lie Algebra (DLA).}
    \label{alg:lie}
 \end{algorithm}

\section{Proof of Theorem 1: Convergence of controllable systems to 2-designs}\label{app:proof_theo_convergence}

In the following we provide a proof for Theorem \ref{theo:depth}, which we recall for convenience.
\begin{theorem}\label{theo:depthSM}
Consider a controllable system.  Then, the PSA $U(\thv)$ will form an $\varepsilon$-approximate $2$-design, i.e.  $\Vert\AC^{(2)}_{U(\thv)}\Vert_{\infty}=\varepsilon$ with $\epsilon>0$, when the number of layers $L$ in the circuit is
\begin{equation}
    L=\frac{\log(1/\varepsilon)}{\log\left(1/\Vert\AC^{(2)}_{U_1(\thv)}\Vert_{\infty}\right)}\,.
\end{equation}
Here $\Vert\AC^{(2)}_{U_1(\thv)}\Vert_{\infty}$ denotes the expressibility of a single layer  $U_1(\thv_1)$ of the ansatz according to Eqs.~\eqref{eq:ansatz} and~\eqref{eq:superop}.
\end{theorem}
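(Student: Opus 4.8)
The plan is to reduce the statement to the single multiplicative identity $\Vert\AC^{(2)}_{U(\thv)}\Vert_{\infty} = (\Vert\AC^{(2)}_{U_1(\thv)}\Vert_{\infty})^L$ quoted in the main text, and then simply invert it to solve for $L$. First I would exploit the product structure of the PSA. Since the $L$ layers $U_l(\thv_l)$ are built from independent parameter blocks $\thv_l$ and are identically distributed, the second-moment superoperator factorizes as $M^{(2)}_{U(\thv)} = \prod_{l=1}^{L} M^{(2)}_{U_l} = (M^{(2)}_{U_1})^L$. This is because $(VW)^{\otimes 2}\otimes(V^{*}W^{*})^{\otimes 2} = (V^{\otimes 2}\otimes V^{*\otimes 2})(W^{\otimes 2}\otimes W^{*\otimes 2})$, so the integrals over independent blocks separate into a product of single-layer moment operators.

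Next I would record the relevant properties of the Haar projector $P := M^{(2)}_{U_H}$: it is self-adjoint and idempotent ($P = P^\dagger$, $P^2 = P$), i.e. the orthogonal projector onto the commutant, and by left/right invariance of the Haar measure it absorbs any moment operator, $P\,M^{(2)}_{U_1} = M^{(2)}_{U_1}\,P = P$. Writing $M := M^{(2)}_{U_1}$, these relations give, by an immediate induction on $L$, the algebraic identity $(M-P)^L = M^L - P$. Hence $\AC^{(2)}_{U(\thv)} = P - M^L = -(M-P)^L = (-1)^{L+1}(\AC^{(2)}_{U_1})^L$, and taking the $\Vert\cdot\Vert_\infty$ norm yields $\Vert\AC^{(2)}_{U(\thv)}\Vert_\infty = \Vert(\AC^{(2)}_{U_1})^L\Vert_\infty$.

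The main obstacle is promoting this to the exact power law $\Vert(\AC^{(2)}_{U_1})^L\Vert_\infty = (\Vert\AC^{(2)}_{U_1}\Vert_\infty)^L$. Submultiplicativity of the operator norm only gives ``$\leq$''; equality requires that the largest singular value be multiplicative under powers, which holds once $\AC^{(2)}_{U_1}$ restricted to the orthogonal complement of $\mathrm{range}(P)$ is normal. I would establish this by noting that $P$ and $M$ commute (from $PM = MP = P$) and that $P$ is an orthogonal projector, so $W_0 := \mathrm{range}(P)$ and $W_0^\perp$ are jointly invariant; on $W_0$ one has $\AC^{(2)}_{U_1} = 0$, while on $W_0^\perp$ it reduces to $-M|_{W_0^\perp}$, whose spectral radius then equals its operator norm and is therefore raised to the $L$-th power under iteration. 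This is where controllability enters: it guarantees that the invariant subspace of $M$ coincides with that of the Haar projector and that the remaining spectrum lies strictly below $1$, so that the deep-circuit limit is a genuine $2$-design and $\Vert\AC^{(2)}_{U_1}\Vert_\infty < 1$, keeping the logarithms below well defined.

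Finally, with the power law in hand, setting $\Vert\AC^{(2)}_{U(\thv)}\Vert_\infty = \varepsilon$ gives $\varepsilon = (\Vert\AC^{(2)}_{U_1}\Vert_\infty)^L$; taking logarithms and rearranging yields $L = \log(1/\varepsilon)/\log(1/\Vert\AC^{(2)}_{U_1}\Vert_\infty)$, which is exactly Eq.~\eqref{eq:depth}. I expect the routine parts (the layer factorization, the projector identities, and the final logarithm) to be short, with essentially all of the care concentrated in justifying the norm power law on $W_0^\perp$.
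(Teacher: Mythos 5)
Your strategy coincides with the paper's: both proofs factorize the second-moment operator over the $L$ independent, identically distributed layers to get $M^{(2)}_{U(\thv)}=\bigl(M^{(2)}_{U_1}\bigr)^L$, reduce the deviation from the Haar projector $P=M^{(2)}_{U_H}$ to a single-layer quantity raised to the $L$-th power, and then invert for $L$. In fact your algebraic route to that reduction --- the absorption identity $PM^{(2)}_{U_1}=M^{(2)}_{U_1}P=P$ together with the induction $(M-P)^L=M^L-P$, giving the exact operator identity $\AC^{(2)}_{U(\thv)}=(-1)^{L+1}\bigl(\AC^{(2)}_{U_1}\bigr)^L$ --- is cleaner than the paper's, which instead posits an orthonormal eigendecomposition $M^{(2)}_{U_1}=\ketbraq{\phi_1}+\ketbraq{\phi_2}+\sum_i\lambda_i\ketbraq{\psi_i}$ and invokes controllability to argue that only the two Haar vectors carry eigenvalue of unit modulus.

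The gap is in the step you yourself flag as the crux: passing from $\Vert(\AC^{(2)}_{U_1})^L\Vert_\infty$ to $(\Vert\AC^{(2)}_{U_1}\Vert_\infty)^L$. Your justification --- that $P$ and $M$ commute and $P$ is an orthogonal projector, hence on $W_0^\perp$ the restriction $-M|_{W_0^\perp}$ ``has spectral radius equal to its operator norm'' --- is a non-sequitur: commutation with $P$ yields only the invariance of $W_0$ and $W_0^\perp$ and says nothing about normality of the restriction, and for a non-normal operator the spectral radius can be strictly below the norm while $\Vert A^L\Vert_\infty\neq\Vert A\Vert_\infty^L$. The worry is not hypothetical here: within one layer the parameters are independent, so $M^{(2)}_{U_1}=M_K\cdots M_0$ is a product of per-generator moment operators, each an orthogonal projector (this is exactly the $\delta(l_{1}+l_{2}-l_{3}-l_{4})$ structure appearing in the paper's own computation), and a product of non-commuting orthogonal projectors is generically non-normal; such products are the standard counterexample to the norm power law, e.g.\ for two rank-one projectors onto unit vectors at angle $\phi$ one has $\Vert(P_1P_2)^L\Vert_\infty=\cos^{2L-1}\phi$, not $\cos^L\phi$. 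The same unproven identification also hides in your final claim that controllability forces $\Vert\AC^{(2)}_{U_1}\Vert_\infty<1$: a spectrum strictly inside the unit disk does not bound the norm of a non-normal operator (that particular point can be rescued directly, since a vector whose norm is preserved by a product of orthogonal projectors must be fixed by every factor and hence lie in $\mathrm{range}(P)$). To close the main gap you must either prove that $M^{(2)}_{U_1}$ restricted to $W_0^\perp$ is normal, or weaken the statement to what your algebra gives unconditionally: submultiplicativity yields $\Vert\AC^{(2)}_{U(\thv)}\Vert_\infty\leq(\Vert\AC^{(2)}_{U_1}\Vert_\infty)^L$, so the stated $L$ is a sufficient depth, while Gelfand's formula gives the exact asymptotic rate with the spectral radius in place of the norm. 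For what it is worth, the paper's own proof silently makes the same assumption (its orthonormal eigendecomposition of $M^{(2)}_{U_1}$ and the identification $|\lambda_{\max}|=\Vert\AC^{(2)}_{U_1}\Vert_\infty$ presuppose normality), so you have correctly located the real difficulty --- you just have not resolved it.
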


\begin{proof}

To study the convergence of the PSA $U(\vec{\theta})$ to an approximate $2$-design we employ the tools of Harmonic analysis. The following arguments are based on Ref.~\cite{brown2010random}. This is similar to using Fourier analysis to study the convergence of a probability distribution on a real line to the normal distribution, which is also known as the central limit theorem.

The second moment operator corresponding to the distribution $\Ubb$ over unitaries $U$ can be defined as follows
\begin{align}
     M^{(2)}_{U} = \int_{\Ubb} dU  U\otimes U\otimes U^{*}\otimes U^{*}\,.
\end{align}
Let $M^{(2)}_{U_H}$ denote the second moment operator corresponding to the Haar distribution. Our goal is to evaluate the difference between $M^{(2)}_{U}$ and $M^{(2)}_{U_H}$.  Let us review some properties of $M^{(2)}_{U_H}$ before we calculate this distance operator. An important property of $M^{(2)}_{U_H}$ is that it is a projector onto a two-dimensional subspace, that is, $M^{(2)}_{U_H}$  has eigenvalues $0$ or $1$. We show this by noting that the following equations hold:
\begin{align}
    ( M^{(2)}_{U_H})^2 &= \left(\int_{\UC(d)} d\mu(U)  U\otimes U\otimes U^{*}\otimes U^{*}\right) \left(\int_{\UC(d)} d\mu(V)  V\otimes V\otimes V^{*}\otimes V^{*}\right)\nonumber \\
    &= \int_{\UC(d)} d\mu(U) \int_{\UC(d)} d\mu(V) \left(UV \otimes UV \otimes (UV)^* \otimes (UV)^*\right) \label{eq:invariance}\\
    &= \int_{\UC(d)} dU  dW  \left(W \otimes W\otimes W^{*}\otimes W^* \right) \nonumber\\
    &= \int_{\UC(d)} dW  W\otimes W\otimes W^{*}\otimes W^* \nonumber\\
    &=  M^{(2)}_{U_H}\,,
\end{align}
and
\begin{align}
    \Tr\left[ M^{(2)}_{U_H}\right] &= \Tr\left[\int_{\UC(d)} d\mu(U)  U\otimes U\otimes U^{*}\otimes U^{*}\right] \nonumber\\
    &= \sum\limits_{i,j,k,l=1}^d \int_{\UC(d)} d\mu(U)  U_{i,i} U_{j,j} U^*_{k,k} U^*_{l,l} \label{eq:wein}\\
    &= \frac{2}{d^2-1} \sum_{i,j=1}^d 1 - \frac{2}{d(d^2-1)} \sum_{i=1}^d 1 \nonumber \\
    &= 2\,.
\end{align}
In Eq.~\eqref{eq:invariance}, we used the left invariance of the Haar measure, and in~\eqref{eq:wein} we used the Weingarten function to explicitly evaluate the integral. The first property puts in evidence that $M^{(2)}_{U_H}$ is a projector and the second property shows that the eigenspace with eigenvalue 1 is a two-dimensional subspace.

Let $V=U_L \cdots U_2 U_1$, be an $L$-layered PSA, where each unitary $U_j$ is sampled from the same distribution $d\mu=P(U) dU$. Then, the probability distribution and moment operator of $V$ are respectively given by

\begin{align}
    \mu_L[V]&= \int_{\UC(d)} dU_1 \int_{\UC(d)} dU_2 \cdots \int_{\UC(d)} dU_L P(U_1)P(U_2)\cdots P(U_L) \delta(V-U_L\cdots U_2U_1)\,, \\
    M^{(2)}_{V}&= \int_{\UC(d)} d\mu_L[V] \quad V\otimes V\otimes V^{*}\otimes V^* \nonumber\\
    &= \int_{\UC(d)} dV  \int_{\UC(d)} dU_1 \int_{\UC(d)} dU_2 \cdots \int_{\UC(d)} dU_L  P(U_1)P(U_2)\cdots P(U_L) \delta(V-U_L\cdots U_2U_1) \quad V\otimes V\otimes V^{*}\otimes V^* \nonumber\\
    &=\int_{\UC(d)} dU_1 \int_{\UC(d)} dU_2 \cdots \int_{\UC(d)} dU_L  P(U_1)P(U_2)\cdots P(U_L) \quad (\prod_{l=1}^L U_l)\otimes (\prod_{l=1}^L U_l)\otimes (\prod_{l=1}^L U_l^*)\otimes (\prod_{l=1}^L U_l^*)\nonumber \\
    &=\prod_{l=1}^L \int_{\UC(d)} dU_l P(U_l) U_l\otimes U_l\otimes U_l^* \otimes U_l^* \nonumber\\
    &= \left(M^{(2)}_{U_1}\right)^L\,.\label{eq:momentV}
\end{align}
Equation~\eqref{eq:momentV} shows that the moment operator of an $L$-layered ansatz is equal to the $L$-th power of the moment operator of a single layer. We can also calculate this formally. In our case each $U_l$ is given by $U_l=\prod\limits_{k=1}^K e^{-i \theta_{lk} H_k}$, where $\GC=\{H_k\}_{k=1}^K$ are the set of generators, and where the $\theta_{lk}$ are sampled from the uniform distribution. Then, let us note that 
\small
\begin{align}
   M^{(2)}_{U_1} &= \left(\frac{1}{2\pi}\right)^n\int \prod_{k=1}^K d\theta_k \prod_{k=1}^K e^{-i \theta_k H_k} \otimes \prod_{k=1}^K e^{-i \theta_k H_k} \otimes \prod_{k=1}^K e^{i \theta_k H^*_k} \otimes \prod_{k=1}^K e^{i \theta_k H^*_k} \nonumber\\
    &= \left(\frac{1}{2\pi}\right)^n\prod_{k=1}^K \int d\theta_k e^{-i \theta_k H_k} \otimes e^{-i \theta_k H_k} \otimes  e^{i \theta_k H^*_k} \otimes e^{i \theta_k H^*_k}\nonumber\\
    &= \left(\frac{1}{2\pi}\right)^n \prod_{k=1}^K \left(V_k^{\otimes 2} \otimes V_k^{*\otimes 2}\right)\int d\theta_k \!\!\!\sum_{\substack{l_{1,k},l_{2,k}\\l_{3,k},l_{4,k}}}\!\!\! e^{-i\theta_k(l_{1,k} +l_{2,k} -l_{3,k} -l_{4,k} )} \ketbraq{l_{1,k},l_{2,k},l_{2,k},l_{4,k}} \left(W_k^{\otimes 2} \otimes W_k^{*\otimes 2}\right)^{\dagger} \nonumber\\
    &= \prod_{k=1}^K \left(W_k^{\otimes 2} \otimes W_k^{*\otimes 2}\right) \sum_{\substack{l_{1,k},l_{2,k}\\l_{3,k},l_{4,k}}} \delta(l_{1,k} +l_{2,k} -l_{3,k} -l_{4,k} ) \ketbraq{l_{1,k},l_{2,k},l_{3,k},l_{4,k}} \left(W_k^{\otimes 2} \otimes W_k^{*\otimes 2}\right)^{\dagger}\,.
\end{align}
\normalsize

Here $H_k\ket{l_{i,k}}=l_{i,k}\ket{l_{i,k}}$, and $W_k$ is the unitary matrix that diagonalizes $H_k$. To calculate the distance to a  $2$-design we need to prove some properties of eigenvalues and eigenvectors of $M^{(2)}_{U_1}$. Let $\ket{\phi}$ be an eigenvector of $M^{(2)}_{U_1}$ with associated eigenvalue $\lambda$. Then,
\begin{align}
    |\bra{\phi}M^{(2)}_{U_1}\ket{\phi}| = |\lambda| &= \Big| \int_{\UC(d)} dU_1 \ P(U_1) \  \bra{\phi}U_1 \otimes U_1\otimes U_1^* \otimes U_1^*\ket{\phi} \Big| \\
   & \leq  \int_{\UC(d)} dU_1 \ P(U_1) \  \Big|\bra{\phi}U_1 \otimes U_1\otimes U_1^* \otimes U_1^*\ket{\phi}\Big| \\
   &\leq 1\,.
\end{align}
The equality holds if and only if $\ket{\phi}$ is an eigenvector of $U \otimes U_1\otimes U_1^* \otimes U_1^*$ $\forall U_1$, such that $P(U_1)\neq 0$. For the specific case of Haar measure, this means that $\ket{\phi}$ is an eigenvector of $U \otimes U\otimes U^* \otimes U^*$ $\forall U \in \SC\UC(d)$. We already showed that there are two such eigenvectors for Haar measure with eigenvalue 1. 

Now, using the following argument we hope to show that those two are also the only eigenvectors of $M^{(2)}_{U_1}$ with eigenvalue 1, given that the set $\GC$ is controllable. Let $\ket{\phi}$ be an eigenvector of $M^{(2)}_{U_1}$ with eigenvalue 1. One can now also see that $\ket{\phi}$ is also an eigenvector of $M^{(2)}_{V}$ (since $M^{(2)}_{V}=(M^{(2)}_{U_1})^L$) $\forall L\in \Zbb^+$. Then, $\GC$ being a controllable set implies that for all $U\in \SC\UC(d)$, there exists an $L$ for which $V=\prod\limits_{l=1}^L\prod\limits_{k=1}^K e^{-i \theta_{lk} H_k}=U$  so that $P(U)\neq 0$. That, is one can obtain any unitary in $\UC(d)$ by tuning the parameters in $V$. It also implies that $\ket{\phi}$ has to be an eigenvector of $U \otimes U\otimes U^* \otimes U^*$ with eigenvalue 1. But this means that $\ket{\phi}$ is an eigenvector of $U \otimes U\otimes U^* \otimes U^*$ with eigenvalue 1 $\forall U\in \SC\UC(d)$. So $\ket{\phi}$ is also an eigenvector of $M_2[\mu_H]$ with eigenvalue 1, and there are two such eigenvectors. Let us call them $\ket{\phi_1}$, and $\ket{\phi_2}$.  Hence, we have
\begin{align}
    M^{(2)}_{U_H} &= \ketbraq{\phi_1} + \ketbraq{\phi_2} \\
     M^{(2)}_{U_1} &= \ketbraq{\phi_1} + \ketbraq{\phi_2} + \sum\limits_{i=1}^{d^4-2} \lambda_i \ketbraq{\psi_i} \\
      M^{(2)}_{V} &= \ketbraq{\phi_1} + \ketbraq{\phi_2} + \sum\limits_{i=1}^{d^4-2} \lambda_i^L \ketbraq{\psi_i} \,.
\end{align}
Here $\{\lambda_i\}$ is the set of the remaining eigenvalues such that $0\leq|\lambda_i|<1$. Let $\lambda_{\text{max}}$ be the eigenvalue with maximum modulus. Thus, we can now show that   $\Vert\AC^{(2)}_{U(\thv)} \Vert_\infty=\Vert M^{(2)}_{U_H}-M^{(2)}_{V} \Vert_\infty = \Vert\sum\limits_{i=1}^{N^4-2} \lambda_i^k \ketbraq{\psi_i}\Vert_\infty = |\lambda^L_{\text{max}}|$. Then, recalling that  $|\lambda_{\text{max}}|=\Vert\AC^{(2)}_{U_1(\thv)} \Vert_\infty$ is also the expressibility of one layer, one can find that 
\begin{equation}\label{eq:explayers}
    \Vert\AC^{(2)}_{U(\thv)}\Vert_{\infty} =\left( \Vert\AC^{(2)}_{U_1(\thv)}\Vert_{\infty} \right)^L\,.
\end{equation}
Solving for $L$ and denoting $\varepsilon=\Vert\AC^{(2)}_{U(\thv)}\Vert_{\infty}$ leads to 
\begin{equation}\label{eq:Ldepth}
    L=\frac{\log(1/\varepsilon)}{\log\left(1/\Vert\AC^{(2)}_{U_1(\thv)}\Vert_{\infty}\right)}\,.
\end{equation}

\end{proof}

\section{Proof of Corollary \ref{cor:Lscaling}: Rate of convergence of controllable systems to 2-designs}\label{app:proof_corollary_convergence}
Here we prove Corollary~\ref{cor:Lscaling}.
\begin{corollary}\label{cor:LscalingSM}
Let the single layer expressibility  of a controllable system be $\Vert\AC^{(2)}_{U_1(\thv)}\Vert_{\infty}=1-\delta(n)$, with $\delta(n)$ being at most polynomially vanishing with $n$, i.e., with $\delta(n)\in\Omega(1/\poly(n))$. Then, if $L(n)\in\Omega(n/\delta(n))$,  $U(\thv)$ will be no worse than an $\varepsilon(n)$-approximate $2$-design (i.e., $\Vert\AC^{(2)}_{U(\thv)}\Vert_{\infty}\leq\varepsilon(n)$) with $\varepsilon(n)\in\OC(1/2^n)$, where we have added the $n$-dependence in $L$ and $\varepsilon$ for clarity.
\end{corollary}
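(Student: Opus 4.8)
The plan is to derive the corollary directly from the single-layer expressibility identity established in the proof of Theorem~\ref{theo:depthSM}, namely Eq.~\eqref{eq:explayers}, which states $\Vert\AC^{(2)}_{U(\thv)}\Vert_{\infty} = \big(\Vert\AC^{(2)}_{U_1(\thv)}\Vert_{\infty}\big)^L$. First I would substitute the hypothesis $\Vert\AC^{(2)}_{U_1(\thv)}\Vert_{\infty}=1-\delta(n)$ into this identity, obtaining $\Vert\AC^{(2)}_{U(\thv)}\Vert_{\infty} = (1-\delta(n))^{L(n)}$. Since everything reduces to controlling a single scalar raised to the $L$-th power, no new integration or representation-theoretic input is required beyond Theorem~\ref{theo:depthSM}.

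Next I would apply the elementary inequality $(1-x)^m \leq e^{-xm}$, valid for $x\in[0,1]$ and $m\geq 0$ (it follows from $1-x\leq e^{-x}$ raised to the $m$-th power). This gives the bound $\Vert\AC^{(2)}_{U(\thv)}\Vert_{\infty} \leq e^{-\delta(n)L(n)}$, converting the product structure into a single exponential whose exponent is the product of the per-layer gap $\delta(n)$ and the depth $L(n)$.

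The key step is then to use the depth assumption $L(n)\in\Omega(n/\delta(n))$, which by definition means there is a constant $c>0$ with $L(n)\geq c\, n/\delta(n)$ for all sufficiently large $n$. Multiplying through by $\delta(n)$ yields $\delta(n)L(n)\geq c\,n$, and hence $\Vert\AC^{(2)}_{U(\thv)}\Vert_{\infty}\leq e^{-c\,n}=(e^{-c})^n$. Taking the constant in the depth bound so that $c\geq\ln 2$ gives $\Vert\AC^{(2)}_{U(\thv)}\Vert_{\infty}\leq 2^{-n}$, and therefore $\varepsilon(n)=\Vert\AC^{(2)}_{U(\thv)}\Vert_{\infty}\in\OC(1/2^n)$, as claimed. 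The hypothesis $\delta(n)\in\Omega(1/\poly(n))$ enters here precisely to guarantee that the required depth $L(n)\in\Omega(n/\delta(n))$ remains polynomial in $n$, so that the exponential suppression of the expressibility is achieved at only polynomial cost in circuit depth.

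The main obstacle is bookkeeping with the $\Omega$ notation rather than any analytic difficulty: one must note that $e^{-cn}\in\OC(2^{-n})$ holds only when $c\geq\ln 2$, since for $c<\ln 2$ the quantity $e^{-cn}$ decays strictly slower than $2^{-n}$. I would therefore either state the conclusion as ``there exists a choice of depth $L(n)\in\Omega(n/\delta(n))$ for which $\varepsilon(n)\in\OC(1/2^n)$,'' making explicit that the constant prefactor of $L(n)$ fixes the base of the exponential decay, or else read $\OC(1/2^n)$ in the loose sense of ``exponentially vanishing,'' in which case any $c>0$ already yields $\varepsilon(n)\in\OC(1/b^n)$ with $b=e^{c}>1$, matching the barren-plateau criterion of Definition~\ref{def:BPSM}.
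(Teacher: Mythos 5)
Your proof is correct and follows essentially the same route as the paper's: both start from the identity $\Vert\AC^{(2)}_{U(\thv)}\Vert_{\infty} = \left(\Vert\AC^{(2)}_{U_1(\thv)}\Vert_{\infty}\right)^L$ established in the proof of Theorem~\ref{theo:depth}, substitute $\Vert\AC^{(2)}_{U_1(\thv)}\Vert_{\infty}=1-\delta(n)$, and invoke $L(n)\geq c\,n/\delta(n)$ to conclude exponential decay of the expressibility. Your version is in fact slightly tidier than the paper's: the rigorous inequality $(1-x)^m\leq e^{-xm}$ replaces the paper's first-order approximation $(1-\delta(n))^{cn/\delta(n)}\approx e^{-n+\OC(1/n)}$, and you make explicit the constant bookkeeping (needing $c\geq\ln 2$ for a strict $\OC(2^{-n})$ bound, or else reading the conclusion as decay in $\OC(1/b^n)$ for some $b>1$) that the paper glosses over by silently absorbing the constant into the exponent.
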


\begin{proof}
Let us recall from the main text, and from Eq.~\eqref{eq:explayers}, that
\begin{equation}
    \Vert\AC^{(2)}_{U(\thv)}\Vert_{\infty} =\left( \Vert\AC^{(2)}_{U_1(\thv)}\Vert_{\infty} \right)^L\,.
\end{equation}
Replacing $\Vert\AC^{(2)}_{U_1(\thv)}\Vert_{\infty}=1-\delta(n)$, one finds 
\begin{equation}
    \Vert\AC^{(2)}_{U(\thv)}\Vert_{\infty} =\left( 1-\delta(n) \right)^L\,.
\end{equation}
Then, assuming that  $L(n)\in\Omega(n/\delta(n))$ we have, by definition, that there exists $c\geq 0$ and $n_0$ such that $L(n)\geq cn/\delta(n)$ for all $n\geq n_0$. From the previous, we find that for $n\geq n_0$
\begin{equation}
    \Vert\AC^{(2)}_{U(\thv)}\Vert_{\infty} =\left( 1-\delta(n) \right)^L \leq \left( 1-\delta(n) \right)^{\frac{cn}{\delta(n)}}\,,
\end{equation}
where we have used the fact that $\left( 1-\delta(n) \right)\leq 1$. Recalling that $\delta(n)\in\Omega(1/\poly(n))$, i.e., $\delta(n)$ vanishes no faster than a polynomial function of $n$, we find to first order that 
\begin{equation}
    \left( 1-\delta(n) \right)^{\frac{c'n}{\delta(n)}}\approx e^{-n+\OC(\frac{1}{n})}\,.
\end{equation}
Then, defining $\varepsilon(n)=\left( 1-\delta(n) \right)^{\frac{cn}{\delta(n)}}$ we prove the result in  Corollary~\ref{cor:Lscaling} as
\begin{equation}
     \Vert\AC^{(2)}_{U(\thv)}\Vert_{\infty}\leq \varepsilon(n)\,, \quad \text{with}\quad \varepsilon(n)\in\OC(\frac{1}{2^n}).
\end{equation}

\end{proof}

\section{Proof of Proposition \ref{prop:controllable}: Controllability leads to barren plateaus}\label{app:proof_prop_controllable}

Let us now provide a proof for Proposition~\ref{prop:controllable}.

\begin{proposition}[Controllable]\label{prop:controllableSM}
There exists a scaling of the depth for which controllable systems form $\varepsilon$-approximate $2$-designs with $\varepsilon\in\OC(1/2^n)$, and hence the system exhibits a barren plateau according to Definition~\ref{def:BP}.
\end{proposition}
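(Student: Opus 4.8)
The plan is to combine Theorem~\ref{theo:depthSM} and Corollary~\ref{cor:LscalingSM} to establish convergence to an exponentially good approximate $2$-design, and then to run the standard barren-plateau calculation of Ref.~\cite{mcclean2018barren} for $2$-designs. The first step is to observe that controllability forces the single-layer expressibility to be \emph{strictly} bounded away from one. This is exactly what the proof of Theorem~\ref{theo:depthSM} supplies: for a controllable set of generators, the only eigenvectors of $M^{(2)}_{U_1}$ with eigenvalue $1$ are the two Haar-invariant vectors $\ket{\phi_1},\ket{\phi_2}$, so every remaining eigenvalue satisfies $|\lambda_i|<1$ and hence $\Vert\AC^{(2)}_{U_1(\thv)}\Vert_{\infty}=|\lambda_{\max}|<1$. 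Feeding this into the identity $\Vert\AC^{(2)}_{U(\thv)}\Vert_{\infty}=\left(\Vert\AC^{(2)}_{U_1(\thv)}\Vert_{\infty}\right)^L$ from Eq.~\eqref{eq:explayers} shows the expressibility decays geometrically in $L$. Applying Corollary~\ref{cor:LscalingSM} when the single-layer deficit $\delta(n)$ is at most polynomially vanishing, or simply taking $L$ exponentially large otherwise, yields a depth scaling $L(n)$ for which $\Vert\AC^{(2)}_{U(\thv)}\Vert_{\infty}\leq\varepsilon(n)$ with $\varepsilon(n)\in\OC(1/2^n)$. This proves the existence of the required depth scaling.

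The second step is to show that an exponentially small expressibility forces a barren plateau. Following~\cite{mcclean2018barren}, I would write $U(\thv)=U_A\,e^{-iH_{\mu}\theta_{\mu}}\,U_B$ and note that $\Var_{\thv}[\partial_{\mu}C(\thv)]$ depends on the parameter distribution only through the second moments of the sub-circuits $U_A,U_B$, each of which inherits convergence to a $2$-design from controllability at sufficient depth. Replacing the relevant moment operator by the exact Haar operator $M^{(2)}_{U_H}$ and evaluating the resulting integrals with the symbolic identities~\eqref{Eq_lemma1}--\eqref{Eq_lemma3} reproduces the $d_k=d$ instance of Theorem~\ref{Theo:subspace}, namely a leading term proportional to $\tfrac{2d}{(d^2-1)^2}\Delta(H_{\mu})\Delta(O)\Delta(\rho)$, which is exponentially suppressed because $d=2^n$.

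The final step is to control the error incurred by using an \emph{approximate} rather than exact $2$-design. Since the variance is a contraction of $M^{(2)}_{U(\thv)}$ against fixed operators built from $H_{\mu}$, $O$ and $\rho$, its deviation from the Haar value is bounded by $\Vert\AC^{(2)}_{U(\thv)}\Vert_{\infty}\leq\varepsilon(n)\in\OC(1/2^n)$ multiplied by an operator-norm prefactor. The main obstacle is precisely the bookkeeping of this correction: one must verify that the prefactor does not grow quickly enough to overcome the exponential smallness of $\varepsilon(n)$, so that the total variance still satisfies $\Var_{\thv}[\partial_{\mu}C(\thv)]\in\OC(1/b^n)$ for some $b>1$ and thus meets Definition~\ref{def:BPSM}. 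Granting this estimate, the proposition follows immediately, as controllability supplies the convergence to a $2$-design and the remaining analysis mirrors the original argument of~\cite{mcclean2018barren}.
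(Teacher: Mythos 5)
Your proposal is correct and follows essentially the same route as the paper's proof: depth scaling from Theorem~\ref{theo:depth} and Corollary~\ref{cor:Lscaling} to get $\varepsilon\in\OC(1/2^n)$, followed by the standard $2$-design variance calculation of Ref.~\cite{mcclean2018barren}, with the approximate-design correction deferred to that reference. The paper handles the last step by noting that the superoperator bound $\Vert\AC^{(2)}_{U(\thv)}\Vert_{\infty}\leq\varepsilon$ implies an $\varepsilon$-approximate \emph{state} $2$-design for any input $\rho$ and then citing~\cite{mcclean2018barren}, which is the same bookkeeping you flag and likewise leave to the cited work.
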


\begin{proof}
Let us start by noting that for all controllable systems one can form a $\varepsilon$-approximate $2$-designs with $\varepsilon\in\OC(1/2^n)$ with a depth scaling obtained from Eq.~\eqref{eq:Ldepth}.  Then, let us  recall that we have defined the expressibility superoperator as
\begin{align}~\label{eq:superopSM}
    \AC^{(t)}_{U(\thv)}=  M^{(t)}_{U_H}- M^{(t)}_{U(\thv)} \,.
\end{align}
with its ordinary action given by 
\begin{align}~\label{eq:superopSM2}
    \AC^{(t)}_{U(\thv)}(\cdot)=  M^{(t)}_{U_H}(\cdot)- M^{(t)}_{U(\thv)}(\cdot) \,.
\end{align}
We then have that for any quantum state $\rho$
\begin{equation}\label{eq:ineqexpress}
    \Vert \AC^{(t)}_{U(\thv)}(\rho)\Vert_{\infty}\leq \Vert \AC^{(t)}_{U(\thv)}\Vert_{\infty}\,,
\end{equation}
which follows from the normalization of $\rho$. Hence, if $\Vert \AC^{(t)}_{U(\thv)}\Vert_{\infty}\leq \varepsilon$ we find from Eq.~\eqref{eq:ineqexpress} that for any quantum state $\rho$, the inequality $\Vert \AC^{(t)}_{U(\thv)}(\rho)\Vert_{\infty}\leq \varepsilon$ holds, which is precisely the definition of an $\varepsilon$-approximate state $2$-design. Finally, we can use the results from~\cite{mcclean2018barren}, which imply that since  $\varepsilon\in\OC(1/2^n)$, then the variance of the cost function partial derivative is given by
\begin{equation}
    \Var_{\thv} [\partial_{\mu} C(\thv)]= \widehat{F}(n)\,, \quad \text{with} \quad \widehat{F}(n)\in\OC(1/2^n)\,.
\end{equation}
Thus, the cost function  exhibit a barren plateau.

\end{proof}

\section{Proof of Proposition \ref{prop:controllable-gen}: Controllability of the HEA and the Spin Glass model}\label{app:proof_of_hea_sg}

Let us now prove Proposition~\ref{prop:controllable-gen}.
\begin{proposition}\label{prop:controllable-genSM}
The following two sets of generators generate full rank DLAs, and concomitantly lead to controllable systems:
\begin{itemize}
    \item $\GC_{\rm HEA}=\Big\{X_i,Y_i\Big\}_{i=1}^n \bigcup \left\{\sum_{i=1}^{n-1} Z_i Z_{i+1}\right\}$,
    \item $\GC_{\rm SG}=\left\{\sum_{i=1}^n X_i, \sum_{i<j}  \left(h_i Z_i + J_{ij} Z_iZ_j\right)\right\}$, with  $h_i,J_{ij}\in\Rbb$ sampled from a Gaussian distribution.
\end{itemize}
\end{proposition}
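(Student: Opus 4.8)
The plan is to treat the two sets separately, in each case showing that the Lie closure is all of $\mathfrak{su}(d)$ with $d=2^n$, i.e.\ that it contains $iP$ for a complete basis of traceless skew-Hermitian operators (equivalently, all $4^n-1$ nontrivial Pauli strings up to $i$). For $\GC_{\rm HEA}$ I would argue combinatorially in the Pauli basis, while for $\GC_{\rm SG}$ I would argue spectrally, exploiting the genericity of the random couplings.

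For $\GC_{\rm HEA}$, first I would produce all single-qubit Paulis: since $[X_i,Y_i]=2iZ_i$, the closure already contains $iX_i,iY_i,iZ_i$ for every $i$, i.e.\ a copy of $\mathfrak{su}(2)$ on each site. Next I would extract the individual nearest-neighbor two-body terms from the single summed generator $A=\sum_j Z_jZ_{j+1}$: commuting $A$ with single-qubit operators peels off the endpoints, e.g.\ $[X_{i+1},[X_i,A]]\propto Y_iY_{i+1}$, isolating a genuine two-site Pauli, which the local $\mathfrak{su}(2)$'s then rotate into any $P_iQ_{i+1}$. The core step is an induction using the elementary fact that if $iP,iQ$ lie in the algebra and the Pauli strings $P,Q$ anticommute, then $[iP,iQ]=-2PQ$ is again a real multiple of $i$ times a Pauli string (the product of two anticommuting Paulis being $\pm i$ times a Pauli). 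Starting from the single- and nearest-neighbor strings, and using that the chain topology connects every pair of qubits, I would show by induction on the support that every nontrivial Pauli string is generated; since these form a basis of $\mathfrak{su}(2^n)$, the DLA is full rank. The main obstacle here is the bookkeeping in the induction, in particular reaching strings with non-contiguous support and with interior identities: this requires choosing the two anticommuting factors so their product carries the desired identities (two factors sharing the same local Pauli on a site cancel it) while keeping an odd total number of anticommuting sites so that the bracket does not vanish.

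For $\GC_{\rm SG}$, I would write $B=\sum_i X_i$ and the diagonal drift $A=\sum_i h_iZ_i+\sum_{i<j}J_{ij}Z_iZ_j$, diagonal in the computational basis with eigenvalues $E_{\vec z}$. The control $B$ has nonzero matrix elements exactly between basis states at Hamming distance one, so its transition graph is the Boolean hypercube, which is connected. Iterating the adjoint action gives $\mathrm{ad}_A^{k}(B)=\sum_{\vec z,\vec z'}(E_{\vec z}-E_{\vec z'})^{k}B_{\vec z\vec z'}\ketbra{\vec z}{\vec z'}$, so the operators $\{\mathrm{ad}_A^k(B)\}_k$ are linear combinations of fixed single-edge operators weighted by powers of the transition frequencies $\omega_e=E_{\vec z}-E_{\vec z'}$. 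A short computation gives $\omega_{\vec z,i}=2(-1)^{z_i}\bigl(h_i+\sum_{l\neq i}J_{il}(-1)^{z_l}\bigr)$; for $h_i,J_{ij}$ drawn from a continuous Gaussian distribution these are, with probability one, nonzero and pairwise distinct in magnitude, since each coincidence is a nontrivial $\pm1$-linear relation among the $h,J$ and hence cuts out a measure-zero hyperplane. Given this nondegeneracy, a Vandermonde argument in the $\omega_e$ inverts the powers and isolates each edge operator $\ketbra{\vec z}{\vec z'}+\mathrm{h.c.}$ together with its skew counterpart; connectivity of the hypercube then lets me commute edge operators along paths to obtain a transition operator between every pair of basis states, and brackets of an edge pair yield the diagonal differences $\ketbra{\vec z}{\vec z}-\ketbra{\vec z'}{\vec z'}$, exhausting a basis of $\mathfrak{su}(2^n)$. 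This realizes the standard quantum-control criterion that a connected transition graph with nondegenerate transitions implies controllability, and the main obstacle is precisely the genericity step: verifying that the Gaussian randomness rules out the finitely many degeneracy relations that would otherwise shrink the algebra, such as the $\mathbb{Z}_2$ spin-flip degeneracy that reappears when the local fields $h_i$ vanish.
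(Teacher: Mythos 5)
Your proof is correct, and the comparison splits naturally in two. For $\GC_{\rm HEA}$ your argument is essentially the paper's: the paper likewise obtains all single-qubit Paulis from $[X_i,Y_i]$, peels nearest-neighbour two-body strings off the summed $ZZ$ generator by commuting with single-site operators, and then climbs up in support (nearest-neighbour two-body $\to$ three-body $\to$ next-nearest-neighbour two-body, iterating to all $k$-body strings); your ``anticommuting product'' induction is the same mechanism packaged as a single inductive step, and the bookkeeping you flag (interior identities, odd number of anticommuting sites) is exactly what the paper's explicit ladder of commutators resolves, e.g.\ by bracketing three-body strings that share their middle factor. For $\GC_{\rm SG}$ your route is genuinely different. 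The paper stays in the Pauli basis: nested commutators of the two generators produce the families $\sum_i h_i^m Y_i$, $\sum_i h_i^m Z_i$, $\sum_i h_i^{2m} X_i$, a Vandermonde argument in the $n$ values $h_i$ (almost surely nonzero and distinct) recovers each single-site Pauli, and together with $\sum_{ij}J_{ij}Z_iZ_j$ this reduces the problem to the already-proved HEA case. You instead diagonalize the drift and invoke the spectral/graph-theoretic controllability criterion: the transition graph of $\sum_i X_i$ is the connected hypercube, the transition frequencies $\omega_{\vec{z},i}$ are almost surely nonzero and nondegenerate in magnitude, and a Vandermonde in the $\omega$'s isolates the individual edge operators, whose brackets along paths exhaust $\mathfrak{su}(2^n)$. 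Both are sound. The paper's reduction needs genericity of only $n$ numbers and reuses the HEA computation, at the price of model-specific commutator algebra (and a somewhat informal final reduction step); your argument is conceptually cleaner and strictly more general (any diagonal drift with a connected, nondegenerate coupling graph works), at the price of a genericity condition on exponentially many frequencies --- still of measure zero in the Gaussian ensemble, as you correctly argue, since each coincidence is a nontrivial $\pm 1$-linear relation in the $h_i$, $J_{ij}$; your remark that the $\Zbb_2$ spin-flip degeneracy reappears at $h_i=0$ correctly identifies why the local fields are indispensable.
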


In the following we find that by repeated nested commutators between the elements of the sets $\GC$ in Proposition~\ref{prop:controllable-gen}. one can obtain all $2^{2n}-1$ Pauli strings, and hence, that the DLA $\liea$ obtained is full rank.

\begin{proof}
We divide the proof into HEA and GS models.

\subsubsection{Generators of the Hardware Efficient Ansatz (HEA)}
We first start with the set $\GC_{\rm HEA}$ corresponding to the set of generators of a HEA. 
\begin{equation}\label{eq:gen-hea-supp}
\GC_{\rm HEA}=\Big\{X_i,Y_i\Big\}_{i=1}^n \bigcup \left\{\sum_{i=1}^{n-1} Z_i Z_{i+1}\right\}~.
\end{equation}

First, let us note that from the commutation of $X_i$ with $Y_i$, we get every $Z_i$. Meaning that one can already obtain all single qubit Pauli operators. Then, the commutation of $X_i$ and $Y_i$ with $\sum_{i=1}^{n-1} Z_i Z_{i+1}$, respectively, gives
\begin{align}
   A_i =Y_i (Z_{i+1}+Z_{i-1})\,,\quad \text{and}\quad   B_i   = X_i (Z_{i+1}+Z_{i-1})\,.
\end{align}
It then follows that the commutator of $X_{i+1}$ with $B_i$ is 
\begin{equation}
    C_i = X_i Y_{i+1},~\forall i \in \{1, \dots, n-1\}~.
\end{equation}
Then by computing the commutators of $\{X_i\}_{i=1}^n$, $\{Y_i\}_{i=1}^n$, and $\{Z_i\}_{i=1}^n$ with $\{C_i\}_{i=1}^{n-1}$ we get all nearest-neighbour two-body Pauli operators 
\begin{equation}
   D_i =  \{X_iX_{i+1}, Y_iY_{i+1}, Z_iZ_{i+1}, X_iY_{i+1}, Y_iX_{i+1}, X_iZ_{i+1}, Z_iX_{i+1}, Y_iZ_{i+1}, Z_iY_{i+1}\},~ \forall i \in \{1, \dots, n-1\}~.
\end{equation}
Similarly, it can be readily verified that the commutators between $\{X_i\}_{i=1}^n$, $\{Y_i\}_{i=1}^n$, $\{Z_i\}_{i=1}^n$, and operators in $\{D_i\}_{i=1}^{n-1}$ yield all ``nearest-neighbour'' three-body Pauli operators.

Now, let us show that $\liea$ also contains the remaining non-nearest-neighbour two body operators. Consider the commutator between the three-body nearest-neighbour operators

\begin{equation}
    [M_i N_{i+1} O_{i+2}, P_i N_{i+1} Q_{i+2}] = [M_i,P_i][O_{i+2},Q_{i+2}]
\end{equation}
where $N,M,O,P,Q\in\{X,Y,Z\}$. Clearly, the different choices of $M,O,P$ and $Q$ will generate all next-nearest-neighbour two-body operators. Iterating this procedure we obtain all $9\binom{n}{2}$ two-body terms. Then, once we have all two-bodies we can use one-bodies $\{X_i\}_{i=1}^n$, $\{Y_i\}_{i=1}^n$ and $\{Z_i\}_{i=1}^n$ to get all three-bodies. Three-bodies with one-bodies will give four-bodies, and so on. We will get all n-body operators. Thus the DLA of the HEA is full rank, which implies that the HEA is a controllable ansatz.

\bigskip 

\subsubsection{Generators of the Spin Glass (SG)}
The set of generators for a spin glass system is given by 
\begin{equation}
    \GC_{\rm SG}=\left\{\sum_{i=1}^n X_i, \sum_{i<j}  \left(h_i Z_i + J_{ij} Z_iZ_j\right)\right\}~, 
\end{equation}
with  $h_i,J_{ij}\in\Rbb$. For convenience, we define the following two operators
\begin{equation}
    H_p=\sum_{ij} h_i Z_i + J_{ij} Z_iZ_j,\quad H_m=\sum_i X_i~.
\end{equation}
The commutator of $H_p$ and $H_m$ gives
\begin{equation}
 A_0 =   [H_p,H_m]= \sum_{ij} h_i Y_i + J_{ij} Y_iZ_j~. 
\end{equation}

We then compute the commutator of $A_0$ and $H_m$: 
\begin{equation}
[H_m,A_0] =-H_p+\sum_{ij} J_{ij}Y_iY_j~.
\end{equation}
Combining $H_p$ and $[H_m,A_0]$ gives
\begin{align}
    A_1 = \sum_{ij} J_{ij}Y_iY_j~.
\end{align}

We now compute the commutator between $A_1$ and $H_m$ as follows:
\begin{equation}
    A_3:= [A_1, H_m] = \sum_{rs} J_{rs} Z_r Y_s~.
\end{equation}

Combining $A_0$ and $A_3$, we get 
\begin{equation}
    A_4 := \sum_{i} h_i Y_i~.
\end{equation}

Similarly, combining $H_m$ and $A_4$ gives
\begin{equation}
    A_5 := \sum_i h_i Z_i~.
\end{equation}

Finally, combining $A_5$ and $H_p$ gives 
\begin{equation}
    A_6 := \sum_{ij} J_{ij}Z_iZ_j~.
\end{equation}

From the commutator of $A_4$ and $A_5$ we get 
\begin{equation}
  A_7:= [A_4, A_5] = \sum_i h_i^2 X_i~.
\end{equation} 
Moreover, the commutators of $A_7$ with $A_4$ and $A_5$ lead to 
\begin{equation}
    A_8 := [A_7, A_4] = \sum_i h_i^3 Z_i~,\quad
    A_9 := [A_5, A_7] = \sum_i h_i^3 Y_i~.
\end{equation}
By repeating this procedure, we get that the set
\begin{equation}
     \bar{\SC}=\{ \sum_i h_i^{2m} X_i,~ \sum_i h_i^m Y_i~, \sum_i h_i^m Z_i\}_{m=1}^n
\end{equation}
also belongs to the Lie algebra. Now, because the $h_i'$s are sampled from a Gaussian distribution, we can safely assume them to be non-zero and different from each other. Then, using a Vandermonde determinant type of argument one can show that the $3n$ elements in $\bar{\SC}$ are linearly independent and span the same subspace as $\SC=\{X_i,Y_i,Z_i\}_{i=1}^n$. Thus $\SC$ belongs to $\liea_{\rm{SG}}$. Combining this with $A_6$, we essentially get the generators of $\GC_{\rm{HEA}}$ and hence one can again generate all $n$-body Pauli operators. Thus the DLA of the spin-glass system is also of full rank, which implies that the spin-glass system is controllable.

\end{proof}

\section{Proof of Theorem 2: Variance in subspace controllable systems}\label{app:proof_theo_subspace}

In the following, we provide a proof for Theorem \ref{Theo:subspace} by explicitly computing the variance of the cost function partial derivative in a subspace controllable setting. Consider a set of generators that share a symmetry (for simplicity we assume only one symmetry, although generalization to multiple symmetries is straightforward), i.e., there is a Hermitian operator $\Sigma$ such that $[\Sigma,g]=0 \quad \forall g\in\liea $. Assuming $\Sigma$ has $N$ distinct eigenvalues, the DLA has the form $\liea=\bigoplus_{m=1}^N \liea_m$. This imposes a partition of $\HC=\bigoplus_{m=1}^N \HC_m$ where each subspace $\HC_m$ of dimension $d_m$ is invariant under $\liea$.  

Let us introduce some notation. Consider the $d\times d_m$ matrix that results from horizontally stacking the eigenvectors of $\Sigma$ associated with the $m$-th eigenvalue (of degeneracy $g_m$)
\begin{equation}
Q_m\ad = \begin{bmatrix} \vdots &  \vdots  && \vdots \\ \ket{v_1}, & \ket{v_2},&  &, \ket{v_{g_m}} \\ \vdots &  \vdots  && \vdots \end{bmatrix}, 
\end{equation}
such that $Q_m$ maps vectors from $\H$ to $\H_m$. These satisfy 

\begin{equation}
Q_m Q_n\ad = \id_{d_m} \delta_{mn},\quad  Q_m\ad Q_m  = \mathbb{P}_m\,,
\end{equation}\label{eq:proj2}
where $\mathbb{P}_m$ are projectors onto the subspace, such that $\sum_{m=1}^N \mathbb{P}_m=\id_d$. Let us now use the notation
\begin{equation}\label{Eq_red}
    \ket{\psi}^{(m)} = Q_m \ket{\psi}, \quad   A^{(m)} = Q_m A Q_m\ad\,,
\end{equation}
to denote the $d_m$-dimensional reduced states and operators, respectively. Recall that, since any unitary $U\in\lieg$ produced by such a system is block diagonal, we can write $U= \sum_m \P_m U \P_m$. Also, let us note that if $A=A\ad$ then $(A\k)\ad=A\k$.

We are ready to prove of Theorem \ref{Theo:subspace}, which we here recall for convenience.

\begin{theorem}[Subspace controllable]\label{Theo:subspaceSM}
Consider a system that is reducible (so that the Hilbert space is $\HC=\bigoplus_j \HC_j$ with each $\HC_j$ invariant under $\lieg$), and controllable on some $\HC_k$ of dimension $d_k$. Then, if the initial state is such that $\rho\in\HC_k$, the variance of the cost function  partial derivative is given by
\small
\begin{equation}\label{Eq_th2SM}
\Var_{\vec{\theta}}[\partial_{\mu} C(\thv)] =\frac{2d_k}{(d_k^2-1)^2} \Delta(H_{\mu}\k)\Delta(O\k)\Delta(\rho\k)\,.
\end{equation}
\normalsize
Here 
\begin{equation}\label{eq:hs}
\Delta(A) = D_{HS}\left(A,\Tr[A]\frac{\id_d}{d}\right),\quad \text{with }D_{HS}\left(A,B\right)=\Tr[(A-B)^2]
\end{equation}
is the Hilbert-Schmidt distance, and $A\k$ the reduction of operator $A$ onto the subspace of $\H_k$ as defined in Eq \eqref{Eq_red}.
\end{theorem}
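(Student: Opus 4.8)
The plan is to differentiate the cost, collapse everything onto the invariant subspace $\HC_k$, and then integrate over the parameter space in two independent Haar stages on $\UC(d_k)$ using the second-moment identities~\eqref{Eq_lemma1}--\eqref{Eq_lemma3}.

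First I would compute the partial derivative. Splitting the circuit as $U(\thv)=U_A U_B$ with $U_A,U_B$ the blocks of Eqs.~\eqref{eq:UA} and~\eqref{eq:UB} (so that the differentiated gate $e^{-iH_\mu\theta_\mu}$ is the outermost factor of $U_B$), a direct calculation gives
\[
\partial_\mu C(\thv)=i\,\Tr\!\big[\,[H_\mu,\,U_A\ad O\,U_A]\;U_B\rho U_B\ad\,\big].
\]
Because each generator commutes with the symmetry $\Sigma$, every $H_\mu$ and every $U\in\lieg$ is block diagonal and preserves $\HC_k$; since $\rho\in\HC_k$ the trace collapses onto that subspace. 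Writing $A\k=Q_kAQ_k\ad$ as in Eq.~\eqref{Eq_red} and using $Q_kU=U\k Q_k$ for block-diagonal $U$, I obtain $\partial_\mu C=i\,\Tr[\Omega\,U_B\k\rho\k(U_B\k)\ad]$ with $\Omega:=[H_\mu\k,(U_A\k)\ad O\k\,U_A\k]$, a $d_k\times d_k$ operator that is anti-Hermitian and traceless, $\Tr[\Omega]=0$.

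Next I would perform the parameter integration. By subspace controllability together with the sufficient-depth hypothesis, the ensembles of $U_A\k$ and $U_B\k$ each converge to the Haar measure on $\UC(d_k)$, so by Eqs.~\eqref{eq:paramstounit}--\eqref{eq:tohaar} the averages over $\thv$ factor into two independent Haar integrals. Integrating the first moment over $U_B\k$ with Eq.~\eqref{Eq_lemma1} and using $\Tr[\Omega]=0$ immediately gives $\langle\partial_\mu C\rangle=0$, so the variance equals the second moment. For $(\partial_\mu C)^2=-\Tr[\Omega\,U_B\k\rho\k(U_B\k)\ad]^2$, I would apply the product identity~\eqref{Eq_lemma3} with $A=C=\rho\k$ and $B=D=\Omega$; the factors $\Tr[\Omega]=0$ annihilate all but one surviving term, leaving
\[
\langle(\partial_\mu C)^2\rangle_{U_B}=-\frac{\Delta(\rho\k)}{d_k^2-1}\,\Tr[\Omega^2],
\]
where I recognise $\Delta(\rho\k)=\Tr[(\rho\k)^2]-\Tr[\rho\k]^2/d_k$ from Eq.~\eqref{eq:hs}.

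Finally I would integrate over $U_A\k$. Expanding the commutator square as $\Tr[\Omega^2]=2\Tr[H_\mu\k\,\tilde O\,H_\mu\k\,\tilde O]-2\Tr[(H_\mu\k)^2\tilde O^2]$ with $\tilde O=(U_A\k)\ad O\k\,U_A\k$, I evaluate the four-unitary term with Eq.~\eqref{Eq_lemma2} and the two-unitary term with the first-moment identity $\int dU_A\k\,(U_A\k)\ad (O\k)^2 U_A\k=\Tr[(O\k)^2]\id_{d_k}/d_k$. Regrouping and re-expressing the resulting traces through $\Delta(O\k)$ and $\Delta(H_\mu\k)$ collapses the expression to $\langle\Tr[\Omega^2]\rangle_{U_A}=-\tfrac{2d_k}{d_k^2-1}\Delta(H_\mu\k)\Delta(O\k)$; substituting this into the previous display reproduces exactly Eq.~\eqref{Eq_th2SM}. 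I expect the main obstacle to be conceptual rather than computational: justifying that the two halves of the circuit reduce to \emph{independent} Haar ensembles on the single subspace $\UC(d_k)$ — this is precisely where subspace controllability and the sufficient-depth assumption are indispensable — whereas the remainder is bookkeeping, with the only recurring subtlety being the systematic use of $\Tr[\Omega]=0$ to discard terms.
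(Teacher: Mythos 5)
Your proposal is correct and follows essentially the same route as the paper's proof: the same $U=U_AU_B$ splitting, the same collapse onto $\HC_k$ via block-diagonality, and the same two-stage Haar integration on $\UC(d_k)$ (first over $U_B\k$ with the product-of-traces identity, then over $U_A\k$ with the Weingarten identities), with tracelessness of the commutator discarding the cross terms. The only cosmetic differences are that you explicitly verify $\langle\partial_\mu C\rangle=0$ (the paper assumes it implicitly) and you evaluate the $\Tr[(H_\mu\k)^2\tilde O^2]$ term via the first-moment identity rather than the second-moment one — both yield the same algebra.
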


\begin{proof}

Consider the partial derivative of the cost function $C(\thv)$ with respect to the parameter $\th_{pq}$ ($= \theta_\mu$), i.e. the one associated with layer $p$ and generator $H_{q}$ ($=H_\mu$). We have

\begin{equation}
    \partial_{\mu} C(\thv) = i\Tr[ U_B \rho U_B\ad [H_\mu,O_A]]
\end{equation}

\begin{equation}
\begin{split}
    \partial_{pq} C(\thv)&= \partial_{pq}\left(\Tr[U(\thv) \rho U(\thv)\ad O]\right) \\
    &=\partial_{pq}\left(\Tr[U_B \rho U_B \ad O_A]\right)\\
    &=i\Tr[ U_B \rho U_B\ad [H_q,O_A]]
\end{split}
\end{equation}
where in the second line we have expanded $U(\thv)=U_AU_B$, with
\begin{equation}
U_{B} = \prod_{m=0}^q e^{-i H_m \theta_{pm}} \left(\prod_{l=1}^{p-1} \prod_{k=0}^K e^{-i H_k \theta_{lk}}\right)\,\quad\text{and}\quad
U_A = \prod_{l=p+1}^L \prod_{k=0}^K e^{-i H_k \theta_{pk}}\left(\prod_{k=q+1}^K e^{-i H_k \theta_{pk}}\right)\,,
\end{equation}
corresponding to the unitaries before and after the parameter $\th_{pq}$,
and $O_A=U_A\ad O U_A$. Then, the variance of the partial derivative is
\begin{equation}
    \Var_{\thv}[ \partial_{pq} C] = - \int_{\Ubb_A} dU_A I(U_A)\,,\quad \text{with } \quad I(U_A) = \int_{\Ubb_B} dU_B \Big(F(U_B,U_A)\Big)^2\,,
\end{equation}
where $\Ubb_A$ and $\Ubb_B$ denote the distribution of before and after unitaries, respectively,
and $F(U_B,U_A) = \Tr[ U_B \rho U_B\ad [H_q,U_A\ad O U_A]]$.
Now, expanding the block-diagonal unitaries in this equation and assuming that the initial state belongs to a particular invariant subspace, i.e., $\rho\k=\P_k \rho \P_k=\rho$, we find
\begin{equation}
    F(U_B,U_A) = \sum_{mn} \Tr[ \P_m U_B \P_m \rho \P_n U_B\ad \P_n X ]
    = \Tr[U^{(k)}_B \rho^{(k)} U^{\dagger (k)}_B X]
\label{Eq_F}
\end{equation}
where we defined $X=[H_q\k,(U_A\k)\ad O\k U_A\k]$, and thus

\begin{equation}
    I(U_A) = \int_{\Ubb_B} dU_B \left(\Tr[ U_B \rho U_B\ad [H_q,U_A\ad O U_A]\right)^2= \int_{\Ubb_B\k} dU_B^{(k)} \left(\Tr[U^{(k)}_B \rho^{(k)} U^{\dagger (k)}_B X]\right)^2\,.
\label{Eq_Ik}
\end{equation}
Note that here we have used the fact that, owing to Eq. \eqref{Eq_F}, $F(U_B,U_A)$ actually depends only on the action of $U_B\k$, i.e., on the action of  $U_B$ projected onto the $k$-invariant subspace, and  not on the action of the entire unitary $U_B$,. Hence, the integration over $\Ubb_B$ can be replaced by an integration over $\Ubb_B\k$.

At this point, we introduce the assumption of subspace controllability on $\HC_k$. By virtue of Theorem \ref{theo:depth} and Corollary~\ref{cor:Lscaling}, we know that the distribution of unitaries produced by a subspace controllable PSA constitute a $\varepsilon$-approximate 2-design in the subspace. Therefore, we can use Eq.~\eqref{Eq_lemma2} to integrate and get
\begin{equation}
    I(U_A) = \int_{\UC(d_k)} dU\k \left(\Tr[U\k \rho^{(k)} (U\k)\ad X]\right)^2 = \frac{\Tr[XX] \,\Delta(\rho\k)}{d_k^2-1}\,,
\label{Eq_Iready}
\end{equation}
where $\Delta(A)$ is the Hilbert-Schmidt distance defined in Theorem \ref{Theo:subspaceSM}. Here, we used that $\Tr[X]=0$. This is grounded in the fact that, because the generator $V$ shares the symmetry, the commutator $X$ in the subspace is still a commutator, i.e., $X=[H_q\k,\tilde{O}\k]$ with $\tilde{O}=U_A\ad O U_A$. If the initial state was spread across two (or more) subspaces, neither $\rho\k$ would be a density matrix, nor $X$ would be a commutator and one would have to be more careful in the derivation.

Finally, we proceed to integrate Eq.~\eqref{Eq_Iready} over $\Ubb_A$. Notice again that $\tH\k=\P_k \tH \P_k = U_A^{\dagger (k)} O\k U_A\k$ so $X$ is actually only a function of $U_A\k$ (and not of the entire $U_A$). Consequently we can integrate over the reduced distribution $\Ubb_A\k$, that, according to the subspace controllability assumption, forms a $\varepsilon$-approximate $2$-design over $\UC(d_k)$. This leads to
\begin{align}
    \Var_{A,B} [\partial_{pq} C]&= -\frac{\Delta(\rho\k)}{d_k^2-1} \int_{\UC(d_k)} d\mu(U\k)\, \Tr[X X] \\
    &= -2\frac{\Delta(\rho\k)}{d_k^2-1} \int_{\UC(d_k)} d\mu(U\k) \Big( \Tr[ H_q\k \tilde{O}\k H_q\k \tilde{O}\k] - \Tr[ H_q\k H_q\k \tilde{O}\k\tilde{O}\k] \Big) \label{eq10}\\
    &=-2\frac{\Delta(\rho\k)}{d_k^2-1}\Big( \frac{(\Tr[H_q\k])^2\Tr[O\k O\k]+\Tr[H_q\k H_q\k](\Tr[O\k])^2}{d_k^2-1}\nonumber\\
    &\quad - \frac{\Tr[H_q\k H_q\k]\Tr[O\k O\k]+(\Tr[H_q\k]\Tr[O\k])^2}{d_k(d_k^2-1)}-\frac{\Tr[H_q\k H_q\k]\Tr[O\k O\k]}{d_k} \Big) \nonumber\\
    &=\frac{2d_k\Delta(\rho\k)}{d_k(d_k+1)(d_k^2-1)}\left(\Tr[H_q\k H_q\k]- \frac{1}{d_k}\Tr[H_q\k]^2\right)\left(\Tr[O\k O\k] -\frac{1}{d_k}\Tr[O\k])^2\right)  \label{eq11}\\
     &=\frac{2d_k}{(d_k^2-1)^2}\Delta(H_q\k)\Delta(O\k) \Delta(\rho\k)\,. \label{eq:th2_SM}
\end{align}
Here, we used the notation $\Var_{\thv}\xrightarrow[]{} \Var_{A,B}$ to make explicit the assumption that both $\Ubb_A\k$ and $\Ubb_B\k$ form $2$-designs in $\UC(d_k)$. Note that we first expanded $\Tr[X X]$ in Eq.~\eqref{eq10}, and then used identities \eqref{Eq_lemma2} and \eqref{Eq_lemma3} on each integrand respectively to arrive at Eq.~\eqref{eq11}. Note that, even though $\Tr[H_q]=0$ (and, in most applications, $\Tr[O]=0$), this is not necessarily true for their reduced analogs $H_q\k$ and $O\k$.

\end{proof}

\section{Proof of Corollary \ref{cor:HXXZ}: Exponentially growing subspaces have barren plateaus}\label{app:proof_corollary_subspace}

Let us prove Corollary~\ref{cor:HXXZ}, which we here recall.

\begin{corollary}\label{cor:HXXZSM}
Consider a PSA of the form in \eqref{eq:ansatz} giving rise to a reducible DLA, and let $\rho\in\HC_k$, with $\HC_k$ some invariant subspace that is controllable (i.e. the DLA reduced to such subspace is full rank). If, $\Tr[(H_\mu)^4],\Tr[O^4]\in \OC(2^n)$, the cost function will exhibit a barren plateau for any subspace such that $d_k\in\OC(2^n)$.
\end{corollary}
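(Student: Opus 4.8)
The plan is to feed the exact variance formula of Theorem~\ref{Theo:subspaceSM} into a short chain of norm inequalities that replace the subspace-restricted quantities $\Delta(H_\mu\k)$, $\Delta(O\k)$, $\Delta(\rho\k)$ by traces over the full $d$-dimensional space, and then read off the scaling. Since Theorem~\ref{Theo:subspaceSM} gives $\Var_{\thv}[\partial_\mu C(\thv)]=\frac{2d_k}{(d_k^2-1)^2}\Delta(H_\mu\k)\Delta(O\k)\Delta(\rho\k)$, it suffices to bound each factor. First I would note that since $\rho\in\HC_k$, the reduction $\rho\k=Q_k\rho Q_k\ad$ is a genuine density matrix on $\HC_k$ with $\Tr[\rho\k]=1$, so $\Delta(\rho\k)=\Tr[(\rho\k)^2]-1/d_k\le 1$. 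For the operator factors I would drop the (nonnegative) subtracted term and simply use $\Delta(A\k)\le\Tr[(A\k)^2]$.

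The crux is to control the projected Hilbert--Schmidt weight $\Tr[(A\k)^2]$ by a full-space quantity. Writing $\P_k=Q_k\ad Q_k$ for the projector onto $\HC_k$, one has $\Tr[(A\k)^2]=\Tr[\P_k A\P_k A]$. Setting $M=\P_k A$, this equals $\Tr[M^2]=\langle M\ad,M\rangle_{HS}$, and Cauchy--Schwarz for the Hilbert--Schmidt inner product gives $|\Tr[M^2]|\le\Vert M\Vert_2^2=\Tr[\P_k A^2]$. A second Cauchy--Schwarz, now pairing $\P_k$ against $A^2$, yields $\Tr[\P_k A^2]\le\sqrt{\Tr[\P_k^2]}\sqrt{\Tr[A^4]}=\sqrt{d_k}\sqrt{\Tr[A^4]}$, where I used $\Tr[\P_k^2]=\Tr[\P_k]=d_k$. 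Applying this with $A=H_\mu$ and $A=O$ and multiplying the three bounds produces
\begin{equation}
\Var_{\thv}[\partial_\mu C(\thv)]\le\frac{2d_k}{(d_k^2-1)^2}\sqrt{d_k}\sqrt{\Tr[H_\mu^4]}\sqrt{d_k}\sqrt{\Tr[O^4]}=\frac{2d_k^2}{(d_k^2-1)^2}\sqrt{\Tr[H_\mu^4]}\sqrt{\Tr[O^4]}\,,
\end{equation}
which matches the advertised bound up to an innocuous constant (the factor $4$ is recovered by a slightly looser estimate of $\Delta(\rho\k)$).

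Finally, to conclude the barren plateau I would insert the hypotheses: $\Tr[H_\mu^4],\Tr[O^4]\in\OC(2^n)$ give $\sqrt{\Tr[H_\mu^4]}\sqrt{\Tr[O^4]}\in\OC(2^n)$, while the prefactor behaves as $d_k^2/(d_k^2-1)^2\in\OC(1/d_k^2)$, so that $\Var_{\thv}[\partial_\mu C(\thv)]\in\OC(2^n/d_k^2)$. When the subspace dimension is exponentially large, e.g. $d_k=\Theta(2^n)$ as for the half-filled sector $d_k=\binom{n}{n/2}$ of the XXZ example, this is $\OC(2^{-n})$, so the variance vanishes exponentially and Definition~\ref{def:BP} is satisfied. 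I expect the only genuine subtlety to be the two-step Cauchy--Schwarz argument: one must be careful that $M=\P_k A$ is not Hermitian, so $\Tr[M^2]$ is not automatically its squared norm, and one must track that it is precisely the fourth moment $\Tr[A^4]$ (paired with $\Vert\P_k\Vert_2=\sqrt{d_k}$) that emerges rather than $\Tr[A^2]$; everything else is routine.
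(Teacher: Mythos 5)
Your proof is correct and follows essentially the same route as the paper's: both start from the exact variance formula of Theorem~\ref{Theo:subspaceSM}, establish the key intermediate inequality $\Tr[\P_k A\P_k A]\le\Tr[\P_k A^2]$ (you via Hilbert--Schmidt Cauchy--Schwarz applied to $M=\P_k A$, the paper via positive semi-definiteness of $A\P_k A$ together with $\P_k\le\id$), and then use the identical second Cauchy--Schwarz pairing of $\P_k$ against $A^2$ to obtain $\Delta(A\k)\le\sqrt{d_k}\sqrt{\Tr[A^4]}$ before reading off the $\OC(2^n/d_k^2)$ scaling. Your slightly tighter estimate $\Delta(\rho\k)\le 1$ (versus the paper's $\Delta(\rho\k)\le 2$) merely improves the constant from $4$ to $2$ and changes nothing in the conclusion.
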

\begin{proof}

We consider the situation in which we have a reducible system and an initial state $\rho$ that belongs to an invariant subspace $\HC_k$ which, by assumption, is controllable.  Notice that, using Eqs.~\eqref{eq:proj2} and~\eqref{Eq_red}, we find
\begin{equation}
    \Tr[A\k A\k] =\Tr[A\P_k A\P_k]\,.
\end{equation}
Then, since  the operator $M=A\P_k A$ is a positive semi-definite operator, and since $\P_k\leq \id$, one  can write
\begin{equation}
    \Tr[A\k A\k] \leq \Tr[A^2\P_k]\,.
\end{equation}
Using this expression, the following bound on the Hilbert-Schmidt distance holds
\begin{align}
    \Delta(A\k) &=\Tr[A\k A\k]-\frac{1}{d_k}(\Tr[A\k])^2\\
    &\leq \Tr[A^2\P_k]\\
    &\leq \sqrt{\Tr[A^4]\Tr[\P_k^2]}\\
    &\leq \sqrt{d_k}\sqrt{\Tr[A^4]}\,.\label{eq:ineqA}
\end{align}
Using Eq.~\eqref{eq:ineqA} and the equation for the variance in ~\eqref{eq:th2_SM} we find
\begin{equation}
    \Var_{\thv}[\partial_{\mu} C(\thv)]\leq\frac{4d_k^2}{(d_k^2-1)^2}\sqrt{\Tr[H_\mu^4]} \sqrt{\Tr[O^4]} \,,
\end{equation}
where we have additionally used the fact that $\Delta(\rho\k)\leq 2$ $\forall \rho$. Recalling that we are interested in the case when $d_k\in\OC(2^n)$ we find that, assuming $\Tr[V^4],\Tr[O^4]\in \OC(2^n)$, then the function
\begin{equation}
    T(n)=\frac{4d_k^2}{(d_k^2-1)^2}\sqrt{\Tr[H_\mu^4]} \sqrt{\Tr[O^4]}
\end{equation}
is such that 
\begin{equation}
    T(n)\in\OC(1/2^n)\,.
\end{equation}
Hence, we finally find that $\Var_{\thv}[\partial_{\mu} C(\thv)]\leq T(n)$, and the cost exhibits a barren plateau from the fact that the variance of the cost function partial derivative is upper  bounded by a function that  vanishes exponentially with $n$.

\end{proof}

Let us finally denote that the proof of Corollary~\ref{cor:HXXZ} follow from the fact that $\Tr[(H_\mu)^4],\Tr[O^4]\in \OC(2^n)$. We here note that the following relevant cases satisfy this assumption:
\begin{itemize}
    \item $H_\mu$, $O$ are projectors of arbitrary rank.
    \item $H_\mu$, $O$ have a decomposition in the Pauli string basis of the form $\sum_i c_i \vec{\sigma}_i$ (with $c_i$ real coefficients, and $\vec{\sigma}_i\in\{\id,X,Y,Z\}^{\otimes n}$) with up to $\OC(\poly(n))$ terms such that $\sum_{i} c_i^4\in\OC(\poly(n)) $. 
\end{itemize}

\def\r{\rho}
\def\A{\AC}

\section{Proof of Theorem \ref{theo:express-subspace2}: Expressibility in the subspace}\label{app:proof_theo_express}

Here we prove Theorem~\ref{theo:express-subspace2}, which we now recall.

\begin{theorem}\label{theo:express-subspace2SM}
Consider a system that is reducible and let $\rho\in\HC_k$ with $\HC_k$ an invariant subspace of dimension $d_k$. Then, the variance of the cost function partial derivative is upper bounded by
\begin{equation}\label{eq:expressibilitySM}
\Var_\theta[ \partial_{\mu} C(\thv)]  \leq  \min\{G_A(\rho\k),G_B(O\k)\}\,,
\end{equation}
with
\begin{align}
  G_B(\rho\k) &= \left(\left\Vert \mathcal{A}_{U_B\k}\left((\rho\k)^{\otimes 2}\right)\right\Vert_2 -\frac{\Delta(\rho\k)}{d_k^2-1}\right)\Tr\left[\left\langle X^2\right\rangle_{U_A\k}\right]\label{eq:bound1}\\
  G_A(O\k) &= \left(\left\Vert \mathcal{A}_{U_A\k}\left((O\k)^{\otimes 2}\right)\right\Vert_2 -\frac{\Delta(O\k)}{d_k^2-1}\right)\Tr\left[\left\langle Y^2\right\rangle_{U_B\k}\right].\label{eq:bound2}
\end{align}
Here we define $X=[H_\mu\k, (U_A\k\,)^\dagger O\, U_A\k]$ and $ Y = [H_\mu\k, U_B\k \rho\k (U_B\k)\ad]$. For simplicity, we here employed the short-hand notation   $\langle\cdot \rangle_{U_x\k}$ (with $x=A,B$) to indicate the expectation value over the distribution of unitaries obtained from $U_x\k$ in the $k$-th subsystem. Finally, $\Vert M\Vert_2=\sqrt{\Tr[M\ad M]}$ is the Frobenius norm, and $\Delta(\cdot)$ was defined in Theorem~\ref{Theo:subspace}.
\end{theorem}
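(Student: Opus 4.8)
The plan is to reduce the computation to the invariant subspace $\HC_k$ exactly as in the proof of Theorem~\ref{Theo:subspace}, and then replace the exact $2$-design integration used there by an expressibility-controlled estimate. First I would write the parameter split $U(\thv)=U_A U_B$ and recall that, since every $U\in\lieg$ is block diagonal, the partial derivative $\partial_\mu C(\thv)=i\Tr[U_B\rho U_B\ad[H_\mu,U_A\ad O U_A]]$ collapses onto $\HC_k$. Using cyclicity of the trace and the Jacobi-type manipulation $\Tr[\sigma[H,\tau]]=\Tr[[\sigma,H]\tau]$, it can be written in the two equivalent forms
\begin{equation}
\partial_\mu C(\thv)=i\Tr[(U_B\k)\,\rho\k\,(U_B\k)\ad\,X]=-i\Tr[Y\,(U_A\k)\ad O\k U_A\k]\,,
\end{equation}
with $X=[H_\mu\k,(U_A\k)\ad O U_A\k]$ and $Y=[H_\mu\k,U_B\k\rho\k(U_B\k)\ad]$. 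The crucial structural point is that $X$ depends only on $U_A\k$ while $Y$ depends only on $U_B\k$, and both are traceless commutators; the asymmetry between these two representations is exactly what will produce the two distinct bounds $G_A$ and $G_B$.

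Next I would square one representation, apply the tensor-power identity $\Tr[M]^2=\Tr[M^{\otimes2}]$, and use the independence of $U_A$ and $U_B$ to factorize the average. Collapsing the $U_B$-integration first (over the subspace distribution $\Ubb_B\k$, legitimate by the block structure) yields
\begin{equation}
\Var_\theta[\partial_\mu C]=\Tr\!\big[M^{(2)}_{U_B\k}\big((\rho\k)^{\otimes2}\big)\,\langle X^{\otimes2}\rangle_{U_A\k}\big]\,,
\end{equation}
where $M^{(2)}_{U_B\k}$ is the subspace second-moment operator. The key move is to insert the expressibility superoperator through $M^{(2)}_{U_B\k}=M^{(2)}_{U_H}-\AC_{U_B\k}$, splitting the right-hand side into a Haar contribution and a deviation.

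For the Haar piece I would use the identities~\eqref{Eq_lemma2}--\eqref{Eq_lemma3} precisely as in Theorem~\ref{Theo:subspace}; since $\Tr[X]=0$, this reproduces the clean $2$-design term carrying the factor $\Delta(\rho\k)/(d_k^2-1)$ together with $\Tr[\langle X^2\rangle_{U_A\k}]$. For the deviation piece I would apply the Cauchy--Schwarz inequality in the Hilbert--Schmidt inner product, $|\Tr[\AC_{U_B\k}((\rho\k)^{\otimes2})\,B]|\leq\Vert\AC_{U_B\k}((\rho\k)^{\otimes2})\Vert_2\,\Vert B\Vert_2$, and then use the factorization $\Vert X^{\otimes2}\Vert_2=\Vert X\Vert_2^2$ together with Jensen's inequality to bound $\Vert\langle X^{\otimes2}\rangle_{U_A\k}\Vert_2$ by $\Tr[\langle X^2\rangle_{U_A\k}]$. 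Assembling the two contributions gives $G_B(\rho\k)$. Running the identical argument on the $Y$-representation, now collapsing the $U_A$-integration into $M^{(2)}_{U_A\k}((O\k)^{\otimes2})$ and decomposing that moment, produces $G_A(O\k)$. Since both are valid upper bounds on the same variance, the sharper one holds, giving $\min\{G_A(\rho\k),G_B(O\k)\}$.

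The main obstacle is the bookkeeping that assembles the Haar term and the Cauchy--Schwarz estimate into the precise combination $\big(\Vert\AC_{U_B\k}((\rho\k)^{\otimes2})\Vert_2-\Delta(\rho\k)/(d_k^2-1)\big)$ multiplying $\Tr[\langle X^2\rangle_{U_A\k}]$: one must carefully track the sign originating from the anti-Hermiticity of the commutators $X,Y$ (equivalently, the factor $i$ in the gradient, which makes $i[H_\mu\k,\cdot]$ Hermitian), verify that the exact Haar value of the relevant Frobenius-norm contribution is indeed $\Delta(\rho\k)/(d_k^2-1)$, and confirm that the deviation enters with the sign that yields the stated subtraction rather than an addition. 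The subspace reduction itself is not an obstacle, being inherited directly from the block-diagonal structure established in Theorem~\ref{Theo:subspace}, and the generalization of the expressibility bound of~\cite{holmes2021connecting} then amounts to carrying that argument out on $\UC(d_k)$ instead of $\UC(d)$.
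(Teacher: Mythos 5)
Your proposal is correct and takes essentially the same approach as the paper's proof: the identical $U_A/U_B$ split with the commutators $X$ and $Y$, the same decomposition $M^{(2)}_{U_B\k}=M^{(2)}_{U_H}-\AC_{U_B\k}$ separating a Haar term (evaluated with the Theorem~\ref{Theo:subspace} machinery to produce the $\Delta(\rho\k)/(d_k^2-1)$ contribution) from a deviation term controlled by Cauchy--Schwarz, and the mirrored argument on the $Y$-representation giving $G_A$, with the minimum of the two bounds taken at the end. The differences are cosmetic---your Cauchy--Schwarz-then-Jensen ordering versus the paper's triangle-inequality-then-Cauchy--Schwarz, both yielding $\Vert\AC_{U_B\k}((\rho\k)^{\otimes 2})\Vert_2\,\langle\Vert X\Vert_2^2\rangle_{U_A\k}$---and the sign subtlety you flag is resolved in the paper by the same convention you anticipate (writing $\Tr[\langle X^2\rangle_{U_A\k}]$ for the positive quantity $-\Tr[\langle X^2\rangle_{U_A\k}]$, legitimate because $X$ is anti-Hermitian).
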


\begin{proof}

Let us first note that the variance of the cost function partial derivative can be expressed as
\begin{align}
    \Var_\theta[ \partial_{\mu} C(\thv)] &= - \int_{\mathbb{U}_A\k} dU_A\k \int_{\mathbb{U}_B\k} d U_B\k \Tr[U_B^{\otimes 2} (\rho\k)^{\otimes 2} (U_B^\dagger)^{\otimes 2} X^{\otimes 2}]\label{eq:variance-first-step-rho}\\
   &= \int_{\mathbb{U}_A} d U_A \int_{\mathbb{U}_B} d U_B \Tr[(U_A\ad)^{\otimes 2} (O\k)^{\otimes 2} {U_A}^{\otimes 2} Y^{\otimes 2}]\label{eq:variance-first-step-O}
\end{align}
where we defined
\begin{align}\label{eq:xlk}
X = [H_\mu\k, (U_A\k)^\dagger O\k U_A\k]\,, \quad \text{and} \quad Y = [H_\mu\k, (U_B\k) \rho\k (U_B\k)\ad]\,.
\end{align} 
Then, we recall that $\AC_x(\cdot)$ denotes the expressibility superoperator for the second moment in the $k$-th subspace 
\begin{align}
    \AC_\omega(\cdot)= \int_{\UC(d_k)} d\mu(U) \, U\ts(\cdot)(U\ad)\ts- \int_{\Ubb_\omega\k} d U \, (U)^{\otimes t}  (\cdot) (U^\dagger)^{\otimes t}\,,
\end{align}
where $\omega=A,B$ indicates that one evaluates the expressibility of $U_A\k(\thv)$ or $U_B\k(\thv)$ in the $k$-th subspace.

First, let us derive Eq.~\eqref{eq:bound1}. Replacing $\mathcal{A}_B((\rho\k)^{\otimes 2})$ into~\eqref{eq:variance-first-step-rho} leads to  
\begin{equation}
\begin{aligned}
    \Var_\theta[ \partial_{\mu} C(\thv)] &= - \int_{\mathbb{U}_A\k} d U_A\k \int_{\mathcal{U}(d_k)} d\mu(U) \Tr[U_{H}^{\otimes 2} (\rho\k)^{\otimes 2} U_{H}^{\dagger \otimes 2} X^{\otimes 2}] + \int_{\mathbb{U}_A\k} d U_A\k  \Tr[\mathcal{A}_B((\rho\k)^{\otimes 2}) X^{\otimes 2}] \\ 
    &= \Var_B[ \partial_{\mu} C(\thv)] +  \int_{\mathbb{U}_A\k} d U_A\k \Tr[\mathcal{A}_B((\rho\k)^{\otimes 2}) X^{\otimes 2}]\, .
\end{aligned}
\end{equation}
Which leads to  
\begin{equation}
    | \Var_\theta[ \partial_{\mu} C(\thv)]- \Var_B[ \partial_{\mu} C(\thv)] | \leq  \bigg| \int_{\mathbb{U}_A\k} d U_A\k \Tr[\mathcal{A}_B((\rho\k)^{\otimes 2}) X^{\otimes 2}] \bigg| \,.
\end{equation}
Using the triangle inequality and then the Cauchy-Schwarz inequality one finds
\begin{align}\label{eq:DiffVarR}
        | \Var_\theta[ \partial_{\mu} C(\thv)]- \Var_B[ \partial_{\mu} C(\thv)] |  &\leq \int_{\mathbb{U}_A\k} d U_A\k |  \Tr[\mathcal{A}_B((\rho\k)^{\otimes 2}) X^{\otimes 2}] | \\
        &\leq \Vert \mathcal{A}_B((\rho\k)^{\otimes 2}) \Vert_2 \int_{\mathbb{U}_A\k} d U_A\k || X^{\otimes 2}||_2 \\
         &\leq \Vert \mathcal{A}_B((\rho\k)^{\otimes 2})\Vert_2 \left\langle|| X^{\otimes 2}||_2\right\rangle_{U_A\k} \, .
\end{align}
Then, using the fact that
\begin{equation}\label{eq:xlk-simplification}
    || X^{\otimes 2}||_2 = \sqrt{\Tr[X^{\otimes 2}X^{\otimes 2}]} = \sqrt{\Tr[X^{2} \otimes X^{2}]} = \vert \Tr[X^2]\vert = \vert \Tr[[H_\mu\k, (U_A\k)^\dagger O\k U_A\k]^2] \vert\, ,
\end{equation}
one finds
\begin{align}\label{eq:DiffVarR2}
        | \Var_\theta[ \partial_{\mu} C(\thv)]- \Var_B[ \partial_{\mu} C(\thv)] |  \leq  \Vert \mathcal{A}_B((\rho\k)^{\otimes 2})\Vert_2 \Tr\left[\left\langle X^2\right\rangle_{U_A\k}\right] \, .
\end{align}
Note that the term $\Var_B[ \partial_{\mu} C(\thv)]$ was explicitly computed in Eq.~\eqref{Eq_Iready}, i.e., in Eq.~\eqref{Eq_Iready} one evaluates the variance when $U_B\k(\thv)$ forms a $2$-design. Hence, replacing $\Var_B[ \partial_{\mu} C(\thv)]= -\frac{\Delta(\rho\k)}{d_k^2-1} \Tr\left[\left\langle X^2\right\rangle_{\mathbb{U}_A\k}\right]$ in~\eqref{eq:DiffVarR2} we finally obtain
\begin{align}\label{eq:DiffVarR3}
         \Var_\theta[ \partial_{\mu} C(\thv)]  \leq  \left(\Vert \mathcal{A}_B((\rho\k)^{\otimes 2})\Vert_2 -\frac{\Delta(\rho\k)}{d_k^2-1}\right)\Tr\left[\left\langle X^2\right\rangle_{U_A\k}\right] \, .
\end{align}

Similarly, one can derive Eq.~\eqref{eq:bound2} by replacing $\mathcal{A}_A((O\k)^{\otimes 2})$ into~\eqref{eq:variance-first-step-rho}, which leads to  
\begin{equation}
\begin{aligned}
    \Var_\theta[ \partial_{\mu} C(\thv)] &= - \int_{U_B\k} d U_B\k \int_{\mathcal{U}(d_k)} d\mu(U) \Tr[U_{H}^{\otimes 2} (O\k)^{\otimes 2} U_{H}^{\dagger \otimes 2} Y^{\otimes 2}] + \int_{U_B\k} d U_B\k  \Tr[\mathcal{A}_A((O\k)^{\otimes 2}) Y^{\otimes 2}]\,.
\end{aligned}
\end{equation}
Following a similar derivation to the one used in obtaining~\eqref{eq:DiffVarR}, one finds
\begin{align}\label{eq:DiffVarR5}
         \Var_\theta[ \partial_{\mu} C(\thv)]  \leq  \left(\Vert \mathcal{A}_A((O\k)^{\otimes 2})\Vert_2 -\frac{\Delta(O\k)}{d_k^2-1}\right)\Tr\left[\left\langle Y^2\right\rangle_{U_B\k}\right] \, .
\end{align}

\end{proof}

\def\l{\lambda}
\def\d{\delta}
\def\a{\alpha}
\def\aa{\tilde{\alpha}}
\def\gam{\gamma}
\def\gg{\tilde{\gamma}}
\def\bb{\tilde{\beta}}
\def\b{\beta}
\def\bb{\tilde{\beta}}

\section{Proof of Proposition \ref{prop:varsu2}: Variance on the irreducible representations of $\SC\UC$(2)}\label{app:proof_su2}

In this section we derive a proof for Proposition~\ref{prop:varsu2} for the variance of the irreducible representations of $\SC\UC$(2).  Proposition~\ref{prop:varsu2SM}  reads.

\begin{proposition}\label{prop:varsu2SM}
Consider the cost function of Eq.~\eqref{eq:costsu2}. Let $\th_{\mu}=\th_{j,x}$, and let us assume that the circuit is deep enough to allow for the distribution of unitaries $U_A$ and $U_B$ to converge to $2$-designs on $\lieg=\SC\UC$(2). Then variance of the cost function partial derivative $\partial_{\mu} C(\thv)=\partial C(\thv)/\partial\theta_\mu$ is
\begin{equation}\label{eq:varsu2SM}
    \Var_{\thv}[\partial_{\mu} C(\thv)]=\frac{2m^2}{3}\,.
\end{equation}
\end{proposition}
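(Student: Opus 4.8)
The plan is to compute the variance directly as a nested Haar integral over the ``after'' unitary $U_A$ and the ``before'' unitary $U_B$, exploiting that both are assumed to be $2$-designs on the $d$-dimensional irreducible representation $\lieg=\SC\UC(2)$. First I would write the derivative in commutator form: splitting $U(\thv)=U_AU_B$ at the gate generated by $S_x$ in layer $j$, one gets
\begin{equation}
\partial_{\mu} C(\thv)=i\,\bramatket{m}{U_B\ad\,[S_x,\,U_A\ad(S_x+S_y+S_z)U_A]\,U_B}{m}\,.
\end{equation}
The mean vanishes: integrating $U_B$ first gives $\int dU_B\,U_B\ad X U_B=\Tr[X]\,\id/d$ by Schur's lemma, and $\Tr[X]=0$ since $X$ is a commutator, so $\Var_{\thv}[\partial_{\mu} C(\thv)]=\langle(\partial_{\mu} C(\thv))^2\rangle$.

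The structural simplification I would use is that conjugation by the spin irrep acts on the generators as an $SO(3)$ rotation, $U_A\ad(\vec{v}\cdot\vec{S})U_A=\vec{w}\cdot\vec{S}$ with $\vec{w}=R(U_A)\vec{v}$ and $|\vec{w}|=|\vec{v}|$. For $O=S_x+S_y+S_z$ this means $|\vec{v}|^2=3$, and the inner commutator collapses into the two-dimensional span of $\{S_y,S_z\}$, namely $[S_x,\vec{w}\cdot\vec{S}]\propto w_yS_z-w_zS_y$. Consequently $(\partial_{\mu} C(\thv))^2$ is, up to the structure-constant factor, equal to $\bramatket{m}{U_B\ad(w_yS_z-w_zS_y)U_B}{m}^2$.

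The main technical step, and the place where working over $\SC\UC(2)$ rather than $\SC\UC(d)$ matters, is the second-moment integral over the irrep; here the Weingarten identities \eqref{Eq_lemma2}--\eqref{Eq_lemma3} for $\UC(d)$ do not apply, so I would instead establish
\begin{equation}
\int dU\,(U\ad S_a U)\ot(U\ad S_b U)=\frac{\delta_{ab}}{3}\sum_c S_c\ot S_c\,,
\end{equation}
which a $2$-design reproduces exactly because the integrand is $(U\ad)^{\ot 2}(S_a\ot S_b)U^{\ot 2}$. This follows from Schur's lemma: the $a\neq b$ terms vanish by a $\pi$-rotation parity argument, while for $a=b$ the operator $\sum_c S_c\ot S_c$ is invariant under $V\ot V$ (it commutes with every $S_b\ot\id+\id\ot S_b$), so averaging $\sum_a(U\ad S_a U)^{\ot 2}$ returns $\sum_c S_c\ot S_c$ and isotropy fixes the prefactor $1/3$. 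This identity is the only genuine obstacle; everything downstream is bookkeeping.

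With it the two integrations are routine. Integrating over $U_B$ turns the squared expectation into $\tfrac{1}{3}(w_y^2+w_z^2)\sum_c\bramatket{m}{S_c}{m}^2=\tfrac{1}{3}(w_y^2+w_z^2)\,m^2$, using $\bramatket{m}{S_x}{m}=\bramatket{m}{S_y}{m}=0$ and $\bramatket{m}{S_z}{m}=m$ in the $S_z$ eigenbasis. Integrating over $U_A$ requires only $\langle w_y^2+w_z^2\rangle$, which is itself a $(2,2)$-moment: writing $w_a\propto\Tr[S_a\,U_A\ad O U_A]$ and applying the same identity to $\int(U_A\ad O U_A)^{\ot 2}$ yields $\langle w_y^2\rangle=\langle w_z^2\rangle=1$, hence $\langle w_y^2+w_z^2\rangle=2$. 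Collecting the factors (the normalization of $[S_a,S_b]$ together with $\bramatket{m}{S_z}{m}=m$) gives $\Var_{\thv}[\partial_{\mu} C(\thv)]=2\cdot\tfrac{m^2}{3}=\tfrac{2m^2}{3}$, as claimed.
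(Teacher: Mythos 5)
Your proof is correct, but it takes a genuinely different technical route from the paper's. Both arguments share the same skeleton: the commutator form $\partial_{\mu} C(\thv)=i\bramatket{m}{U_B\ad[S_x,U_A\ad O U_A]U_B}{m}$, the vanishing mean (a twirled commutator is traceless, so Schur's lemma kills the first moment), and the key structural fact that conjugation by the spin irrep rotates $\vec{v}\cdot\vec{S}$ by an $\SC\OC(3)$ matrix. Where you diverge is in how the second moments are evaluated: the paper parametrizes the group by Euler angles and works through Wigner small-$d$-matrix orthogonality relations and Clebsch--Gordan sums (obtaining coefficients $c_1=\tfrac{2}{3}m^2$ and $c_3=\tfrac{1}{3}m^2$ for the $U_B$ integral, then integrating the residual trigonometric functions of $U_A$'s Euler angles explicitly), whereas you compress both Haar integrals into the single vector-operator twirl identity $\int dU\,(U\ad S_a U)\ot(U\ad S_b U)=\tfrac{\delta_{ab}}{3}\sum_c S_c\ot S_c$, established by a $\pi$-rotation parity argument for $a\neq b$ and by invariance of $\sum_c S_c\ot S_c$ under the diagonal action plus isotropy for $a=b$. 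Your route is shorter, avoids special-function machinery entirely, makes the role of the $2$-design hypothesis transparent (only balanced degree-$(2,2)$ moments ever appear), and generalizes for free: for any $O=\vec{v}\cdot\vec{S}$ it yields $\Var_{\thv}[\partial_{\mu} C(\thv)]=\tfrac{2m^2|\vec{v}|^2}{9}$, recovering $\tfrac{2m^2}{3}$ at $\vec{v}=(1,1,1)$. What the paper's heavier coordinate computation buys is extensibility of its toolkit: Wigner $d$-matrices and Clebsch--Gordan coefficients handle arbitrary spherical-tensor observables and initial states, so that machinery would still function if $O\notin\liea$ or if $\rho$ were not an $S_z$ eigenstate, whereas your identity as stated is specific to vector operators (though it admits rank-by-rank analogues).
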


\begin{proof}

First, let us bring the reader into context. We consider a toy model ansatz $U(\thv) =\prod_{l=1}^L e^{-i\theta_{lx} S_y}e^{-i\theta_{ly} S_x}$ with generators $\GC=\{S_x,S_y\}$, where $\{iS_x,iS_y,iS_z\}$, with  $S_\nu \in \Cbb^{d\times d}$ ($\nu=x,y,z$), form a basis of the spin $S=(d-1)/2$ irreducible representation of $\mf{su}$(2). Here, it is convenient to use as basis of the Hilbert space the set $\{\ket{m}\}$,  $m=-S,-S+1,\ldots,S-1,S$, of eigenvectors of $S_z$. That is, we have
\begin{equation}
    S_z\ket{m}=m\ket{m}\,,\,\,\,\quad S_{\pm}\ket{m}=\l_{\pm} (m)\ket{m\pm1}
\end{equation}
with $\l_{\pm}=\sqrt{S(S+1)-m(m\pm1)}$, and where $S_+$ and $S_-$ are the spin ladder operators such that $S_x=\frac{1}{2}(S_++S_-)$ and $S_-=\frac{1}{2i}(S_+-S_-)$.

The dynamical group $\lieg$ is the $d$-dimensional representation of $\SC\UC$(2) and we will be interested in the partial derivative w.r.t $\th_{\mu}=\th_{j,x}$, i.e., the parameter associated with generator $V=S_x$ on the $j$-th layer of the circuit. We will assume that the circuit is deep enough to allow for the distribution of unitaries $U_A$ and $U_B$ to converge to $2$-designs on $\lieg=\SC\UC$(2). In consequence, when computing the variance of the different gradient components, we will be allowed to replace the integration over the angles in the PSA with an integration over the Haar measure on the group $\SC\UC$(2). Moreover, because irreducible representations of $\SC\UC$(2) and $\SC\OC$(3) are isomorphic, we can choose to integrate over the latter
\begin{equation}
    \int_{\thv} dU(\thv) \xrightarrow[]{} \int_{\SC\UC(2)} d\mu(U)\,  \xrightarrow[]{} \int_{\SC\OC(3)} d\mu(U)\,.
\end{equation}
That is, we choose a parameterization of $\SC\OC$(3), e.g. in terms of Euler angles
\begin{equation}
    U(\a,\b,\gam) = e^{-i\a S_z}e^{-i\b S_y}e^{-i\gam S_z}\,,
\label{Eq_unitary_euler}
\end{equation}
and compute the mean value of a function of the form $F_M(\thv)=\Tr[U(\thv)\rho U(\thv)\ad M]$, with $M$ Hermitian, over the cost landscape as
\begin{equation}
\begin{split}
    \langle F \rangle &= \int_{\thv} dU(\thv) \Tr[U(\thv)\rho U(\thv)\ad M] \\
    &=\int_{\SC\OC (3)} d\mu(U) \Tr[U\rho U\ad M]
    \\&=\frac{1}{8\pi^2} \int_0^{2\pi}d\a\int_0^{\pi} d\b \sin{\b}\int_0^{2\pi} d\gam\, \Tr[ U(\a,\b,\gam) \rho U(\a,\b,\gam)\ad M]\,.
\end{split}
\end{equation}
where $\frac{1}{8\pi^2} d\a\, d\b \sin{\b}\,d\gam$ is the normalized Haar measure for this parametrization of $\SC\OC$(3). Considering an initial state that is an eigenstate of $J_z$, i.e., $\rho = \ket{m} \bra{m}$, its evolution can be conveniently expressed in terms of Wigner's small $d$-matrices
\begin{equation}
    U(\a.\b,\gam) \rho U(\a.\b,\gam)\ad =\sum_{m'm''} e^{-i\a(m'-m'')} d_{m'm}(\b)d_{mm''}(-\b) \ket{m'}\bra{m''}
\end{equation}
and using that $d^j_{rk}(\b)=d^j_{kr}(-\b)$ and the orthonormality relations
\begin{equation}
    \frac{1}{2\pi} \int_0^{2\pi} d \theta e^{-i\theta(m-m')}=\delta_{mm'}\,,\quad\quad
    \frac{2S+1}{2}\int_0^\pi d\b\, \sin{\b}\, d^S_{rk}(\b)d^{S'}_{rk}(\b)=\d_{S,S'}
\end{equation}
we arrive at

\begin{equation}
    \langle F \rangle = \frac{\Tr[M]}{2S+1}\,.
\end{equation}
\subsubsection{An unbiased gradient}

Now, the computation of the mean value of $\partial_{\mu} C(\thv)$ amounts to making the choice $M=X$, with $X=[H_\mu,U_A\ad O U_A]$. We readily find that
\begin{equation}
    \langle \partial_{\mu} C \rangle = \int_{\Ubb_A} dU_A W(U_A)   =0
\end{equation}
since 
\begin{equation}
    W(U_A)=\int_{\Ubb_B} dU_B \Tr[U_B\rho U_B\ad X(U_A)] = \int_{\SC\OC (3)} d\mu(U) \Tr[U\rho U\ad X] = \frac{\Tr[X(U_A)]}{2S+1} = 0\,,
\end{equation}
which follows from the fact that commutators are traceless.

\subsubsection{The variance}
Let us now  compute the variance of $\partial_{\mu} C(\thv)$. Having shown that $\langle \partial_{\mu} C \rangle=0$, the variance is
\begin{equation}
    \Var[\partial_{\mu} C] = \langle (\partial_{\mu} C)^2  \rangle =- \int_{\Ubb_A} dU_A W(U_A)\,,
\end{equation}
where
\begin{equation}
 W(U_A)=\int_{\Ubb_B} dU_B \Tr[U_B\ts \rho\ts (U_B\ts)\ad X\ts]=  \frac{1}{8\pi^2} \int_0^{2\pi}d\a\int_0^{\pi} d\b \sin{\b}\int_0^{2\pi} d\gam\, Y(\a,\b,\gam)\,,
\end{equation}
and 
\begin{equation}
\begin{split}
    Y(\a,\b,\gam)&=\Tr[ U\ts(\a,\b,\gam) \rho\ts (U\ts(\a,\b,\gam))\ad X]\\
    &=\sum_{m'm''n'n''}  e^{-i\a(m'-m'')}  e^{-i\a(n'-n'')}  d_{m'm}(\b)d_{m''m}(\b) \bra{m''}X\ket{m'} d_{n'n}(\b)d_{n''n}(\b) \bra{n''}X\ket{n'}
\end{split}
\end{equation}
Here and henceforth, $d_{m,m'}\equiv d^S_{m,m'}$. Now, again, integration over $\gam$ is trivial, and integrating over $\a$ we get $\delta_{n'',m'+n'-m''}$, so that
\begin{equation}
    W(U_A)= \sum_{m'm''n'}  \bra{m''}X\ket{m'} \bra{m'+n'-m''}X\ket{n'} 
    \frac{1}{2}\int_0^{\pi}d\b\, \sin\b\, d_{m'm}(\b)\, d_{m'' m}(\b)\,  d_{n'n}(\b)\, d_{m'+n'-m'', n}(\b)\,.
\label{Eq_A}
\end{equation}

In the following we will consider that the cost function minimize the expectation value of an operator $O$ that is an element of the dynamical algebra, i.e.,  $O\in \liea$. It is easy to verify that any unitary $U(\a,\b,\gam)$ of the form of Eq. \eqref{Eq_unitary_euler}, associated with some $3$-dimensional Euler rotation matrix $R(\a,\b,\gam)$,  transforms an arbitrary element $S_{\hat{n}}=(S_x,S_y,S_z)\cdot\hat{n} $ (with $\hat{n}$ a unit vector $\in \Rbb^3$)  of the algebra in the following manner
\begin{equation}
    U(\a,\b,\gam) S_{\hat{n}} U\ad(\a,\b,\gam) = S_{\hat{n}'},\quad \text{with  } \hat{n}'=R(\a,\b,\gam)\hat{n}
\end{equation}
Using the shorthand $\sin(x)\xrightarrow{}s_x$ and $\cos(x)\xrightarrow{}c_x$, the rotation matrix reads
\begin{equation}
R(\a,\b,\gam)=\left[\begin{array}{ccc}
c_{\a} c_{\b} c_{\gam}-s_{\a} s_{\gam} & -c_{\gam} s_{\a}-c_{\a} c_{\b} s_{\gam} & c_{\a} s_{\b} \\
c_{\a} s_{\gam}+c_{\b} c_{\gam} s_{\a} & c_{\a} c_{\gam}-c_{\b} s_{\a} s_{\gam} & s_{\a} s_{\b} \\
-c_{\gam} s_{\b} & s_{\b} s_{\gam} & c_{\b}
\end{array}\right]\,.
\end{equation}
Hence, considering $O=S_x+S_y+S_z$ and, calling $X^{(\iota)}=[J_x,U S_{\nu} U\ad]$, we get

\begin{equation}
\begin{split}
    &X^{(x)}=i s_\b c_\gam S_y +i(c_\a s_\gam +s_\a c_\b c_\gam) S_z \\
    &X^{(y)}=-i s_\b s_\gam S_y +i(c_\a c_\gam -s_\a c_\b s_\gam) S_z \\
    &X^{(z)}=-i c_\b S_y +is_\a s_\b S_z \,.
\end{split}
\end{equation}
Thus, denoting $X=\sum_{\nu=\{x,y,z\}} X^{(\nu)}= (a_x,a_y,a_z)\cdot (S_x,S_y,S_z)^T$ we find
\begin{equation}
    \begin{split}
        &a_x=0\\
        &a_y=-i(s_\b(s_\gam-c_\a)+c_\b)\\
        &a_z=i(s_\a(s_\b +c_\b(c_\gam-s_\gam)) +c_\a(s_\gam+c_\gam))\,.
    \end{split}
\end{equation}

For the moment, let us neglect the explicit dependence of $a_i$ on the Euler angles and let us compute the matrix elements of $X$
\begin{equation}
    \begin{split}
    \bra{m''}X\ket{m'}&=a_+\bra{m''}S_+\ket{m'} +a_-\bra{m''}S_-\ket{m'}+ a_z\bra{m''}S_z\ket{m'}  \\
    &=a_+\l_+(m')\d_{m'',m'+1}+a_-\l_-(m')\d_{m'',m'-1}+a_z m'\delta_{m'',m'}
    \end{split}
\end{equation}
where $a_{\pm}=\frac{a_x\mp ia_y}{2}$, and where we here define $S_{\pm}\ket{m}=\l_{\pm} (m)\ket{m\pm1}$. Similarly, 
\begin{equation}
    \begin{split}
    \bra{m'+n'-m''}X\ket{n'}&=a_+\l_+(n')\d_{m'+n'-m'',n'+1} +a_-\l_-(n')\d_{m'+n'-m'',n'-1} + a_z n'\delta_{m'+n'-m'',n'} \\
    &=a_+\l_+(n')\d_{m'',m'-1} +a_-\l_-(n')\d_{m'',m'+1} + a_z n'\delta_{m'',m'}  \,.
    \end{split}
\label{Eq_n''}
\end{equation}
Now, out of the nine terms in the product $\bra{m''}X\ket{m'}\bra{m'+n'-m''}X\ket{n'}$ appearing in Eq. \eqref{Eq_A}, only three terms are non-zero. That is, one only gets
\begin{equation}
    \begin{split}
        &T_1=a_{+}a_{-}\l_{+}(m')\l_{-}(n')\d_{m'',m'+1}\\
        &T_2=a_{+}a_{-}\l_{-}(m')\l_{+}(n')\d_{m'',m'-1}\\
        &T_3=a_{z}^2 m' n' \d_{m'',m'}\,.
    \end{split}
\end{equation}
which give rise to three terms on $Y$, namely $Y_1$, $Y_2$ and $Y_3$. Consider the first of these
\begin{equation}
    Y_1 = \sum_{m'n'}  a_{+}a_{-}\l_{+}(m')\l_{-}(n')
     \frac{1}{2}\int_0^{\pi}d\b\, \sin\b\, d_{m'm}(\b)\, d_{m'+1, m}(\b)\,  d_{n'n}(\b)\, d_{n'-1, n}(\b)\,,
\end{equation}
where, we recall that the coefficients $a_i=a_i(U_A)$ depend on $U_A$. Now, we can use the identity
\begin{equation}
    d^S_{rk}(\b)d^{S'}_{r'k'}(\b)=\sum_{J=|S-S'|}^{S+S'} \braket{S'rSr'}{J,r+r'}\braket{SkS'k'}{J,k+k'} d_{r+r',k+k'}^{J}(\b)\,.
\end{equation}
where we henceforth employ the more general notation $\ket{m}\rightarrow\ket{S,m}$. In our context, $S=S'$, and we will reference the Clebsh-Gordan coefficients as $g_{r,k}=\braket{SrSk}{J,r+k}$. We get

\begin{equation}
\begin{split}
    Y_1 =& \sum_{m'n'JJ'}  a_{+}a_{-}\l_{+}(m')\l_{-}(n')\frac{1}{2}\int_0^{\pi}d\b\, \sin\b\, d^J_{m'+n',m+n}(\b) d^{J'}_{m'+n',m+n}(\b) \\ &\times \braket{Sm'Sn'}{J,m'+n'}\, \braket{S,m'+1,S,n'-1}{J,m'+n'}\, \braket{SmSn}{J,m+n}\, \braket{SmSn}{J',m+n},
\end{split}
\end{equation}
and using the orthogonality relations yields
\small
\begin{align}
    Y_1 &= \sum_{m'n'J} \frac{ a_{+}a_{-}\l_{+}(m')\l_{-}(n')}{2J+1} \braket{Sm'Sn'}{J,m'+n'}\, \braket{S,m'+1,S,n'-1}{J,m'+n'}  (\braket{SmSn}{J,m+n})^2\nonumber\\
    &=\sum_{m'n'J} \frac{ a_{+}a_{-}\l_{+}(m')\l_{-}(n')}{2J+1} g_{m',n'} g_{m'+1,n'-1}g_{m,n}^2\,.
\end{align}
\normalsize
Similarly
\begin{equation}
\begin{split}
    &Y_2 = \sum_{m'n'J} \frac{a_{+}a_{-}\l_{-}(m')\l_{+}(n')}{2J+1} g_{m',n'} g_{m'-1,n'+1}g_{m,n}^2\,,  \\
    &Y_3 = \sum_{m'n'J} \frac{a_{z}^2 m' n'}{2J+1} g_{m',n'}^2g_{m,n}^2\,.
\end{split}
\end{equation}
These can be restated as $Y_1 = Y_2 = c_1\, a_{+}a_{-}$ and $Y_3 = c_3\, a_z^2$ with

\begin{equation}
    \begin{split}
    &c_1 = \sum_{m'n'J} \frac{\l_{-}(m')\l_{+}(n')}{2J+1} g_{m',n'} g_{m'-1,n'+1}g_{m,n}^2 = \frac{2}{3} m^2 \\
    &c_3 = \sum_{m'n'J} \frac{a_{z}^2 m' n'}{2J+1} g_{m',n'}^2g_{m,n}^2 = \frac{1}{3} m^2
    \end{split}
\end{equation}

We can proceed to integrate over $U_A$, assuming again we can replace $\Ubb_A$ with an integration over the Haar measure on  $\lieg$. Let us call $v_i$ each contribution to the variance, i.e. $\Var_{\thv}[\partial_{\mu} C(\thv)] = \sum_{i=1}^3 v_i$. First, one finds by explicit integration that let us consider 
\begin{equation}
\begin{split}
   v_3&= -c_3 \int_{U_A} dU_A a_z^2(U_A) \\
   &=-c_3 \frac{1}{8\pi^2} \int_0^{2\pi}d\aa\int_0^{\pi} d\bb \sin{\bb}\int_0^{2\pi} d\gg  a_{z}^2(\aa,\bb,\gg)\\
   &=c_3\,.
\end{split}
\end{equation}
 Similarly, it is easy to see that
\begin{equation}
    v_1= v_2= \frac{c_1}{4}\,.
\end{equation}
Altogether, we finally find
\begin{equation}\label{eq:resultsu2}
   \Var_{\thv}[\partial_{\mu} C(\thv)] = \Var_{A,B}[\partial_{\mu} C(\thv)]= \sum_{i=1}^3 v_i =\frac{2}{3} m^2
\end{equation}
where we have introduced $\Var_{A,B}$ to make explicit the assumption made, that is, that the before and after distributions of unitaries form $2$-designs.

\end{proof}

\subsubsection{Numerical simulations of the $\SC\UC$(2) toy model algebra.}
\begin{figure*}[t]
\centering
\includegraphics[width=.4\columnwidth]{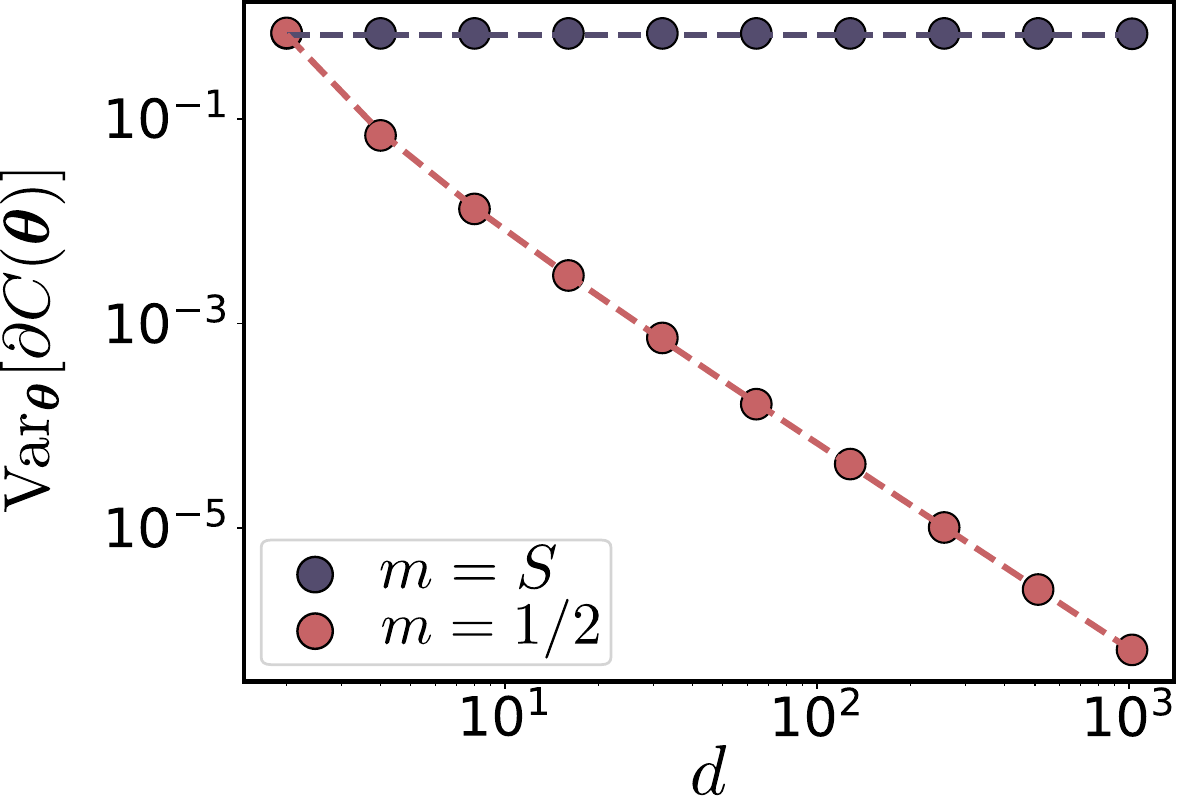}
\caption{\textbf{Numerical results for the $\SC\UC$(2) toy model algebra.}  Variance of the cost function partial derivative of the cost function in~\eqref{eq:costsu2SM} as a function of $d=\dim(\HC)$. Dashed lines indicate the theoretical prediction of~\eqref{eq:resultsu2}.}
\label{fig:su(2)}
\end{figure*}

Here we numerically simulate the model employing a PSA of the form
\begin{equation}
    U(\thv)=\prod_{l=1,\dots,L} e^{-i \theta_{l1} S_y} e^{-i \theta_{l2} S_x}\,,
\end{equation}
for minimizing the normalized cost function
\begin{equation}\label{eq:costsu2SM}
C(\thv) = \langle m|U\ad(\thv)\left( S_x  +  S_y  +  S_z\right)U(\thv)|m \rangle/S\,.
\end{equation}
 
We compute $\Var_{\thv}[\partial_\mu C(\thv)]$ for the initial states $\ket{m}=\ket{S}$ and $\ket{m}=\ket{1/2}$ for $L=100$ using $1200$ random initialization. As shown in Figure~\ref{fig:su(2)}, the  theoretical prediction of Eq.~\eqref{eq:resultsu2} matches the numerical results, indicating that an initial state $\ket{m}=\ket{1/2}$ is trainable, while an initial state $\ket{m}=\ket{S}$ leads to a barren plateau in the cost function.

\section{Symmetries in the $XXZ$ and $\TFIM$ systems}\label{App:XXZ}

\subsection{The $XXZ$ model}
Consider the generators for an HVA ansatz to the $XXZ$ model in Eq. \eqref{eq:genHVAXXZ} of section \ref{section:XXZ}.  The elements in $\GC_{XXZ_U}$, and thus in its associated DLA, share two common symmetries: they commute both with the magnetization operator $M=\sum_{i=1}^n Z_i$ and with the parity operator 

\begin{equation}\label{eq:parity}
    \Pi=\prod_{m=1}^{[n/2]} P_{m,n-m+1},\quad \text{where } P_{ij}=\frac{\id+\vec{\sigma}_i\vec{\sigma}_j}{2}\quad  \text{and } \vec{\sigma}_i = (X_i,Y_i,Z_i).
\end{equation}
 Here, $\Pi$ is the unitary $d$-dimensional irreducible representation of the element of the symmetric group $\SC_n$ that corresponds to a reflection over the central constituent. As a result of these two symmetries, the state space is broken into invariant subspaces with well defined parity and excitation number

\begin{equation}
\H=\bigoplus_{\substack{m=0\\ \sg=\pm}}^n \H_{m,\sg}
\end{equation}
The dimension of each excitation subspace is $\dim(\H_m)=\binom{n}{m} $, whereas, the joint parity-excitation subspaces have dimension $\dim(\H_{m,\sg})\approx \frac{1}{2} \dim(\H_m)$. The DLA is, accordingly, a direct sum of simple algebras on each invariant subspace
\begin{equation}
    \liea_{XXZ_U} = \bigoplus_{\substack{m=0\\ \sg=\pm}}^n \liea_{m,\sigma} \subseteq \mf{u}(d_{m,\sigma})\,.
\end{equation}
Notably, upon the addition of a generator consisting of local fields on either end of the chain, $\GC_{XXZ}=\GC_{XXZ_U}\cup \{\Z_1+\Z_N\}$, the DLA can be shown to become full rank on each subspace $ \liea_{\rm XXZ} = \bigoplus_{\substack{m=0\\ \sg=\pm}}^n \mf{u}(d_{m,\sigma})$ \cite{wang2016subspace}.

\subsection{The TFIM and LTFIM models}
In this section we review the symmetries of the different variants of the Transverse Field Ising Model (TFIM) presented in Section~\ref{section:Ising}.
\subsubsection{Open boundary condition}
Let us first consider open boundary conditions on the TFIM model. In this case, the generators $\GC_{\TFIM}= \left\{\sum_{i=1}^{n-1} Z_iZ_{i+1},\sum_{i=1}^{n} X_i\right\}$ have two symmetries. On one hand, they commute with the parity symmetry $\Pi$ defined in Eq. \eqref{eq:parity}. On the other hand, they commute with the so-called $\Zbb_2$ operator
\begin{equation}
    \Pi_{\Zbb_2} = \prod_{i=1}^n X_i
\end{equation}
that amounts to a global flip of the qubits. Consequently, $\HC$ is broken into four subspaces $\HC_{\sg,\sg'}$ with $\sg,\sg'\in \{1,-1\}$. Because the initial state $\ket{+}^{\otimes n}$ for the cost function in Eq. \eqref{eq:costTFIM} is an eigenstate of both symmetry operators with eigenvalues $\sg=\sg'=+1$, the dynamics under such a PQC is constrained to the $\HC_{+1,+1}$ subspace. We find, using Algorithm \ref{alg:lie}, that the dimension of the DLA scales polynomially, $\dim(\liea)=n^2$ (and so does the restriction to the $\sg=\sg'=+1$ subspace.

In turn, consider open boundary conditions on the \textit{Longitudinal} Transverse Field Ising Model (LTFIM), given by generators $\GC_{\rm LTFIM} = \GC_{\rm TFIM}\bigcup\left\{\sum_{i=1}^{n} Z_i\right\}$. While parity symmetry is conserved, the introduction of this new global longitudinal field breaks the $\Zbb_2$ symmetry. Thus, we are left with only two subspaces, $\HC=\bigoplus_{\sg=\pm 1}\HC_{\sg}$, of dimensions $\dim(\HC_{\sg})\approx \frac{2^n}{2}$. As expected, the DLA breaks into two corresponding subspace DLAs, each of which we find to be full rank on the corresponding subspaces, i.e., both subspaces are controllable.

\subsubsection{Closed boundary conditions}
Let us now consider closed boundary conditions. This case is slightly more involved since the parity symmetry is replaced by $C_n$, the cyclic group of $n$ elements. Hereafter, we follow Ref.~\cite{d2021dynamical}. Consider the operator $R$ whose action is to cycle the qubits in a state, i.e., $R\ket{a_1,\ldots,a_n}=\ket{a_n,a_1,\ldots,a_{n-1}}$. Clearly, $R^n=\id$. Now, as a consequence of the invariance of the generators under the action of this group of symmetries, the state space is broken into $n$ invariant subspaces $\HC=\bigoplus_{k=0}^{n-1} \HC_k$, where the projector onto each subspace is given by
\begin{equation}
    \P_k = \frac{1}{n}\sum_{j=0}^{n-1}\varepsilon^{kj}R^j\,.
\end{equation}
In particular, the state $\ket{+}^{\otimes n} \in \HC_0$ and for that reason we will only focus on this subspace. Let us note that a general formula for the dimension of this subspace is, as far as we know, not known, but it is possible to derive a closed expression in the case of  $n$ prime~\cite{d2021dynamical}
\begin{equation}
    \dim(\HC_0)= 2+ \frac{(2^n-2)}{n}\,,
\end{equation}
which makes evident that the subspace is exponentially large. Furthermore, we compute the DLA reduced to this subspace in the LTFIM case and see an exponential behaviour. In contrast, for the TFIM model we find a DLA that reduced to the $k=0$ subspace and the $\sg'=+1$ (also has $\Zbb_2$ symmetry), and that  has a linear scaling, i.e., $\dim(\liea_{0,+})\in\OC(n)$.

\section{Initial state for the numerical simulations of the $XXZ$ spin chain model}\label{App:numerical}
The initial states $\ket{ \psi_{m,+}}$ in the cost function of Eq.~\eqref{eq:costxxz} are chosen to be eigenstates of the symmetries of the generators, namely $M=\sum_{i=1}^n Z_i$ and $\Pi$ (the reflective spatial symmetry w.r.t. the chain's center) defined in Eq.~\eqref{eq:parity}. That is, the initial state is
\begin{align}
\ket{ \psi_{m,+}} = \frac{1}{\sqrt{2}} \left(\bigotimes_{i=1}^m |0\rangle_i \bigotimes_{i=m+1}^n |1\rangle_i +\bigotimes_{i=1}^{n-m} |1\rangle_i \bigotimes_{i=n-m+1}^n |0\rangle_i \right).
\end{align}
Here $m$ denotes the number of excitations in the state, and $+$ the fact that it is an even parity eigenstate.

\end{document}